\documentclass[runningheads]{llncs}

\usepackage[T1]{fontenc}

\usepackage{amsmath,amssymb}
\usepackage{mathtools}
\usepackage{hyperref}
\usepackage{cleveref}
\crefname{figure}{Figure}{Figures}
\crefname{algorithm}{Algorithm}{Algorithms}
\crefname{appendix}{Appendix}{Appendix}
\crefname{lemma}{Lemma}{Lemma}
\usepackage{siunitx}
\usepackage{braket}
\usepackage{bcprules}
\usepackage{tikz}
\usetikzlibrary{calc,quantikz,automata,positioning,shapes.callouts}

\usepackage{wrapfig}

%
\usepackage{graphicx}
\usepackage{subcaption}
%
\usepackage{xcolor}

\urlstyle{rm}

\crefname{section}{Section}{Section}
\crefname{lemma}{Lemma}{Lemma}

\NewDocumentCommand\pfcase{m}{
  \smallbreak
  \noindent\textbf{Case:\qquad #1}\\[\smallskipamount] \ignorespaces
}
\NewDocumentCommand\pfcaselabel{m}{
  \smallbreak
  \noindent\textbf{Case #1:}
}

\usepackage{listings}
\lstdefinelanguage{Qtile}{
  keywords=[0]{if,then,else,mkref,let,in,return},
  morecomment=[l]{//},
  morecomment=[s]{/*}{*/}
}
\definecolor{comment-green}{rgb}{0,0.6,0}
\lstset{
  basicstyle=\ttfamily,
  language=Qtile,
  columns=flexible,
  keywordstyle=[0]{\normalfont\bfseries},
  commentstyle=\color{comment-green},
  mathescape,
}

\usepackage{algorithm}
\usepackage[noend]{algpseudocode}

\newif\ifdraftComments
\draftCommentstrue
\def\mkDraftFn#1#2{%
  \expandafter\def\csname #1\endcsname##1{\ifdraftComments\textcolor{#2}{[#1: ##1]}\marginpar[$\longrightarrow$]{$\longleftarrow$}\fi}%
}
\mkDraftFn{RW}{red}

\NewDocumentCommand\etal{}{\textit{et al}.\ }
\NewDocumentCommand\langname{}{\mathcal{Q}_{LS}}

\NewDocumentCommand\GateCX{}{\mathit{CX}}
\NewDocumentCommand\doubleplus{}{\mathbin{+\mkern-10mu+}}

\DeclareMathOperator{\dom}{dom}
\DeclareMathOperator{\cod}{cod}
\DeclareMathOperator{\Tr}{Tr}
\DeclareMathOperator{\Used}{Used}

\NewDocumentCommand\evalarrow{o}{\ensuremath{
  \IfValueT{#1}{\rightarrow_{#1}}
  {\rightarrow_{D, G}} 
}}

\NewDocumentCommand\rstate{mmm}{\ensuremath{\left[ #1, #2, #3 \right]}}
\NewDocumentCommand\lblketbra{mmm}{\ensuremath{\ket{#1}_{#3}\bra{#2}}}

\NewDocumentCommand\logiQRaw{mmmmmm}{
  \fill[#5] (#1, #2) rectangle node[black]{#6} ++ (#3, #4);
  \draw[very thick] (#1, #2)--(#1+#3, #2)--(#1+#3, #2+#4)--(#1,#2+#4)--cycle;
}
\NewDocumentCommand\logiQSq{mmmmm}{
  \logiQRaw{#1}{#2}{#3}{#3}{#4}{#5}
}


\newcommand{\langnt}[1]{\mathit{#1}}
\newcommand{\langmv}[1]{\mathit{#1}}
\newcommand{\langkw}[1]{\mathbf{#1}}
\newcommand{\langsym}[1]{#1}


\newif\ifarxiv
\arxivtrue
\NewDocumentCommand{\forarxiv}{m}{\ifarxiv #1 \fi}
\NewDocumentCommand{\forconf}{m}{\ifarxiv\else #1 \fi}

\begin{document}


\title{Type-Based Verification of Connectivity Constraints in Lattice Surgery}

\author{Ryo Wakizaka\inst{1}\orcidID{0000-0001-8762-9335} \and
Yasunari Suzuki\inst{2} \and
Atsushi Igarashi\inst{1}\orcidID{0000-0002-5143-9764}}

\authorrunning{R. Wakizaka et al.}

\institute{
  Graduate School of Informatics, Kyoto University, Japan,\\ \email{wakizaka@fos.kuis.kyoto-u.ac.jp} \\
  \email{igarashi@kuis.kyoto-u.ac.jp}
  \and
  NTT Computer and Data Science Laboratories, Musashino 180-8585, Japan \email{yasunari.suzuki@ntt.com}
}

\maketitle

\begin{abstract}
Fault-tolerant quantum computation using lattice surgery can be abstracted as operations on graphs, wherein each logical qubit corresponds to a vertex of the graph, and multi-qubit measurements are accomplished by connecting the vertices with paths between them. Operations attempting to connect vertices without a valid path will result in abnormal termination. As the permissible paths may evolve during execution, it is necessary to statically verify that the execution of a quantum program can be completed.

This paper introduces a type-based method to statically verify that well-typed programs can be executed without encountering halts induced by surgery operations. Alongside, we present $\langname$, a first-order quantum programming language to formalize the execution model of surgery operations. Furthermore, we provide a type checking algorithm by reducing the type checking problem to the offline dynamic connectivity problem.

\keywords{fault-tolerant quantum computation \and lattice surgery \and program verification \and type systems}
\end{abstract}

\section{Introduction}\label{sec:intro}

Fault-tolerant quantum computation is a method that enables the large-scale quantum computation required for applications, for example, in quantum chemistry~\cite{babbush2018EncodingElectronic,lee2021EvenMore} and cryptanalysis~\cite{shor1997PolynomialTimeAlgorithms,gidney2021HowFactor,beverland2022AssessingRequirements}, by addressing quantum errors that occur on quantum hardware. In recent years, small-scale fault-tolerant quantum computation has begun to be realized on actual hardware~\cite{acharya2023SuppressingQuantum,krinner2022RealizingRepeated,erhard2021EntanglingLogical,bluvstein2024LogicalQuantum,dasilva2024DemonstrationLogical}, leading to an urgent need to develop quantum software such as quantum compilers to generate quantum programs executable on fault-tolerant quantum computers.

The primary technology to realize fault-tolerant quantum computation is quantum error-correcting codes, which build a logical qubit from multiple physical qubits. However, since physical qubits can only interact (i.e., perform multi-qubit operations) with neighboring qubits, error-correcting codes can be implemented in a way that satisfies such locality conditions. Numerous quantum error-correcting codes have been proposed, with topological codes~\cite{kitaev2003FaulttolerantQuantum} such as surface codes~\cite{fowler2012SurfaceCodes} and color codes~\cite{bombin2006TopologicalQuantum} standing out as promising candidates because they exhibit robust performance and are relatively straightforward to implement in quantum computers with locally connected physical qubits.

To achieve fault-tolerant quantum computation, it is imperative not only to construct logical qubits but also to execute logical operations on them. Several methods~\cite{fowler2012SurfaceCodes,horsman2012SurfaceCode} have been proposed to realize logical operations on topological codes, and lattice surgery~\cite{horsman2012SurfaceCode} in particular has attracted attention as a method that can efficiently implement multi-qubit logical operations. Roughly speaking, lattice surgery involves allocating a logical qubit on a vertex of a graph determined by a target architecture, with each logical operation depicted as an operation on the graph. It is established that lattice surgery with several simple operations can execute a universal gate set, thereby enabling universal fault-tolerant quantum computation.

An essential operation in lattice surgery is the \emph{merge} operation, facilitating logical operations on two or more logical qubits. The merge operation establishes a connection along a path between the target qubits positioned on the vertices. The point here is that no other logical qubit must be assigned to any vertex (except endpoints) included in the merge path. If no path meeting this condition exists during a merge operation, the execution of a quantum program will halt. Therefore, the compiler must schedule instructions to prevent situations where no merge paths exist during execution.

Quantum compilers typically involve multiple optimization passes, and the improper combination of these passes may lead to compiled programs that no longer adhere to connectivity constraints. Therefore, it is essential to statically verify that the compilation result indeed satisfies these constraints. Additionally, such verification tools should be capable of addressing quantum programs with high-level features, such as function calls and branches, as fault-tolerant quantum computation programs often tend to be large, making circuit-based methods impractical for scaling. However, to the best of our knowledge, there is currently no formal verification framework specifically tailored for lattice surgery.

To tackle this challenge, we present a type-based verification approach to ensuring the satisfaction of connectivity constraints between logical qubits in lattice surgery. Our contributions primarily encompass the following components: (1) the introduction of $\langname$, a first-order quantum programming language whose operational semantics reflects graph operations in lattice surgery, (2) a type system ensuring that well-typed $\langname$ programs inherently adhere to the connectivity constraints during execution, and (3) a type checking algorithm grounded in the offline dynamic connectivity problem. Although the details are not covered in this paper, we have also implemented our approach in Rust and applied it to several examples.\footnote{The source code is available at \url{https://github.com/SoftwareFoundationGroupAtKyotoU/tysurgery}.}

This paper is organized as follows: \cref{sec:background} explains the background of this work, including the basics of quantum computing and fault-tolerant quantum computation by lattice surgery. \cref{sec:motivation} gives a motivating example. \cref{sec:qtile} introduces $\langname$ to describe lattice surgery's operations and semantics. \cref{sec:typing} formalizes a type system for the verification of connectivity constraints. \cref{sec:extensions} discusses how to extend our language. \cref{sec:related-work} provides the related work, and finally, \cref{sec:conclusion} concludes this paper with future directions. \forconf{For full definitions and proofs, readers are referred to a full version available at \url{https://www.fos.kuis.kyoto-u.ac.jp/~wakizaka/aplas24-tysurgery.pdf}.}

\section{Background}\label{sec:background}

\subsection{The Basics of Quantum Computation}

In quantum computation, a \emph{qubit} is the unit of information.
A \emph{quantum state} of a qubit is a normalized vector of the 2-dimensional Hilbert space $\mathcal{H} \cong \mathbb{C}^2$.
Each state is represented by $\alpha\ket{0} + \beta\ket{1}$, where $\alpha, \beta \in \mathbb{C}$ satisfying $|\alpha|^2 + |\beta|^2 = 1$ and $\{\ket{0}, \ket{1}\}$ denotes the standard basis vectors of $\mathbb{C}^2$.
Here, we use \emph{Dirac notation}, which encloses integers or variables with $|$ and $\rangle$, to denote quantum states.
The state of $n$ qubits is a normalized vector of the tensor product $\bigotimes_{i=1}^n \mathbb{C}^2 \cong \mathbb{C}^{2^n}$.
For example, if $\ket{\psi} = \ket{0}$ and $\ket{\phi} = \frac{1}{\sqrt{2}}(\ket{0} + \ket{1})$,
then $\ket{\psi} \otimes \ket{\phi} = \ket{\psi}\ket{\phi} = \frac{1}{\sqrt{2}}(\ket{00} + \ket{01})$.
We call a quantum state $\ket{\psi} \in \mathbb{C}^{2^n}$ a \emph{pure state}.
On the other hand, when we have one of the quantum states $\{\ket{\psi_i}\}$ generated randomly with probabilities $p_i$ and do not know which state was generated, we call the state $\{(p_i, \ket{\psi_i})\}$ \emph{mixed state}.
A mixed state can be described by a \emph{density operator} $\rho = \sum_i p_i \ket{\psi_i}\bra{\psi_i}$, where $\bra{\psi_i}$ is the adjoint of $\ket{\psi_i}$ and $\sum_i p_i = 1$.
We often use density operators to describe quantum states because they can uniformly express both pure and mixed states.
We write $\mathcal{S}(\mathcal{H})$ for the set of density operators.

A quantum state can be manipulated by unitary operators called \emph{quantum gates}.
For example, the Hadamard gate and the $\GateCX$ (controlled $X$) gate are defined by $H\ket{x} = \frac{1}{\sqrt{2}}(\ket{0} + (-1)^x\ket{1})$, $\GateCX \ket{x}\ket{y} = \ket{x}\ket{x \oplus y}$, where $x, y \in \{0, 1\}$ and $\oplus$ denotes the Boolean XOR operation. The \emph{Pauli operators} defined by $\mathcal{P}_n = \{I, X, Y, Z\}^{\otimes n}$ for a $n$ qubits system are also important quantum operations. The operator on density operators corresponding to $U$ is described as a super operator $U[\cdot]U^\dagger : \mathcal{S}(\mathcal{H}) \rightarrow \mathcal{S}(\mathcal{H})$, where $U^\dagger$ is the adjoint of $U$.

An instruction set architecture generally provides a \emph{universal} gate set, a subset of quantum gates that realizes (approximate) universal quantum computation. For example, $\{H, T, \GateCX\}$ is a well-known universal quantum gate set.

To get the result of quantum computation, we have to perform \emph{quantum measurements} which consist of measurement operators $M_1, M_2, \dots, M_n$ acting on the state space and satisfying $\sum_i M_i^\dagger M_i = I$. When performing measurements to a quantum state $\rho$, we get an outcome corresponding to one of $M_i$ with probability $p_i = \Tr(M_i^\dagger M_i\rho)$, and then the quantum state is changed to $\frac{M_i\rho M_i^\dagger}{p_i}$. The measurements defined by $\mathcal{M}_P = \{(I+(-1)^sP)/2\}_{s=0,1}$ for a Pauli operator $P \in \mathcal{P}_n$ is called Pauli measurements, which plays an important role in fault-tolerant quantum computation.

\subsection{Fault-Tolerant Quantum Computation with Lattice Surgery}\label{sec:lattice-surgery}

This section explains fault-tolerant quantum computation employing lattice surgery. We utilize surface codes as an illustrative example, although analogous principles apply to other topological codes. We note that the error correction procedure is omitted in this paper as it is unnecessary for comprehending this study.

Quantum error-correcting codes serve to protect quantum data against quantum noises by constructing a logical qubit from noisy physical qubits. However, for these codes to be implemented on real quantum devices, they must satisfy various architectural constraints, particularly connectivity constraints. These constraints dictate that two-qubit gates, such as the $\GateCX$ gate, can only be applied to pairs of directly connected qubits. Therefore, the error correction process itself must also satisfy these locality conditions. For example, the left-hand side of \cref{fig:quantum-chip} represents an architecture in which physical qubits (white circles) are arranged in two dimensions, with only nearest-neighbor interactions allowed, and such configuration is standard in quantum computers~\cite{beverland2022AssessingRequirements,lee2021EvenMore,chamberland2021UniversalQuantum}.

The surface code~\cite{fowler2012SurfaceCodes} is a topological code that satisfies locality conditions and holds promise for future implementation. A logical qubit encoded by surface codes is represented by a rectangle of physical qubits, as depicted on the right-hand side of \cref{fig:quantum-chip}. In \cref{fig:quantum-chip}, there are two logical qubits, $q_0$ and $q_1$, where filled circles represent physical qubits used to construct a logical qubit. In this way, multiple logical qubits can be created using some of the regions of the physical qubits on a quantum computer. Logical qubits can be arranged freely as long as they do not overlap each other's proprietary areas, but in practice, it is customary to arrange them so that their relative positions are simple to facilitate hardware control and compiler optimization. Additionally, physical qubits represented by unfilled circles can be used as auxiliary qubits to implement logical operations on logical qubits, as explained later.

\begin{figure}[tb]
  \centering
  \includegraphics[width=12.0cm]{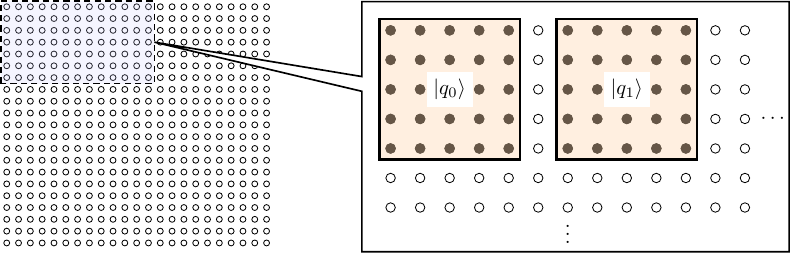}
  \caption{(Left) A 2D layout of physical qubits on a quantum computer. (Right) Two logical qubits constructed with the surface code.}
  \label{fig:quantum-chip}
\end{figure}

To achieve fault-tolerant quantum computation, logical operations on logical qubits must also be accomplished. In this paper, we briefly describe logical operations that can be performed with low latency and, when combined, enable universal quantum computation. These logical operations will later serve as the instruction set of $\langname$. For more details, readers are referred to literatures~\cite{fowler2019LowOverhead,litinski2019GameSurface}.

\paragraph{Qubit allocation/deallocation.}
The qubit initialization to the logical $\ket{0}$ state can be achieved by occupying an unused area of the physical qubits and initializing the state. Initialization can also be done to a state $\ket{m} = (\ket{0} + e^{i\pi/4}\ket{1})/\sqrt{2}$, known as the magic state, which is necessary for implementing the logical $T$ gate required for universal quantum computation\footnote{This process is called magic state injection and does not actually initialize the qubit in the exact $\ket{m}$ state. To generate $\ket{m}$ states with negligible approximation errors, it is necessary to use a protocol called magic state distillation~\cite{knill2004FaultTolerantPostselected,bravyi2005UniversalQuantum}.}.  In contrast, qubit deallocation is achieved by making the allocated area unused after performing the single qubit Pauli measurement described below.

\paragraph{Single qubit operations.}
The Pauli gates $X$, $Z$, the Hadamard gate $H$ and the phase gate $S$ are logical operations that can be easily realized on topological codes. Additionally, the logical Pauli measurements $M_Z$ and $M_X$ for a single logical qubit can be performed.

\paragraph{Multi-qubit operations.}
Multi-qubit operations can be realized by lattice surgery in a manner that satisfies the locality condition of physical qubits. Specifically, in lattice surgery, Pauli measurements $\mathcal{M}_P$ are achieved through operations known as \emph{merge} and \emph{split} operations. As shown in \cref{fig:merge-split}, the merge operation connects the target logical qubits using auxiliary physical qubits in between, while the split operation disconnects them via an appropriate physical measurement operation. It is important to note that sufficient free space between the target qubits is required to perform the merge operation. Such space is released by the split operation immediately after the merge operation. Multi-qubit Pauli measurements can be used, for example, to implement the logical $\GateCX$ gate (\cref{fig:impl-cx}) and the logical $T$ gate, which enable universal quantum computation.

\begin{remark}
Strictly speaking, which boundaries of the logical qubits (e.g., the four sides of each rectangle in \cref{fig:merge-split}) can be used for a merge operation depends on the basis of measurement. For simplicity, we will ignore this constraint in this paper. However, it is straightforward to extend our proposed method to account for this constraint.
\end{remark}

\begin{figure}[tb]
  \centering
  \includegraphics[scale=0.9]{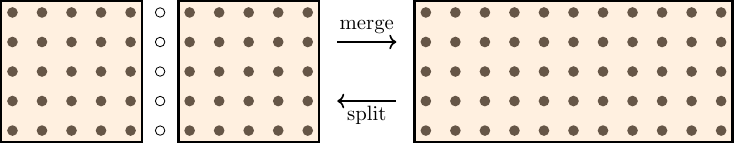}
  \caption{The merge and split operation for the surface code.}
  \label{fig:merge-split}
\end{figure}

\begin{figure}[tb]
  \centering
  \begin{minipage}{0.47\columnwidth}
    \centering
    \includegraphics[height=3.0cm]{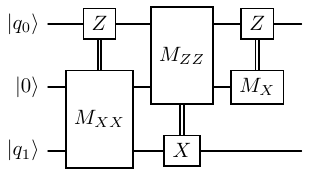}
    \caption{The implementation of the $\GateCX$ gate.}
    \label{fig:impl-cx}
  \end{minipage}
  \hspace{0.04\columnwidth}
  \begin{minipage}{0.47\columnwidth}
    \centering
    \includegraphics[height=3.0cm]{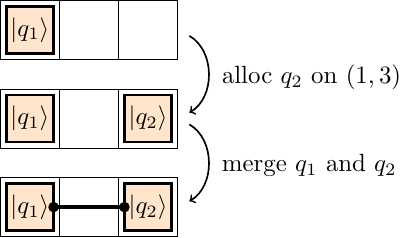}
    \caption{The abstracted surgery operations.}
    \label{fig:simplified-surgery}
  \end{minipage}
\end{figure}

To summarize this section, we provide an example of how surgery operations on surface codes proceed in \cref{fig:simplified-surgery}. In the case of surface codes, the physical qubits are divided into compartments on the grid, and logical qubits are allocated so that they just fit into their respective cells. This abstracts the merge operation to the process of connecting target cells on the graph using free paths between them. Note that we can merge two adjacent cells because there is a thin gap between them as shown in \cref{fig:quantum-chip}. The $\langname$ language is formalized using this abstracted execution model. From now on, abstracted graphs (e.g., grid graphs) will be referred to as architecture graphs.

\section{Motivating Example}\label{sec:motivation}

We illustrate instances where a quantum program utilizing surgery operations gets stuck by presenting examples depicted in \cref{fig:deadlock-surgery}. In these scenarios, quantum programs attempt to manipulate qubits, represented as squares on a 2D grid graph. For example, in the first scenario (\cref{fig:example-merge-ok}), the program tries to merge $q_1$ and $q_4$. This operation succeeds as an accessible path between $q_1$ and $q_4$ exists, as indicated by the black line.

Conversely, in the remaining examples, their merge operations fail. The second example (\cref{fig:example-merge-ng1}) attempts to merge $q_1$ and $q_4$ but fails because qubit $q_2$ interrupts their connectivity, unlike the first scenario. The third example (\cref{fig:example-merge-ng2}) is more intricate than the others. In \cref{fig:example-merge-ng2}, $\GateCX(q_1, q_4)$ and $\GateCX(q_2,q_3)$ are depicted midway through execution. In reality, a situation akin to \cref{fig:example-merge-ng2} unfolds through the following steps: (1) we express $\GateCX(q_1,q_4)$ and $\GateCX(q_2,q_3)$ in a source language in this sequence, (2) a compiler transpiles them into surgery operations, decomposing the $\GateCX$ gate (\cref{fig:impl-cx}), (3) and a transpiler pass erroneously reorders the instruction allocating an ancilla $\ket{0}$ for $\GateCX(q_2, q_3)$ before completing $\GateCX(q_1, q_4)$. Ultimately, in both examples, the programs fail to execute the merge operation, leading to abnormal termination.

\begin{figure}[tb]
  \begin{tabular}{ccc}
    \begin{minipage}{0.33\hsize}
      \centering
      \begin{tikzpicture}
        \draw (0, 0) grid (3, 2);
        \logiQSq{0.1}{1.1}{0.8}{orange!20}{$\ket{q_1}$}
        \logiQSq{0.1}{0.1}{0.8}{orange!20}{$\ket{q_2}$}
        \logiQSq{2.1}{1.1}{0.8}{orange!20}{$\ket{q_3}$}
        \logiQSq{2.1}{0.1}{0.8}{orange!20}{$\ket{q_4}$}
        \draw[ultra thick] (0.9,1.5)--(1.5,1.5)--(1.5,0.5)--(2.1,0.5);
        \fill (0.9,1.5) circle [radius=0.09];
        \fill (2.1,0.5) circle [radius=0.09];
      \end{tikzpicture}
      \subcaption{}
      \label{fig:example-merge-ok}
    \end{minipage}
    \begin{minipage}{0.33\hsize}
      \centering
      \begin{tikzpicture}
        \draw (0, 0) grid (3, 2);
        \logiQSq{0.1}{1.1}{0.8}{orange!20}{$\ket{q_1}$}
        \logiQSq{2.1}{1.1}{0.8}{orange!20}{$\ket{q_3}$}
        \logiQSq{2.1}{0.1}{0.8}{orange!20}{$\ket{q_4}$}
        \logiQSq{1.1}{0.1}{0.8}{orange!20}{$\ket{q_2}$}
        \draw[ultra thick, dotted] (0.9,1.5)--(1.5,1.5)--(1.5,0.5)--(2.1,0.5);
        \fill (0.9,1.5) circle [radius=0.09];
        \fill (2.1,0.5) circle [radius=0.09];
      \end{tikzpicture}
      \subcaption{}
      \label{fig:example-merge-ng1}
    \end{minipage}
    \begin{minipage}{0.33\hsize}
      \centering
      \begin{tikzpicture}
        \draw (0, 0) grid (3, 2);
        \logiQSq{0.1}{1.1}{0.8}{orange!20}{$\ket{q_1}$}
        \logiQSq{0.1}{0.1}{0.8}{orange!20}{$\ket{q_2}$}
        \logiQSq{2.1}{1.1}{0.8}{orange!20}{$\ket{q_3}$}
        \logiQSq{2.1}{0.1}{0.8}{orange!20}{$\ket{q_4}$}
        \logiQSq{1.1}{1.1}{0.8}{orange!20}{$\ket{0}$}
        \logiQSq{1.1}{0.1}{0.8}{orange!20}{$\ket{0}$}

        \draw[ultra thick, dotted] (1.76, 0.9) -- (1.76, 1.5) -- (2.1, 1.5);
        \fill (1.76, 0.9) circle [radius=0.09];
        \fill (2.1, 1.5) circle [radius=0.09];
        \draw[ultra thick, dotted] (1.24, 1.1) -- (1.24, 0.26) -- (2.1, 0.26);
        \fill (1.24, 1.1) circle [radius=0.09];
        \fill (2.1, 0.26) circle [radius=0.09];
      \end{tikzpicture}
      \subcaption{}
      \label{fig:example-merge-ng2}
    \end{minipage}
  \end{tabular}
  \caption{Making attempts to merge qubits in several situations.}
  \label{fig:deadlock-surgery}
\end{figure}
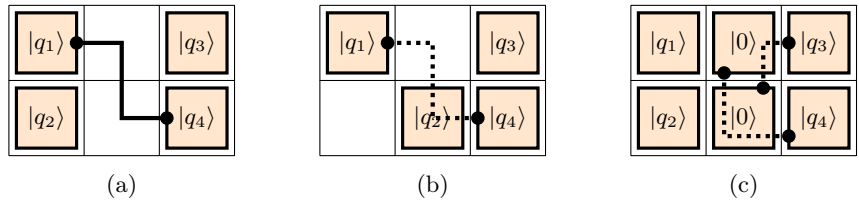

As discussed above, the execution of a quantum program utilizing lattice surgery may halt for various reasons, notably inappropriate compiler strategies in qubit allocation and merge path scheduling. Regrettably, such issues are likely to persist because the availability of nodes for logical qubits and merge paths will remain limited even in the future. Hence, we advocate for the necessity of a framework to statically verify whether a given program can complete its execution without getting stuck at any point.

\section{$\langname$ : Quantum PL for Lattice Surgery}\label{sec:qtile}

Before presenting our verification methodology, we introduce $\langname$, an imperative quantum programming language encompassing lattice surgery operations, mutable references, and first-order functions. $\langname$ is primarily designed to serve as a target language for quantum compilers.

\subsection{Syntax}\label{sec:language}

\begin{figure}[tb]
  \begin{align*}
    \text{Location variables} & & l \in &\ \mathit{Locations} \\
    \text{Function declarations} & & d \coloneqq &\  \langmv{f}  \mapsto [   \overline{ \langmv{l} }   ]( \langmv{x_{{\mathrm{1}}}}  \langsym{,} \, .. \, \langsym{,}  \langmv{x_{\langmv{n}}} ) \langnt{e}  \\
    \text{Expressions} & & \langnt{e} \coloneqq &\ \langmv{x} \mid \langkw{let} \, \langmv{x}  \langsym{=}  \langkw{init} \, \langsym{(}  \langmv{l}  \langsym{)} \, \langkw{in} \, \langnt{e} \mid \langkw{let} \, \langmv{x}  \langsym{=}  \langkw{minit} \, \langsym{(}  \langmv{l}  \langsym{)} \, \langkw{in} \, \langnt{e} \mid \\
                       & & &\ \langkw{free} \, \langmv{x}  \langsym{;}  \langnt{e} \mid \langkw{let} \, \langmv{y}  \langsym{=}   M_{ B_{{\mathrm{1}}}  \langsym{,} \, .. \, \langsym{,}  B_{\langmv{n}} }   \langsym{(}  \langmv{x_{{\mathrm{1}}}}  \langsym{,} \, .. \, \langsym{,}  \langmv{x_{\langmv{n}}}  \langsym{)} \, \langkw{in} \, \langnt{e} \mid \langmv{U}  \langsym{(}  \langmv{x}  \langsym{)} \mid \\
                         & & &\ \langkw{let} \, \langmv{x}  \langsym{=}  \langkw{mkref} \, \langnt{e} \, \langkw{in} \, \langnt{e} \mid \langsym{*}  \langnt{e} \mid \langmv{x}  \langsym{:=}  \langnt{e} \mid \langnt{e_{{\mathrm{1}}}}  \langsym{;}  \langnt{e_{{\mathrm{2}}}} \mid \\
      & & &\ \langkw{if} \, \langnt{e_{{\mathrm{1}}}} \, \langkw{then} \, \langnt{e_{{\mathrm{2}}}} \, \langkw{else} \, \langnt{e_{{\mathrm{3}}}} \mid \langkw{while} \, \langnt{e_{{\mathrm{1}}}} \, \langkw{do} \, \langnt{e_{{\mathrm{2}}}} \mid \langmv{f}  \langsym{[}   \overline{ \langmv{l} }   \langsym{]}  \langsym{(}  \langmv{x_{{\mathrm{1}}}}  \langsym{,} \, .. \, \langsym{,}  \langmv{x_{\langmv{n}}}  \langsym{)} \\
    \text{Values} & & v \coloneqq &\ \langmv{x} \mid \langmv{l} \mid \langkw{unit} \mid  \texttt{true}  \mid  \texttt{false}  \\
    \text{Program} & & P \coloneqq &\  \braket{  \langsym{\{}  \langmv{d_{{\mathrm{1}}}}  \langsym{,} \, .. \, \langsym{,}  \langmv{d_{\langmv{n}}}  \langsym{\}} ,  \langnt{e}  }  \\
    \text{Measurement basis} & & B \coloneqq &\  X  \mid  Z  \\
    \text{Gates} & & \langmv{U} \coloneqq &\  X  \mid  Z  \mid \langmv{H} \mid  S 
  \end{align*}
  \caption{The grammar of $\langname$.}
  \label{fig:syntax}
\end{figure}

The grammar of $\langname$ is given in \cref{fig:syntax}.
We assume that an architecture configuration is represented as a graph $G = (V, E)$, where $V \subseteq \mathit{Locations}$, and a location variable, denoted by $\langmv{l}$, signifies a location where qubits can be allocated. We use $\langmv{l} \in \mathit{Locations} \setminus V$ to denote a function parameter that is substituted with a specific location upon function invocation. We write $ \overline{ \langmv{l} } $ for a sequence of location variables $\langmv{l_{{\mathrm{1}}}}  \langsym{,} \, .. \, \langsym{,}  \langmv{l_{\langmv{n}}}$.
We also use $\langmv{L}$ to denote a set of location variables.


There are four primitives for quantum operations in $\langname$: qubit allocation, deallocation, unitary operations, and quantum measurements. Qubit allocating operations $\langkw{init} \, \langsym{(}  \langmv{l}  \langsym{)}$ and $\langkw{minit} \, \langsym{(}  \langmv{l}  \langsym{)}$ initialize a location $\langmv{l}$ with a logical state $\ket{0}$ and a magic state $\ket{m}$, respectively. In $\langname$, all locations of qubits in a program are statically determined at compile time. A unitary operation $\langmv{U}  \langsym{(}  \langmv{x}  \langsym{)}$ applies a unitary gate $U$ to $\langmv{x}$, where $U$ must be one of $H, X, Z$, and $S$ gate. Note that the $T$ gate for universal quantum computation can be realized through a combination of if expressions, basic gates, and measurements with a magic state.

Our language supports quantum measurements with one or two qubits. We can measure qubits with different bases per each qubit; for example, $ M_{  X   \langsym{,}   Z  }   \langsym{(}  \langmv{x_{{\mathrm{1}}}}  \langsym{,}  \langmv{x_{{\mathrm{2}}}}  \langsym{)}$ measures $\langmv{x_{{\mathrm{1}}}}  \langsym{,}  \langmv{x_{{\mathrm{2}}}}$ with $X, Z$ bases, respectively. Although a two-qubit measurement is implemented through merge and split operations, as explained in \cref{sec:lattice-surgery}, our language does not explicitly specify a path for a merge operation between the target qubits. Instead, we leave the pathfinding problem to the runtime. It is straightforward to extend the language with measurements specifying paths.

A program is represented as a pair $ \braket{  D ,  \langnt{e}  } $, where $D = \{\langmv{d_{{\mathrm{1}}}}  \langsym{,} \, ... \, \langsym{,}  \langmv{d_{\langmv{n}}}\}$ constitutes a set of first-order and non-recursive function definitions, and $\langnt{e}$ denotes the program entry point. A function declaration $\langmv{d} =  \langmv{f}  \mapsto [   \overline{ \langmv{l} }   ]( \langmv{x_{{\mathrm{1}}}}  \langsym{,} \, .. \, \langsym{,}  \langmv{x_{\langmv{n}}} ) \langnt{e} $ maps a function name $f$ to a tuple of location variables $ \overline{ \langmv{l} } $ and argument names $\langmv{x_{{\mathrm{1}}}}  \langsym{,} \, ... \, \langsym{,}  \langmv{x_{\langmv{n}}}$ bound within the function body $\langnt{e}$. The parameters $ \overline{ \langmv{l} } $ in a function declaration allow us to call a function in various scenarios where the topology of qubits used within its body varies.

\begin{wrapfigure}[14]{l}{0.48\textwidth}
  \vspace{-24pt}
  \begin{lstlisting}[numbers=left,numbersep=3pt,numberstyle=\tiny\color{black},numberblanklines=false]
[l0,l1,l2]
cx(q0:qbit(l0), q1:qbit(l1)) {
  let aux = init(l2) in$\label{line:internal-init}$
  let a = meas[X,X](aux, q1) in
  (if a then Z(q0));
  let b = meas[Z,Z](q0, aux) in
  (if b then X(q1));
  let c = meas[X](aux) in
  (if c then Z(q0));
  free(aux)$\label{line:free-aux}$
}
  \end{lstlisting}
  \caption{An implementation of $\GateCX$}
  \label{fig:example-of-cx}
\end{wrapfigure}
For example, an implementation of the $\GateCX$ gate in $\langname$ is provided in \cref{fig:example-of-cx}. The function {\tt cx} features location parameters $\langmv{l_{{\mathrm{0}}}}  \langsym{,}  \langmv{l_{{\mathrm{1}}}}  \langsym{,}  \langmv{l_{{\mathrm{2}}}}$ and accepts two qubits as arguments. The location variable $\langmv{l_{{\mathrm{2}}}}$ is employed for internal qubit allocation (see line \ref{line:internal-init}). The ancilla qubit {\tt aux} is deallocated at the end of {\tt cx} (see line \ref{line:free-aux}), allowing for the reuse of $\langmv{l_{{\mathrm{2}}}}$ after the function call. Notably, we use an if-expression without an else clause, which is a syntax sugar representing an if-expression where the else clause is a unit expression.

\subsection{Semantics}\label{sec:semantics}

We design operational semantics of our language to model the execution of quantum programs with lattice surgery.
The complete definition is given in \cref{fig:semantics}.
We define the operational semantics as a transition relation on \emph{runtime states} denoted by a triple \rstate{ H }{\rho}{\langnt{e}};
$H$ is a partial function from $\mathit{Variables}$ to $\mathit{Values}$, $\rho$ is a quantum state, and $\langnt{e}$ is a reducing expression.
Then we write the transition relation as $\evalarrow$, where $D$ contains function definitions and $G$ is an architecture graph.
From here, we explain briefly each transition rule.

A qubit allocation $\langkw{let} \, \langmv{x}  \langsym{=}  \langkw{init} \, \langsym{(}  \langmv{l}  \langsym{)} \, \langkw{in} \, \langnt{e}$ allocates a qubit at $\langmv{l}$ and binds a qubit variable $\langmv{x}$ to $\langmv{l}$. The semantics of $\langkw{minit} \, \langsym{(}  \langmv{l}  \langsym{)}$ is defined similarly. Both expressions require that the location $\langmv{l}$ is empty. On the other hand, an expression $\langkw{free} \, \langmv{x}$ deallocates a qubit $\langmv{x}$ on a location $\langmv{l} = H(x)$ and removes $\langmv{x}$ from a current heap $H$. The quantum state after deallocating a qubit at $\langmv{l}$ is the partial trace over the qubit corresponding to $\langmv{l}$. For example, deallocating a qubit at $\langmv{l_{{\mathrm{1}}}}$, which is one of the Bell pair $\lblketbra{\Phi_+}{\Phi_+}{\langmv{l_{{\mathrm{1}}}}  \langsym{,}  \langmv{l_{{\mathrm{2}}}}}$ where $\ket{\Phi_+} = (\ket{00} + \ket{11})/\sqrt{2}$, produces a mixed state $\frac{1}{2}\lblketbra{0}{0}{\langmv{l_{{\mathrm{1}}}}} + \frac{1}{2}\lblketbra{1}{1}{\langmv{l_{{\mathrm{1}}}}}$.

An expression $\langkw{let} \, \langmv{x}  \langsym{=}   M_{ \langmv{B_{{\mathrm{1}}}}  \langsym{,} \, .. \, \langsym{,}  \langmv{B_{\langmv{n}}} }   \langsym{(}  \langmv{x_{{\mathrm{1}}}}  \langsym{,} \, .. \, \langsym{,}  \langmv{x_{\langmv{n}}}  \langsym{)} \, \langkw{in} \, \langnt{e}$ performs quantum measurements to $\langmv{x_{{\mathrm{1}}}}  \langsym{,} \, .. \, \langsym{,}  \langmv{x_{\langmv{n}}}$ with basis $\langmv{B_{{\mathrm{1}}}}  \langsym{,} \, .. \, \langsym{,}  \langmv{B_{\langmv{n}}}$ ($n = 1$ or $2$). The variable $\langmv{x}$ is bound to $\langmv{v} \in \{0, 1\}$, the result of the measurement, and the quantum state $\rho$ changes to $M_v\rho M_v^\dagger/p_v$. The point is that two qubit measurements require that there exists a free path between the target locations $\langmv{l_{{\mathrm{1}}}}$ and $\langmv{l_{{\mathrm{2}}}}$. This requirement is formalized as $G  \mid  \langmv{L}  \vDash  \langmv{l_{{\mathrm{1}}}}  \sim  \langmv{l_{{\mathrm{2}}}}$, indicating that there is a path between $\langmv{l_{{\mathrm{1}}}}$ and $\langmv{l_{{\mathrm{2}}}}$ on a graph $G$, where $\langmv{L}$ represents a set of locations where any qubits are allocated. If such a path does not exist, the program gets stuck. On the other hand, single-qubit measurements always succeed.

We omit other expressions in this paper, as their semantics are similar to those in common programming languages.

\begin{figure}[tb]
  \fbox{$  \left[  \langmv{H} ,  \rho ,  \langnt{e}  \right]   \rightarrow_{ \langmv{D} ,  G }   \left[  \langmv{H'} ,  \rho' ,  \langnt{e'}  \right]  $}

  \infrule[E-Init]{
    x' \not\in \dom(H)
    \andalso \langmv{l} \not\in \cod(H)
    \andalso \langmv{l} \in G
  }{
    \rstate{ H }{ \rho }{ \langkw{let} \, \langmv{x}  \langsym{=}  \langkw{init} \, \langsym{(}  \langmv{l}  \langsym{)} \, \langkw{in} \, \langnt{e} }
    \evalarrow \rstate{ H \{ x' \mapsto \langmv{l} \} }{ \rho \otimes \lblketbra{0}{0}{\langmv{l}} }{ \langsym{[}  \langmv{x'}  \slash  \langmv{x}  \langsym{]} \, \langnt{e} }
  }

  \infrule[E-MInit]{
    x' \not\in \dom(H)
    \andalso \langmv{l} \not\in \cod(H)
    \andalso \langmv{l} \in G
  }{
    \rstate{ H }{ \rho }{ \langkw{let} \, \langmv{x}  \langsym{=}  \langkw{minit} \, \langsym{(}  \langmv{l}  \langsym{)} \, \langkw{in} \, \langnt{e} }
    \evalarrow \rstate{ H \{ x' \mapsto \langmv{l} \} }{ \rho \otimes \lblketbra{T}{T}{\langmv{l}} }{ \langsym{[}  \langmv{x'}  \slash  \langmv{x}  \langsym{]} \, \langnt{e} }
  }

  \begin{tabular}{c}
    \begin{minipage}{0.49\columnwidth}
      \infrule[E-Free]{
        H(\langmv{x}) = \langmv{l}
      }{
        \rstate{ H }{ \rho }{ \langkw{free} \, \langmv{x}  \langsym{;}  \langnt{e} }
        \evalarrow \rstate{ H \setminus \langmv{x} }{ \Tr_{\langmv{l}}(\rho) }{ \langnt{e} }
      }
    \end{minipage}
    \begin{minipage}{0.49\columnwidth}
      \infrule[E-Gate]{
        H(\langmv{x}) = \langmv{l}
      }{
        \rstate { H }{ \rho }{ \langmv{U}  \langsym{(}  \langmv{x}  \langsym{)} }
        \evalarrow
        \rstate { H }{ U_l\rho U_l^\dagger }{ () }
      }
    \end{minipage}
  \end{tabular}

  \infrule[E-Meas2]{
    H(\langmv{x_{{\mathrm{1}}}}) = \langmv{l_{{\mathrm{1}}}}
    \andalso H(\langmv{x_{{\mathrm{2}}}}) = \langmv{l_{{\mathrm{2}}}}
    \andalso M_s \coloneqq (I + (-1)^s B_1 B_2)/2 \ (s = 0, 1) \\
    \langmv{x'} \not\in \dom(H)
    \andalso \langmv{v} \in \{0, 1\}
    \andalso p_v \coloneqq \Tr\left(M_v^\dagger M_v \rho\right) \neq 0
    \andalso G  \mid   \langnt{V}  \langsym{(}  G  \langsym{)}  \setminus   \mathrm{Used}( \langmv{H} )    \vDash  \langmv{l_{{\mathrm{1}}}}  \sim  \langmv{l_{{\mathrm{2}}}} 
  }{
    \rstate{ H }{ \rho }{ \langkw{let} \, \langmv{x}  \langsym{=}   M_{ B_{{\mathrm{1}}}  \langsym{,}  B_{{\mathrm{2}}} }   \langsym{(}  \langmv{x_{{\mathrm{1}}}}  \langsym{,}  \langmv{x_{{\mathrm{2}}}}  \langsym{)} \, \langkw{in} \, \langnt{e} }
    \evalarrow
    \rstate{ H\{x' \mapsto v\} }{ M_v \rho M_v^\dagger / p_v }{ \langsym{[}  \langmv{x'}  \slash  \langmv{x}  \langsym{]} \, \langnt{e} }
  }

  \infrule[E-Call]{
     \langmv{f}  \mapsto [   \overline{ \langmv{l'} }   ]( \langmv{x'_{{\mathrm{1}}}}  \langsym{,} \, .. \, \langsym{,}  \langmv{x'_{\langmv{n}}} ) \langnt{e'}  \in D
  }{
    \rstate{ H }{ \rho }{ \langmv{f}  \langsym{[}   \overline{ \langmv{l} }   \langsym{]}  \langsym{(}  \langmv{x_{{\mathrm{1}}}}  \langsym{,} \, .. \, \langsym{,}  \langmv{x_{\langmv{n}}}  \langsym{)} }
    \evalarrow 
    \rstate{ H }{ \rho }{ \langsym{[}   \overline{ \langmv{l} }   \slash   \overline{ \langmv{l'} }   \langsym{]} \, \langsym{[}  \langmv{x_{{\mathrm{1}}}}  \slash  \langmv{x'_{{\mathrm{1}}}}  \langsym{]} \, .. \, \langsym{[}  \langmv{x_{\langmv{n}}}  \slash  \langmv{x'_{\langmv{n}}}  \langsym{]} \, \langnt{e'} }
  }

  \caption{Operational semantics (excerpt).}
  \label{fig:semantics}
\end{figure}

\section{Typing}\label{sec:typing}

This section presents the definition of a type system for $\langname$. The aim of our type system is to ensure that well-typed programs do not halt execution due to surgery operations. We proved this property as type soundness.

\subsection{Types, Environments and Commands}

\begin{figure}[tb]
  \begin{align*}
    \text{Types} & & \tau \coloneqq &\  \texttt{qbit}( l )  \mid \langkw{unit} \mid  \texttt{bool}  \mid \langkw{ref} \, \tau \mid \dots \\
    \text{Function Types} & & \theta \coloneqq &\  \Pi   \overline{ \langmv{l} }   . \braket{ \langmv{x_{{\mathrm{1}}}}  \langsym{:}  \tau_{{\mathrm{1}}}  \langsym{,} \, .. \, \langsym{,}  \langmv{x_{\langmv{n}}}  \langsym{:}  \tau_{\langmv{n}} } \xrightarrow{  \overline{C}  } \braket{ \Gamma  |  \tau }  \\
    \text{Typing Environment} & & \Gamma \coloneqq &\  \bullet  \mid \Gamma  \langsym{,}  \langmv{x}  \langsym{:}  \tau \mid \Gamma  \langsym{,}  \langmv{l}\\
    \text{Function Type Environment} & & \Theta \coloneqq &\  \bullet  \mid \Theta  \langsym{,}  \langmv{f}  \langsym{:}  \theta \\
    \text{Commands} & & C \coloneqq &\  \langmv{l_{{\mathrm{1}}}}  \sim  \langmv{l_{{\mathrm{2}}}}  \mid \langkw{alloc} \, \langsym{(}  \langmv{l}  \langsym{)} \mid \langkw{free} \, \langsym{(}  \langmv{l}  \langsym{)} \mid   \overline{ C_{{\mathrm{1}}} }  ^*  \mid   \overline{ C_{{\mathrm{1}}} }   \lor   \overline{ C_{{\mathrm{2}}} }  
  \end{align*}
  \caption{Types, environments and commands}
  \label{fig:type-syntax}
\end{figure}

The syntax of types, environments, and commands is provided in \cref{fig:type-syntax}. The type $ \texttt{qbit}( \langmv{l} ) $ denotes the type of a qubit, where $\langmv{l}$ represents the location of the qubit. Our type system distinguishes each qubit from others to analyze how qubits are manipulated within a given graph. The qubit type can be considered a kind of singleton type~\cite{hayashi1991SingletonUnion}. The other types are consistent with those commonly used in many programming languages: $\langkw{unit}$, $ \texttt{bool} $ and $\langkw{ref} \, \tau$ denote the type of the unit value, Boolean values, and references to a value of type $\tau$, respectively.

A type environment $\Gamma$ maps variables to types and also manages a set of qubit locations. The symbol $ \bullet $ denotes the empty environment. We use $\Gamma  \langsym{,}  \langmv{x}  \langsym{:}  \tau$ to denote the extension of environment $\Gamma$ with a type binding $\langmv{x}  \langsym{:}  \tau$, and $\Gamma  \langsym{,}  \langmv{l}$ to represent the extension of $\Gamma$ with a location $\langmv{l}$. If $\langmv{l} \in \Gamma$, it means that no qubit is allocated at $\langmv{l}$. We assume that all variable names and locations in $\Gamma$ are distinct. A function type environment $\Theta$ maps function names $\langmv{f}$ to function types $\theta$.

A command $C$ denotes an operation on locations, which can be perceived as a form of computational effects. We write $\overline{C}$ for a sequence of commands and $ \epsilon $ for the empty command sequence, indicating that an expression does not manipulate any locations. The concatenation of command sequences $\overline{C}_{{\mathrm{1}}}$ and $\overline{C}_{{\mathrm{2}}}$ is denoted by $\overline{C}_{{\mathrm{1}}}  \mathbin{+\mkern-10mu+}  \overline{C}_{{\mathrm{2}}}$. Commands $\langkw{alloc} \, \langsym{(}  \langmv{l}  \langsym{)}$ and $\langkw{free} \, \langsym{(}  \langmv{l}  \langsym{)}$ signify qubit allocation and deallocation, respectively, which are generated by expressions $\langkw{init} \, \langsym{(}  \langmv{l}  \langsym{)}$ and $\langkw{free} \, \langsym{(}  \langmv{l}  \langsym{)}$. A merge command $ \langmv{l_{{\mathrm{1}}}}  \sim  \langmv{l_{{\mathrm{2}}}} $ signifies an occurrence of a merge operation between $\langmv{l_{{\mathrm{1}}}}$ and $\langmv{l_{{\mathrm{2}}}}$, which is generated by measurements. Additionally, our language supports loops and branches, thus commands have constructors for them. A command $  \overline{ C }  ^* $ represents zero or more repetitions of $ \overline{ C } $, and $   \overline{ C_{{\mathrm{1}}} }    \lor   \overline{ C_{{\mathrm{2}}} }  $ denotes a branch that chooses either $ \overline{ C_{{\mathrm{1}}} } $ or $ \overline{ C_{{\mathrm{2}}} } $ at runtime.

We denote function types as $ \Pi   \overline{ \langmv{l} }   . \braket{ \langmv{x_{{\mathrm{1}}}}  \langsym{:}  \tau_{{\mathrm{1}}}  \langsym{,} \, .. \, \langsym{,}  \langmv{x_{\langmv{n}}}  \langsym{:}  \tau_{\langmv{n}} } \xrightarrow{  \overline{C}  } \braket{ \Gamma  |  \tau } $, where $ \overline{ \langmv{l} } $ represents a sequence of location parameters, $\langmv{x_{{\mathrm{1}}}}  \langsym{:}  \tau_{{\mathrm{1}}}  \langsym{,} \, .. \, \langsym{,}  \langmv{x_{\langmv{n}}}  \langsym{:}  \tau_{\langmv{n}}$ denote the types of arguments, $\Gamma$ represents a new typing environment, $\tau$ signifies the type of a return value, and $\overline{C}$ is a command sequence generated upon function invocation. Here, the environment $\Gamma$ is utilized to monitor which locations and variables remain valid after the function call. The parameter $ \overline{ \langmv{l} } $ enables function invocation in diverse contexts wherein the locations of qubits passed as arguments vary.

\subsection{Type System}

The typing rules are provided in \cref{fig:typing-quantum,fig:typing-classical}. A typing judgment has the form $\Theta  \mid  \Gamma  \vdash  \langnt{e}  \langsym{:}  \tau  \Rightarrow  \Gamma'  \mid  \overline{C}$, indicating that $\langnt{e}$ is well-typed under a function type environment $\Theta$ and typing environment $\Gamma$, evaluates to a value of type $\tau$, leading to a change in the type environment to $\Gamma'$, and generating a command sequence $\overline{C}$. Throughout the description of our type system, we use $ \mathrm{flv}( \Gamma ) $ and $ \mathrm{flv}( \tau ) $ to denote all free location variables in $\Gamma$ and $\tau$, respectively.

The typing rules for quantum expressions, which may generate several commands, are provided in \cref{fig:typing-quantum}. In \rn{T-Init} and \rn{T-MInit} for qubit allocation, a location $\langmv{l} \in \Gamma$ removed to create a type $ \texttt{qbit}( \langmv{l} ) $. Conversely, the rule \rn{T-Free} removes a qubit variable $\langmv{x}$ from the typing environment and returns the location variable $\langmv{l}$ associated with the type of $\langmv{x}$, allowing it to be reused for another qubit allocation. These two rules ensure that at most only one variable can have ownership over a location, and in this sense they are related to linear types~\cite{turner1995OnceType}. The rule \rn{T-Meas2} requires that the arguments are qubits and adds a command $ \langmv{l_{{\mathrm{1}}}}  \sim  \langmv{l_{{\mathrm{2}}}} $. It is noteworthy that it does not validate whether $\langmv{l_{{\mathrm{1}}}}$ and $\langmv{l_{{\mathrm{2}}}}$ are connected at this time. The \rn{T-Gate} and \rn{T-Meas1} rules do not generate any commands because single qubit measurements and applications of basic gates do not entail changes to qubit locations or connections between qubits.

\newlength{\tyvspace}
\setlength{\tyvspace}{6pt}

\begin{figure}[tb]
  \infrule[T-Init]{
    \Theta  \mid  \Gamma  \langsym{,}  \langmv{x}  \langsym{:}   \texttt{qbit}( \langmv{l} )   \vdash  \langnt{e}  \langsym{:}  \tau  \Rightarrow  \Gamma'  \mid  \overline{C}
  }{
    \Theta  \mid  \Gamma  \langsym{,}  \langmv{l}  \vdash  \langkw{let} \, \langmv{x}  \langsym{=}  \langkw{init} \, \langsym{(}  \langmv{l}  \langsym{)} \, \langkw{in} \, \langnt{e}  \langsym{:}  \tau  \Rightarrow  \Gamma'  \mid  \langkw{alloc} \, \langsym{(}  \langmv{l}  \langsym{)}  \mathbin{+\mkern-10mu+}  \overline{C}
  }

  \vspace{\tyvspace}
  \infrule[T-Free]{
    \Theta  \mid  \Gamma  \langsym{,}  \langmv{l}  \vdash  \langnt{e}  \langsym{:}  \tau  \Rightarrow  \Gamma'  \mid  \overline{C}
  }{
    \Theta  \mid  \Gamma  \langsym{,}  \langmv{x}  \langsym{:}   \texttt{qbit}( \langmv{l} )   \vdash  \langkw{free} \, \langmv{x}  \langsym{;}  \langnt{e}  \langsym{:}  \tau  \Rightarrow  \Gamma'  \mid  \langkw{free} \, \langsym{(}  \langmv{l}  \langsym{)}  \mathbin{+\mkern-10mu+}  \overline{C}
  }

  \vspace{\tyvspace}
  \infrule[T-Meas1]{
    \langmv{y}  \langsym{:}   \texttt{qbit}( l )  \in \Gamma
    \andalso \Theta  \mid  \Gamma  \langsym{,}  \langmv{x}  \langsym{:}   \texttt{bool}   \vdash  \langnt{e}  \langsym{:}  \tau  \Rightarrow  \Gamma'  \mid  \overline{C}
  }{
    \Theta  \mid  \Gamma  \vdash  \langkw{let} \, \langmv{x}  \langsym{=}   M_{ B }   \langsym{(}  \langmv{y}  \langsym{)} \, \langkw{in} \, \langnt{e}  \langsym{:}  \tau  \Rightarrow  \Gamma'  \mid  \overline{C}
  }

  \vspace{\tyvspace}
  \infrule[T-Meas2]{
    \langmv{x_{{\mathrm{1}}}}  \langsym{:}   \texttt{qbit}( l_{{\mathrm{1}}} )   \langsym{,}  \langmv{x_{{\mathrm{2}}}}  \langsym{:}   \texttt{qbit}( l_{{\mathrm{2}}} )  \in \Gamma
    \andalso \Theta  \mid  \Gamma  \langsym{,}  \langmv{x}  \langsym{:}   \texttt{bool}   \vdash  \langnt{e}  \langsym{:}  \tau  \Rightarrow  \Gamma'  \mid  \overline{C}
  }{
    \Theta  \mid  \Gamma  \vdash  \langkw{let} \, \langmv{x}  \langsym{=}   M_{ B_{{\mathrm{1}}}  \langsym{,}  B_{{\mathrm{2}}} }   \langsym{(}  \langmv{x_{{\mathrm{1}}}}  \langsym{,}  \langmv{x_{{\mathrm{2}}}}  \langsym{)} \, \langkw{in} \, \langnt{e}  \langsym{:}  \tau  \Rightarrow  \Gamma'  \mid   \langmv{l_{{\mathrm{1}}}}  \sim  \langmv{l_{{\mathrm{2}}}}   \mathbin{+\mkern-10mu+}  \overline{C}
  }

  \vspace{\tyvspace}
  \infrule[T-Gate]{
    \langmv{x}  \langsym{:}   \texttt{qbit}( \langmv{l} )  \in \Gamma
  }{
    \Theta  \mid  \Gamma  \vdash  \langmv{U}  \langsym{(}  \langmv{x}  \langsym{)}  \langsym{:}  \langkw{unit}  \Rightarrow  \Gamma  \mid   \epsilon 
  }

  \caption{Typing rules for quantum expressions.}
  \label{fig:typing-quantum}
\end{figure}

\begin{figure}[tb]
  \infrule[T-Var]{
    \langmv{x}  \langsym{:}  \tau \in \Gamma
  }{
    \Theta  \mid  \Gamma  \vdash  \langmv{x}  \langsym{:}  \tau  \Rightarrow  \Gamma  \mid   \epsilon 
  }

  \vspace{\tyvspace}
  \infrule[T-Seq]{
    \Theta  \mid  \Gamma  \vdash  \langnt{e_{{\mathrm{1}}}}  \langsym{:}  \langkw{unit}  \Rightarrow  \Gamma_{{\mathrm{1}}}  \mid   \overline{ C_{{\mathrm{1}}} } 
    \andalso \Theta  \mid  \Gamma_{{\mathrm{1}}}  \vdash  \langnt{e_{{\mathrm{2}}}}  \langsym{:}  \tau  \Rightarrow  \Gamma_{{\mathrm{2}}}  \mid   \overline{ C_{{\mathrm{2}}} } 
  }{
    \Theta  \mid  \Gamma  \vdash  \langnt{e_{{\mathrm{1}}}}  \langsym{;}  \langnt{e_{{\mathrm{2}}}}  \langsym{:}  \tau  \Rightarrow  \Gamma_{{\mathrm{2}}}  \mid   \overline{ C_{{\mathrm{1}}} }   \mathbin{+\mkern-10mu+}   \overline{ C_{{\mathrm{2}}} } 
  }

  \vspace{\tyvspace}
  \infrule[T-If]{
    \Theta  \mid  \Gamma  \vdash  \langnt{e_{{\mathrm{1}}}}  \langsym{:}   \texttt{bool}   \Rightarrow  \Gamma'  \mid   \overline{ C_{{\mathrm{1}}} }  \\
    \Theta  \mid  \Gamma'  \vdash  \langnt{e_{{\mathrm{2}}}}  \langsym{:}  \tau  \Rightarrow  \Gamma''  \mid   \overline{ C_{{\mathrm{2}}} } 
    \andalso \Theta  \mid  \Gamma'  \vdash  \langnt{e_{{\mathrm{3}}}}  \langsym{:}  \tau  \Rightarrow  \Gamma''  \mid   \overline{ C_{{\mathrm{3}}} } 
  }{
    \Theta  \mid  \Gamma  \vdash  \langkw{if} \, \langnt{e_{{\mathrm{1}}}} \, \langkw{then} \, \langnt{e_{{\mathrm{2}}}} \, \langkw{else} \, \langnt{e_{{\mathrm{3}}}}  \langsym{:}  \tau  \Rightarrow  \Gamma''  \mid   \overline{ C_{{\mathrm{1}}} }   \mathbin{+\mkern-10mu+}  \langsym{(}     \overline{ C_{{\mathrm{2}}} }    \lor   \overline{ C_{{\mathrm{3}}} }    \langsym{)}
  }

  \vspace{\tyvspace}
  \infrule[T-While]{
    \Theta  \mid  \Gamma  \vdash  \langnt{e_{{\mathrm{1}}}}  \langsym{:}   \texttt{bool}   \Rightarrow  \Gamma  \mid   \overline{ C_{{\mathrm{1}}} } 
    \andalso \Theta  \mid  \Gamma  \vdash  \langnt{e_{{\mathrm{2}}}}  \langsym{:}  \langkw{unit}  \Rightarrow  \Gamma  \mid   \overline{ C_{{\mathrm{2}}} } 
  }{
    \Theta  \mid  \Gamma  \vdash  \langkw{while} \, \langnt{e_{{\mathrm{1}}}} \, \langkw{do} \, \langnt{e_{{\mathrm{2}}}}  \langsym{:}  \langkw{unit}  \Rightarrow  \Gamma  \mid   \langsym{(}   \overline{ C_{{\mathrm{1}}} }   \mathbin{+\mkern-10mu+}   \overline{ C_{{\mathrm{2}}} }   \langsym{)} ^*   \mathbin{+\mkern-10mu+}   \overline{ C_{{\mathrm{1}}} } 
  }

  \vspace{\tyvspace}
  \infrule[T-Call]{
    \Theta  \langsym{(}  \langmv{f}  \langsym{)} =  \Pi   \overline{ \langmv{l'} }   . \braket{ \langmv{x_{{\mathrm{1}}}}  \langsym{:}  \tau_{{\mathrm{1}}}  \langsym{,} \, .. \, \langsym{,}  \langmv{x_{\langmv{n}}}  \langsym{:}  \tau_{\langmv{n}} } \xrightarrow{   \overline{ C }   } \braket{ \Gamma'  |  \tau } 
    \andalso  \sigma_{ \langmv{l} }  = \langsym{[}   \overline{ \langmv{l} }   \slash   \overline{ \langmv{l'} }   \langsym{]}
    \andalso  \overline{ \langmv{l''} }  =  \overline{ \langmv{l} }  \setminus \left(\bigcup_{i=1}^n  \mathrm{flv}(  \sigma_{ \langmv{l} }  \, \tau_{\langmv{i}} ) \right)
  }{
    \Theta  \mid  \Gamma  \langsym{,}   \overline{ \langmv{l''} }   \langsym{,}  \langmv{x_{{\mathrm{1}}}}  \langsym{:}   \sigma_{ \langmv{l} }  \, \tau_{{\mathrm{1}}}  \langsym{,} \, .. \, \langsym{,}  \langmv{x_{\langmv{n}}}  \langsym{:}   \sigma_{ \langmv{l} }  \, \tau_{\langmv{n}}  \vdash  \langmv{f}  \langsym{[}   \overline{ \langmv{l} }   \langsym{]}  \langsym{(}  \langmv{x_{{\mathrm{1}}}}  \langsym{,} \, .. \, \langsym{,}  \langmv{x_{\langmv{n}}}  \langsym{)}  \langsym{:}   \sigma_{ \langmv{l} }  \, \tau  \Rightarrow  \Gamma  \langsym{,}   \sigma_{ \langmv{l} }  \, \Gamma'  \mid   \sigma_{ \langmv{l} }  \, \overline{C}
  }

  \vspace{\tyvspace}
  \infrule[T-FunDecl]{
    \Theta  \vdash  D
    \andalso \Theta  \mid   \overline{ \langmv{l'} }   \langsym{,}  \langmv{x_{{\mathrm{1}}}}  \langsym{:}  \tau_{{\mathrm{1}}}  \langsym{,} \, .. \, \langsym{,}  \langmv{x_{\langmv{n}}}  \langsym{:}  \tau_{\langmv{n}}  \vdash  \langnt{e}  \langsym{:}  \tau  \Rightarrow  \Gamma  \mid  \overline{C}
    \andalso  \overline{ \langmv{l} }  =  \overline{ \langmv{l'} }  \uplus \left(\bigcup_{i=1}^n  \mathrm{flv}( \tau_{\langmv{i}} ) \right)
  }{
    \Theta  \langsym{,}  \langmv{f}  \langsym{:}   \Pi   \overline{ \langmv{l} }   . \braket{ \langmv{x_{{\mathrm{1}}}}  \langsym{:}  \tau_{{\mathrm{1}}}  \langsym{,} \, .. \, \langsym{,}  \langmv{x_{\langmv{n}}}  \langsym{:}  \tau_{\langmv{n}} } \xrightarrow{  \overline{C}  } \braket{ \Gamma  |  \tau }   \vdash  D  \langsym{,}   \langmv{f}  \mapsto [   \overline{ \langmv{l} }   ]( \langmv{x_{{\mathrm{1}}}}  \langsym{,} \, .. \, \langsym{,}  \langmv{x_{\langmv{n}}} ) \langnt{e} 
  }

  \vspace{\tyvspace}
  \infrule[T-Prog]{
    \Theta  \vdash  D
    \andalso \Theta  \mid  \langnt{V}  \langsym{(}  G  \langsym{)}  \vdash  \langnt{e}  \langsym{:}  \tau  \Rightarrow  \Gamma  \mid  \overline{C}
    \andalso G  \mid  \langnt{V}  \langsym{(}  G  \langsym{)}  \vdash  \overline{C}  \Rightarrow  \langmv{L}
  }{
     G   \vdash  \braket{ D ,  \langnt{e} } 
  }

  \caption{Typing rules for classical expressions (excerpt).}
  \label{fig:typing-classical}
\end{figure}

\begin{figure}
  \begin{tabular}{c}
  \begin{minipage}{0.48\columnwidth}
    \infrule[C-Alloc]{
      \langmv{l} \in \langmv{L}
    }{
      G  \mid  \langmv{L}  \vdash  \langkw{alloc} \, \langsym{(}  \langmv{l}  \langsym{)}  \Rightarrow   \langmv{L}  \setminus  \langmv{l} 
    }
  \end{minipage}
  \begin{minipage}{0.48\columnwidth}
  \infax[C-Free]{
    G  \mid  \langmv{L}  \vdash  \langkw{free} \, \langsym{(}  \langmv{l}  \langsym{)}  \Rightarrow  \langmv{L}  \langsym{,}  \langmv{l}
  }
  \end{minipage}
  \end{tabular}

  \vspace{\tyvspace}
  \begin{tabular}{c}
    \begin{minipage}{0.40\columnwidth}
      \infrule[C-Merge]{
        G  \mid  \langmv{L}  \vDash  \langmv{l_{{\mathrm{1}}}}  \sim  \langmv{l_{{\mathrm{2}}}}
      }{
        G  \mid  \langmv{L}  \vdash   \langmv{l_{{\mathrm{1}}}}  \sim  \langmv{l_{{\mathrm{2}}}}   \Rightarrow  \langmv{L}
      }
      
    \end{minipage}
    \begin{minipage}{0.56\columnwidth}
      \infrule[C-If]{
        G  \mid  \langmv{L}  \vdash  \overline{C}_{{\mathrm{1}}}  \Rightarrow  \langmv{L'}
        \andalso G  \mid  \langmv{L}  \vdash  \overline{C}_{{\mathrm{2}}}  \Rightarrow  \langmv{L'}
      }{
        G  \mid  \langmv{L}  \vdash   \overline{C}_{{\mathrm{1}}}  \lor  \overline{C}_{{\mathrm{2}}}   \Rightarrow  \langmv{L'}
      }
    \end{minipage}
  \end{tabular}

  \vspace{\tyvspace}
  \begin{tabular}{c}
    \begin{minipage}{0.32\columnwidth}
      \infrule[C-Loop]{
        G  \mid  \langmv{L}  \vdash  \overline{C}  \Rightarrow  \langmv{L}
      }{
        G  \mid  \langmv{L}  \vdash   \overline{C} ^*   \Rightarrow  \langmv{L}
      }
    \end{minipage}
    \begin{minipage}{0.66\columnwidth}
      \infrule[C-Concat]{
        G  \mid  \langmv{L}  \vdash  \overline{C}_{{\mathrm{1}}}  \Rightarrow  \langmv{L'}
        \andalso G  \mid  \langmv{L'}  \vdash  \overline{C}_{{\mathrm{2}}}  \Rightarrow  \langmv{L''}
      }{
        G  \mid  \langmv{L}  \vdash  \overline{C}_{{\mathrm{1}}}  \mathbin{+\mkern-10mu+}  \overline{C}_{{\mathrm{2}}}  \Rightarrow  \langmv{L''}
      }
    \end{minipage}
  \end{tabular}

  \caption{The rules for connectivity checking.}
  \label{fig:connectivity-rules}
\end{figure}

The typing rules for classical expressions are outlined in \cref{fig:typing-classical}. The rules for creating references, dereferencing, and assigning a value to reference cells closely resemble those of ordinary ML-like languages and are thus omitted here due to space constraints. 

The rule \rn{T-Seq} checks the type of $\langnt{e_{{\mathrm{1}}}}$ and then proceeds to the subsequent expression $\langnt{e_{{\mathrm{2}}}}$ with $\Gamma_{{\mathrm{1}}}$, which is obtained from $\langnt{e_{{\mathrm{1}}}}$, and concatenates two command sequences $ \overline{ C_{{\mathrm{1}}} } $ and $ \overline{ C_{{\mathrm{2}}} } $, which are obtained from $\langnt{e_{{\mathrm{1}}}}$ and $\langnt{e_{{\mathrm{2}}}}$, respectively.

In rule \rn{T-If}, we ensure that each subexpression returns the same type environment $\Gamma'$. This stipulation ensures that regardless of which expression is chosen, the allocation state of the qubits after the if expression is the same. This property facilitates the implementation of an efficient type-checking algorithm, as explained in \cref{sec:type-check}.

In rule \rn{T-While}, we guarantee that the guard expression $\langnt{e_{{\mathrm{1}}}}$ and the body expression $\langnt{e_{{\mathrm{2}}}}$ return the typing environment $\Gamma$ unchanged. This is imperative because the sequential execution of $\langnt{e_{{\mathrm{1}}}}$ and $\langnt{e_{{\mathrm{2}}}}$ must not alter the allocation state of qubits before and after the loop, as the number of loop iterations is unknown.

In rule \rn{T-Call}, we use a substitution map $\sigma_l$ to instantiate occurrences of location variables $\langmv{l}$ in the argument type $\tau_{{\mathrm{1}}}, \dots, \tau_{\langmv{n}}$, return type $\tau$, command sequence $\overline{C}$, and typing environment $\Gamma'$. Prior to calling a function, the caller must ensure that all locations $ \overline{ \langmv{l''} } $ are free, where qubits will be allocated in the function body. After the function call, the portion of the typing environment used within the function transitions to $\sigma_l \, \Gamma'$. Consequently, the rule returns this modified environment along with the unused environment $\Gamma$. Additionally, it returns a command sequence $\sigma_l \, \overline{C}$ generated by the function call.

In rule \rn{T-FunDecl}, akin to let-polymorphism in ML, all free location variables are universally quantified. This enables us to call functions in diverse contexts. Here, location variables $ \overline{ \langmv{l'} } $, which directly appear in the typing environment within the assumption, will be used to allocate new qubits during the execution of the function body. Subsequently, we append the function type, along with the resulting typing environment and command sequence, to a function type environment $\Theta$. We remark that the type of a function is not included in $\Theta$ in the current type checking, as $\langname$ does not support recursive function calls.

The \rn{T-Prog} rule requires that the function definitions in $\Theta$ are well-typed and that the main expression $\langnt{e}$ is also well-typed under $\Theta$ and an initial typing environment $V(G)$, where $G$ represents an architecture graph since all locations are initially unused.

In \rn{T-Prog}, we also check whether $ \overline{ C } $ obtained through the type checking of the main expression is valid under the architecture graph $G$.
The judgment form for the validity of a command sequence is $G  \mid  \langmv{L}  \vdash   \overline{ C }   \Rightarrow  \langmv{L'}$, indicating that the process specified by $ \overline{ C } $ can be safely completed, starting with free locations $\langmv{L}$ under a graph $G$, and that $\langmv{L'}$ are free after executing $\overline{C}$. Formally, we define the rules for the validity of command sequences in \cref{fig:connectivity-rules}.

We prove that any well-typed program under our type system will never encounter a halt in its execution. This assertion is formalized as type soundness, as shown in the following theorem.

\begin{theorem}{(Soundness)}\label{thm:soundness}
  If $ G   \vdash  \braket{ D ,  \langnt{e} } $, then $\rstate{\emptyset}{1}{e}$ does not get stuck.
\end{theorem}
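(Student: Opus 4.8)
The plan is to prove \cref{thm:soundness} by the syntactic method of progress and preservation, carried out over a runtime-state well-formedness judgment of the form $\Theta \mid \Gamma \mid D \mid G \vdash \rstate{H}{\rho}{e} \Rightarrow \Gamma' \mid L$. This judgment should bundle three consistency conditions: (i) the heap $H$ agrees with $\Gamma$, meaning the qubit-typed bindings in $\Gamma$ name exactly the occupied locations $\Used(H)$ while the bare location entries of $\Gamma$ are exactly the free locations $V(G) \setminus \Used(H)$; (ii) the residual expression is well-typed, $\Theta \mid \Gamma \vdash e : \tau \Rightarrow \Gamma' \mid \overline{C}$ for some $\tau$ and $\overline{C}$; and (iii) the generated command sequence still passes the connectivity check from the current free locations, $G \mid (V(G) \setminus \Used(H)) \vdash \overline{C} \Rightarrow L$. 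Condition (iii) is the load-bearing one: it is the bridge that makes the static, whole-program connectivity analysis predict every connectivity test performed dynamically by \textsc{E-Meas2}.

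First I would prove preservation: if a well-formed state steps, the result is again well-formed. The argument is a case analysis on the evaluation rule, using inversion on the typing and connectivity derivations. The quantum cases are the substantive ones, and each is matched by exactly one connectivity rule: \textsc{E-Init} shrinks the free set by $l$, mirrored by \textsc{C-Alloc} consuming the leading $\langkw{alloc}(l)$; \textsc{E-Free} grows it by $l$, mirrored by \textsc{C-Free}; and \textsc{E-Meas2} leaves the free set unchanged while consuming the leading merge command through \textsc{C-Merge}. For control flow I would lean on the structural invariants the rules enforce: \textsc{T-If} and \textsc{C-If} force both branches to share the same resulting $\Gamma'$ and $L'$, so selecting either branch preserves condition (iii); \textsc{T-While} keeps $\Gamma$ and $L$ invariant across the body, so a single unrolling is harmless. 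The delicate case is \textsc{E-Call}: here I would need a location-and-value substitution lemma showing that substitution commutes with both typing and command-sequence generation, so that the substituted body's command sequence is exactly the $\sigma_l \overline{C}$ predicted by \textsc{T-Call}. The accompanying connectivity fact comes essentially for free, because \textsc{T-Call} splices $\sigma_l \overline{C}$ into the flattened top-level command sequence, so inverting \textsc{C-Concat} at the call site hands back precisely the sub-derivation the reduced state requires.

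Next I would prove progress: a well-formed state is either a final value or can step. Most cases are routine, since inversion on the typing derivation pins down the shape of $e$ and, $\langname$ being first-order and non-recursive, an evaluation rule always applies; the allocation rules fire because condition (i) guarantees the target location is genuinely free. The only rule that could a priori get stuck is \textsc{E-Meas2}, and this is exactly where condition (iii) pays off. When $e$ is a two-qubit measurement, its command sequence has $l_1 \sim l_2$ as its leading command; inverting the connectivity derivation (\textsc{C-Concat} then \textsc{C-Merge}) yields $G \mid (V(G) \setminus \Used(H)) \vDash l_1 \sim l_2$, which is precisely the side condition \textsc{E-Meas2} demands. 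Hence the merge step can always be taken, and no well-formed state is stuck.

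Finally I would close the theorem by checking that the initial state $\rstate{\emptyset}{1}{e}$ is well-formed: inverting \textsc{T-Prog} on $G \vdash \braket{D, e}$ yields $\Theta \vdash D$, $\Theta \mid V(G) \vdash e : \tau \Rightarrow \Gamma \mid \overline{C}$, and $G \mid V(G) \vdash \overline{C} \Rightarrow L$; with the empty heap $\Used(\emptyset) = \emptyset$, so the free locations are exactly $V(G)$, matching conditions (i) and (iii). Progress and preservation then combine to show no reachable state gets stuck. I expect the main obstacle to be keeping the static command sequence synchronized with execution across both substitution and control flow — that is, phrasing condition (iii) and the substitution lemmas so that the free-location bookkeeping in $\Gamma$ stays exactly equal to $V(G) \setminus \Used(H)$ at every step, and so that the command sequence of each reduced expression is always a valid unfolding of the originally checked sequence under the evolved free-location set. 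The function-call case, where location substitution and command-sequence splicing interact, is where these threads are hardest to keep aligned.
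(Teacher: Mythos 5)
Your proposal is correct and follows essentially the same route as the paper: the paper's appendix defines exactly your well-formedness judgment as the rule \rn{T-RState} (well-typedness of the residual expression plus the connectivity check $G \mid V(G) \setminus \mathrm{Used}(H) \vdash \overline{C} \Rightarrow L$ as the load-bearing invariant), proves an initial-state lemma by inverting \rn{T-Prog}, and then establishes progress and preservation with the same case analysis (init/free/meas2 mirrored by \rn{C-Alloc}/\rn{C-Free}/\rn{C-Merge}, while/if handled via command-sequence lemmas, and the call case via location/variable substitution plus weakening lemmas). The only difference is that the paper states the heap--environment agreement as inclusions ($\dom(\Gamma) \subseteq \dom(H)$, $\mathrm{Used}(\Gamma) \subseteq \mathrm{Used}(H)$) rather than your exact equalities, which is the more robust phrasing since the subsumption rule \rn{T-Sub} can drop bindings from the environment.
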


\subsection{Type Checking Algorithm}\label{sec:type-check}

Assuming that $\overline{C}$ is derived from a well-typed expression by type inference, the remaining task involves connectivity checking $G  \mid  \langnt{V}  \langsym{(}  G  \langsym{)}  \vdash  \overline{C}  \Rightarrow  \langmv{L}$ for some $\langmv{L}$.
In this section, we first give a naive algorithm for solving it and then speed it up by reducing the problem to the offline dynamic connectivity problem.

Basically, the problem can be solved by simulating the command sequence from the front to the back. In the simulation, we use the constraints imposed on command sequences in the type system to process the loop and branching commands.
In the case of loops, we can straightforwardly consider $ \overline{C} ^* $ as $\overline{C}$ because the \rn{T-While} rule requires that the locations where qubits are allocated remain unchanged after $\overline{C}$. In other words, the state of locations remains consistent regardless of the number of iterations.

In the case of branches, while we can indeed check both $ \overline{ C_{{\mathrm{1}}} }   \mathbin{+\mkern-10mu+}  \overline{C}$ and $ \overline{ C_{{\mathrm{2}}} }   \mathbin{+\mkern-10mu+}  \overline{C}$ for $\langsym{(}     \overline{ C_{{\mathrm{1}}} }    \lor   \overline{ C_{{\mathrm{2}}} }    \langsym{)}  \mathbin{+\mkern-10mu+}  \overline{C}$ naively, this approach leads to an exponential increase in verification costs for the number of occurrences of non-nested branches. To address this issue, we leverage the fact that the allocation states immediately following $ \overline{ C_{{\mathrm{1}}} } $ and $ \overline{ C_{{\mathrm{2}}} } $ are identical due to the \rn{T-If} rule. Consequently, it suffices to check $ \overline{ C_{{\mathrm{1}}} } $ and $ \overline{ C_{{\mathrm{2}}} }   \mathbin{+\mkern-10mu+}  \overline{C}$ for $\langsym{(}     \overline{ C_{{\mathrm{1}}} }    \lor   \overline{ C_{{\mathrm{2}}} }    \langsym{)}  \mathbin{+\mkern-10mu+}  \overline{C}$, as shown in \cref{lem:trans-branch}.

\begin{lemma}\label{lem:trans-branch}
  Suppose that $\langsym{(}     \overline{ C_{{\mathrm{1}}} }    \lor   \overline{ C_{{\mathrm{2}}} }    \langsym{)}  \mathbin{+\mkern-10mu+}  \overline{C}$ is obtained from a well-typed expression. $G  \mid  \langmv{L}  \vdash  \langsym{(}     \overline{ C_{{\mathrm{1}}} }    \lor   \overline{ C_{{\mathrm{2}}} }    \langsym{)}  \mathbin{+\mkern-10mu+}  \overline{C}  \Rightarrow  \langmv{L'}$ if and only if $G  \mid  \langmv{L}  \vdash   \overline{ C_{{\mathrm{1}}} }   \Rightarrow  \langmv{L''}$ and $G  \mid  \langmv{L}  \vdash   \overline{ C_{{\mathrm{2}}} }   \mathbin{+\mkern-10mu+}  \overline{C}  \Rightarrow  \langmv{L'}$ for some $\langmv{L''}$.
\end{lemma}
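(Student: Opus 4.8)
The plan is to prove the biconditional by inverting and re-applying the connectivity rules of \cref{fig:connectivity-rules}, with the only substantive ingredient being a coherence property supplied by well-typedness. Throughout I write $\mathrm{un}(\Gamma)$ for the set of \emph{unused} locations recorded in a typing environment $\Gamma$ (the bare location bindings $l$ occurring in it), and I exploit the fact that along any connectivity derivation this set is exactly what the judgment $G \mid L \vdash \overline{C} \Rightarrow L'$ threads: \textsc{C-Alloc} deletes one location, \textsc{C-Free} reinserts one, \textsc{C-Merge} and \textsc{C-Loop} leave $L$ unchanged, and \textsc{C-If} forces the two arms to produce a common output.

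For the forward direction, assume $G \mid L \vdash (\overline{C_1} \lor \overline{C_2}) \doubleplus \overline{C} \Rightarrow L'$. Inverting \textsc{C-Concat} yields an intermediate $L_0$ with $G \mid L \vdash \overline{C_1} \lor \overline{C_2} \Rightarrow L_0$ and $G \mid L_0 \vdash \overline{C} \Rightarrow L'$; inverting \textsc{C-If} on the former gives $G \mid L \vdash \overline{C_1} \Rightarrow L_0$ and $G \mid L \vdash \overline{C_2} \Rightarrow L_0$, the commonality of the output being built into the rule. Taking $L'' = L_0$ discharges the first conjunct immediately, while recombining $G \mid L \vdash \overline{C_2} \Rightarrow L_0$ with $G \mid L_0 \vdash \overline{C} \Rightarrow L'$ through \textsc{C-Concat} discharges the second. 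This direction uses no typing hypothesis.

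The backward direction is where the well-typedness assumption is essential. From $G \mid L \vdash \overline{C_2} \doubleplus \overline{C} \Rightarrow L'$, inverting \textsc{C-Concat} produces $L_0$ with $G \mid L \vdash \overline{C_2} \Rightarrow L_0$ and $G \mid L_0 \vdash \overline{C} \Rightarrow L'$; together with the given $G \mid L \vdash \overline{C_1} \Rightarrow L''$ I would like to apply \textsc{C-If}, which demands $L'' = L_0$. To obtain this I would prove a supporting correspondence lemma, by induction on the typing derivation, asserting that whenever $\overline{C}$ is generated by a judgment $\Theta \mid \Gamma \vdash e : \tau \Rightarrow \Gamma' \mid \overline{C}$, the output of $G \mid L \vdash \overline{C} \Rightarrow L'$, when defined, depends only on $L$ and the pair $(\mathrm{un}(\Gamma), \mathrm{un}(\Gamma'))$, namely $L' = (L \setminus (\mathrm{un}(\Gamma) \setminus \mathrm{un}(\Gamma'))) \cup (\mathrm{un}(\Gamma') \setminus \mathrm{un}(\Gamma))$. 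Since $\overline{C_1}$ and $\overline{C_2}$ arise as the two branches of a single well-typed \textbf{if}-expression, \textsc{T-If} forces them to share both the pre-environment $\Gamma'$ and the post-environment $\Gamma''$; hence they induce the same transformation of $L$, yielding $L'' = L_0$. Applying \textsc{C-If} and then \textsc{C-Concat} reassembles $G \mid L \vdash (\overline{C_1} \lor \overline{C_2}) \doubleplus \overline{C} \Rightarrow L'$, completing this direction.

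The main obstacle is the correspondence lemma, and within it the cases that do not literally reduce to the environment difference: an inner subexpression may allocate and then release the same location (as the ancilla {\tt aux} does in \cref{fig:example-of-cx}), so the generated commands touch locations invisible in $(\mathrm{un}(\Gamma), \mathrm{un}(\Gamma'))$. I must verify these net out, using the linear discipline enforced by \textsc{T-Init} and \textsc{T-Free} — any location freed inside a block was genuinely allocated inside it and is therefore absent from $L$ at that point — so that intermediate removals and reinsertions cancel. The loop case leans on the \textsc{T-While} invariant that the body leaves the environment unchanged, making \textsc{C-Loop} output-preserving; nested branches are dispatched by the induction hypothesis in precisely the way the outer branch is handled here. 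A short determinism sub-argument, that the output of $G \mid L \vdash \overline{C} \Rightarrow L'$ is unique for fixed $G$, $L$, and $\overline{C}$, keeps each inductive step well defined.
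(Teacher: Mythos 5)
Your overall architecture is the right one, but a point of reference first: the paper itself never proves \cref{lem:trans-branch}. The lemma is stated in \cref{sec:type-check} and invoked in the appendix proof of \cref{prop:ser-sound}, but no proof of it appears anywhere; the closest proved relative is \cref{lem:C-branch-distribute}, where the branch sits \emph{after} the common prefix so that pure inversion suffices. Your forward direction coincides with that style of argument (split via \cref{lem:C-inversion-concat-arbitrary}, invert \textsc{C-If}, reassemble with \textsc{C-Concat}) and is correct, including the observation that it needs no typing hypothesis. Your backward direction also isolates exactly the right obligation: the two branch sequences must send the same $L$ to the same output set, and only well-typedness can supply that.

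However, the correspondence lemma you rest this on is false as stated, and the flaw sits precisely in the sentence you offer to justify the crucial cancellation: ``any location freed inside a block was genuinely allocated inside it.'' That is not what the type system enforces --- \textsc{T-Free} applies to any qubit variable in scope, including qubits allocated \emph{before} the expression in question. Concretely, take $e = \langkw{free}\,x;\ \langkw{let}\,y = \langkw{init}(l)\ \langkw{in}\ ()$ typed under $\Gamma, x : \texttt{qbit}(l)$: it generates $\langkw{free}(l) \doubleplus \langkw{alloc}(l)$, and its pre- and post-environments carry identical sets of unused locations, so your formula predicts output $L$; but the rules of \cref{fig:connectivity-rules} compute $(L \cup \{l\}) \setminus \{l\} = L \setminus \{l\}$, which differs whenever $l \in L$. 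Your induction therefore breaks at the \textsc{T-Free} case, where you need $l \notin L$ and have nothing to derive it from. The repair is to add a consistency hypothesis to the correspondence lemma, e.g.\ $L \cap \Used(\Gamma) = \emptyset$, and check that every case of the induction preserves it; under that invariant, free-then-reallocate does net out and your argument goes through. Note also that this is not merely a defect of your auxiliary lemma: running the same example through \textsc{T-Sub} (which may silently drop the bindings $x$ and $y$) produces two branches of a well-typed \textbf{if} with equal pre- and post-environments whose command sequences act differently on any $L$ containing $l$, so the backward direction of \cref{lem:trans-branch} itself fails for such $L$. The statement is only true under the same implicit consistency assumption on $L$, which does hold everywhere the paper actually applies it, since checking starts from $L = V(G)$ with an empty heap.
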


The discussion so far gives a type checking algorithm that processes a command sequence in order from the front in a depth-first manner while managing the current allocation state (\cref{alg:naive-type-check} in \cref{app:algorithms}). The time complexity of this approach is $O(E + V)$ for finding a path in a merge command, so the overall operation takes $O(|\overline{C}|(E + V))$ time. 

This algorithm, however, does not scale for large architecture graphs. To address this issue, we reduce the type checking problem to the \emph{offline dynamic connectivity} problem described as follows:
\begin{definition}{(Offline dynamic connectivity)}
  Given a set of vertices $V$ and $Q$ queries $q_1, \dots, q_Q$. Each query is one of the following:
  \begin{itemize}
  \item $\mathtt{add}(u, v)$ : Add an edge $\{u, v\}$ to $G$.
  \item $\mathtt{remove}(u, v)$ : Remove the edge $\{u, v\}$ from $G$.
  \item $\mathtt{connected}(u, v)$ : Answer whether $u$ and $v$ are connected in $G$.
  \end{itemize}
\end{definition}
Dynamic connectivity is known to be efficiently solved~\cite{henzinger1997MaintainingMinimum,holm1998PolylogarithmicDeterministic,thorup2000NearoptimalFullydynamic}, particularly for offline dynamic connectivity, where each query can be processed in $\mathcal{O}(\log V)$ time with a link-cut tree~\cite{sleator1981DataStructureDynamic} maintaining a maximum spanning forest. The rest of this section concentrates on showing how to reduce our connectivity checking problem to the offline dynamic connectivity problem.

The reduction to offline dynamic connectivity is accomplished by converting a command sequence into a single command sequence without branches. Specifically, we transform $   \overline{ C_{{\mathrm{1}}} }    \lor   \overline{ C_{{\mathrm{2}}} }  $ to $  \overline{ C_{{\mathrm{1}}} }    \mathbin{+\mkern-10mu+}  ( \overline{ C_{{\mathrm{1}}} } )^{-1}  \mathbin{+\mkern-10mu+}   \overline{ C_{{\mathrm{2}}} } $, where $  \overline{ C_{{\mathrm{1}}} }  ^{-1} $ is the inverse of $ \overline{ C_{{\mathrm{1}}} } $, an operation that undoes the change in the allocation state made by $ \overline{ C_{{\mathrm{1}}} } $. Roughly speaking, $  \overline{ C }  ^{-1} $ can be obtained by reversing the meaning of $\langkw{alloc} \, \langsym{(}  \langmv{l}  \langsym{)}$ and $\langkw{free} \, \langsym{(}  \langmv{l}  \langsym{)}$ in $ \overline{ C } $ and then reversing the order of $ \overline{ C } $. We can verify $ \overline{ C_{{\mathrm{1}}} } $ and $ \overline{ C_{{\mathrm{2}}} } $ simultaneously by verifying the single command sequence obtained through this conversion. We define this process as the serialization of a command sequence.

\begin{definition}
  For a command sequence $\overline{C}$, the serialized command sequence $ \mathrm{ser}( \overline{C} ) $ is defined by:
  \begin{gather*}
     \mathrm{ser}(  \epsilon  )  =  \epsilon 
    \qquad  \mathrm{ser}(  \langmv{l_{{\mathrm{1}}}}  \sim  \langmv{l_{{\mathrm{2}}}}   \mathbin{+\mkern-10mu+}  \overline{C} )  =  \langmv{l_{{\mathrm{1}}}}  \sim  \langmv{l_{{\mathrm{2}}}}   \mathbin{+\mkern-10mu+}   \mathrm{ser}( \overline{C} )  \\
     \mathrm{ser}( \langkw{alloc} \, \langsym{(}  \langmv{l}  \langsym{)}  \mathbin{+\mkern-10mu+}  \overline{C} )  = \langkw{alloc} \, \langsym{(}  \langmv{l}  \langsym{)}  \mathbin{+\mkern-10mu+}   \mathrm{ser}( \overline{C} )   \mathbin{+\mkern-10mu+}  \langkw{free} \, \langsym{(}  \langmv{l}  \langsym{)} \\
     \mathrm{ser}( \langkw{free} \, \langsym{(}  \langmv{l}  \langsym{)}  \mathbin{+\mkern-10mu+}  \overline{C} )  = \langkw{free} \, \langsym{(}  \langmv{l}  \langsym{)}  \mathbin{+\mkern-10mu+}   \mathrm{ser}( \overline{C} )   \mathbin{+\mkern-10mu+}  \langkw{alloc} \, \langsym{(}  \langmv{l}  \langsym{)} \\
     \mathrm{ser}( \langsym{(}   \overline{C}_{{\mathrm{1}}}  \lor  \overline{C}_{{\mathrm{2}}}   \langsym{)}  \mathbin{+\mkern-10mu+}  \overline{C}_{{\mathrm{3}}} )  =  \mathrm{ser}( \overline{C}_{{\mathrm{1}}} )   \mathbin{+\mkern-10mu+}   \mathrm{ser}( \overline{C}_{{\mathrm{2}}}  \mathbin{+\mkern-10mu+}  \overline{C}_{{\mathrm{3}}} ) 
    \qquad  \mathrm{ser}(  \overline{C} ^*   \mathbin{+\mkern-10mu+}  \overline{C}' )  =  \mathrm{ser}( \overline{C}  \mathbin{+\mkern-10mu+}  \overline{C}' ) 
  \end{gather*}
\end{definition}

\begin{example}
  Consider the following program and $ \overline{ C } $ obtained from the program:

  \begin{tabular}{cc}
    \begin{minipage}[t]{0.45\hsize}
      \begin{lstlisting}[basicstyle=\ttfamily\footnotesize]
let q1 = init(l1) in
let w = mkref
  if true then
    let q2 = init(l2) in
    H(q2);
    let w = meas[Z](q2) in w
  else
    let q2 = init(l2) in
    let w = meas[X, X](q1, q2) in w
in ()
      \end{lstlisting}
    \end{minipage}
    \begin{minipage}[t]{0.49\hsize}
      {\footnotesize
      \begin{align*}
           \overline{ C }  =\ &\langkw{alloc} \, \langsym{(}  \langmv{l_{{\mathrm{1}}}}  \langsym{)}, \\
                  & \langkw{alloc} \, \langsym{(}  \langmv{l_{{\mathrm{2}}}}  \langsym{)}  \lor  \langsym{(}  \langkw{alloc} \, \langsym{(}  \langmv{l_{{\mathrm{2}}}}  \langsym{)}  \langsym{,}   \langmv{l_{{\mathrm{1}}}}  \sim  \langmv{l_{{\mathrm{2}}}}   \langsym{)}  \\
           \mathrm{ser}(  \overline{ C }  )  =\ &\langkw{alloc} \, \langsym{(}  \langmv{l_{{\mathrm{1}}}}  \langsym{)}, \\
                            &\quad \langkw{alloc} \, \langsym{(}  \langmv{l_{{\mathrm{2}}}}  \langsym{)}, \underline{\langkw{free} \, \langsym{(}  \langmv{l_{{\mathrm{2}}}}  \langsym{)}}, \\
                            &\quad \langkw{alloc} \, \langsym{(}  \langmv{l_{{\mathrm{2}}}}  \langsym{)},  \langmv{l_{{\mathrm{1}}}}  \sim  \langmv{l_{{\mathrm{2}}}} , \underline{\langkw{free} \, \langsym{(}  \langmv{l_{{\mathrm{2}}}}  \langsym{)}}, \\
                            &\underline{\langkw{free} \, \langsym{(}  \langmv{l_{{\mathrm{1}}}}  \langsym{)}}
      \end{align*}
      }
    \end{minipage}
  \end{tabular}
  The branch command in $ \overline{ C } $ is serialized as follows:
  \begin{enumerate}
  \item The {\bf then} clause: $\langkw{free} \, \langsym{(}  \langmv{l_{{\mathrm{2}}}}  \langsym{)}$ is inserted after $\langkw{alloc} \, \langsym{(}  \langmv{l_{{\mathrm{2}}}}  \langsym{)}$ to restore the allocation state to one immediately before entering the then clause. As a result, we obtain $\langkw{alloc} \, \langsym{(}  \langmv{l_{{\mathrm{2}}}}  \langsym{)}$, $\underline{\langkw{free} \, \langsym{(}  \langmv{l_{{\mathrm{2}}}}  \langsym{)}}$.
  \item The {\bf else} clause and the subsequent commands: $\langkw{free} \, \langsym{(}  \langmv{l_{{\mathrm{2}}}}  \langsym{)}$ is inserted after these commands because the subsequent expression does not change the allocation state. The merge command $ \langmv{l_{{\mathrm{1}}}}  \sim  \langmv{l_{{\mathrm{2}}}} $ is ignored during inversion. As a result, we obtain $\langkw{alloc} \, \langsym{(}  \langmv{l_{{\mathrm{2}}}}  \langsym{)}  \langsym{,}   \langmv{l_{{\mathrm{1}}}}  \sim  \langmv{l_{{\mathrm{2}}}} , \underline{\langkw{free} \, \langsym{(}  \langmv{l_{{\mathrm{2}}}}  \langsym{)}}$.
  \item Concatenate the results of steps 1 and 2.
  \end{enumerate}
  Finally, the underlined commands in $ \mathrm{ser}(  \overline{ C }  ) $ are inserted as an inverse operation.
\end{example}

Indeed, we can employ $ \mathrm{ser}( \overline{C} ) $ for connectivity checking in place of $\overline{C}$. Moreover, the length of $ \mathrm{ser}( \overline{C} ) $ is less than twice the length of $\overline{C}$, namely $\mathcal{O}(|\overline{C}|)$. We show these properties in the following propositions.

\begin{proposition}\label{prop:ser-sound}
  Suppose that $\overline{C}$ is obtained from a well-typed expression. Then $G  \mid  \langmv{L}  \vdash   \mathrm{ser}( \overline{C} )   \Rightarrow  \langmv{L}$ if and only if $G  \mid  \langmv{L}  \vdash  \overline{C}  \Rightarrow  \langmv{L'}$ for some $\langmv{L'}$.
\end{proposition}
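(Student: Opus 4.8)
The plan is to prove both directions at once by first isolating a balance property of $\mathrm{ser}$ and then running an induction on the structure of $\overline{C}$ that mirrors the defining clauses of serialization. The guiding intuition is that each clause of $\mathrm{ser}$ wraps the recursive image of the continuation between a matched $\mathbf{alloc}/\mathbf{free}$ (or $\mathbf{free}/\mathbf{alloc}$) pair whose effects on the free set cancel, so that (i) every merge command is reached in exactly the free set it would be reached in while checking $\overline{C}$ directly, and (ii) the overall free set is restored. Fact (ii) is what forces the output set in the statement to be $L$ rather than some other set.

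I would first establish an auxiliary \emph{balance lemma}: for every $\overline{C}$ and every $L$, if $G \mid L \vdash \mathrm{ser}(\overline{C}) \Rightarrow L''$ is derivable then $L'' = L$. This is a routine induction on the size of $\overline{C}$ following the clauses of $\mathrm{ser}$: the $\mathbf{alloc}(l)$ clause appends a trailing $\mathbf{free}(l)$ and the $\mathbf{free}(l)$ clause a trailing $\mathbf{alloc}(l)$, so the matched commands cancel around a recursive part that preserves the set by the induction hypothesis; merges fix the set by \textsc{C-Merge}; and the branch and loop clauses are concatenations or reductions of set-preserving pieces. Together with the determinacy of the connectivity judgment's output set (given derivability, the resulting free set is a function of the input), this lemma lets me replace the target $\cdots \Rightarrow L$ on the serialized side by mere derivability, reducing the biconditional to: $\mathrm{ser}(\overline{C})$ is derivable from $L$ iff $\overline{C}$ is derivable from $L$ to some $L'$.

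Next I would prove that equivalence by induction on the size of $\overline{C}$, with $L$ universally quantified so the hypothesis can be reapplied at shifted free sets, case-splitting as in $\mathrm{ser}$. The $\epsilon$ case is immediate. For a merge prefix $l_1 \sim l_2 \doubleplus \overline{C'}$ both sides demand the same path check $G \mid L \vDash l_1 \sim l_2$ (the merge is reached at $L$ on both sides) and then reduce to the continuation at $L$, closed by the hypothesis. For an $\mathbf{alloc}(l)$ prefix both sides require $l \in L$ and reduce to the continuation at $L \setminus l$, the trailing $\mathbf{free}(l)$ on the serialized side being enabled and returning to $L$ precisely because the balance lemma forces the inner part to land at $L \setminus l$; the $\mathbf{free}(l)$ prefix is symmetric. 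For the branch prefix $(\overline{C_1} \lor \overline{C_2}) \doubleplus \overline{C_3}$, the balance lemma applied to \textsc{C-Concat} splits $\mathrm{ser}(\overline{C_1}) \doubleplus \mathrm{ser}(\overline{C_2} \doubleplus \overline{C_3})$ into derivability of $\mathrm{ser}(\overline{C_1})$ from $L$ and of $\mathrm{ser}(\overline{C_2} \doubleplus \overline{C_3})$ from $L$; the hypothesis rewrites these as derivability of $\overline{C_1}$ and of $\overline{C_2} \doubleplus \overline{C_3}$, which is exactly the right-hand side of \cref{lem:trans-branch}, so that lemma discharges the case. A side observation threads through the recursion: each sub-sequence I descend into ($\overline{C_1}$, $\overline{C_2} \doubleplus \overline{C_3}$, and the loop case below) is again obtained from a well-typed expression, since the branch/loop subderivations combined with \textsc{T-Seq} supply the needed typing, keeping the hypotheses of \cref{lem:trans-branch} and of the loop step in force.

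The step I expect to be the main obstacle is the loop prefix $\overline{C'}^{*} \doubleplus \overline{C''}$, where $\mathrm{ser}$ merely erases the star and the hypothesis reduces the serialized side to derivability of $\overline{C'} \doubleplus \overline{C''}$; I then must equate this with derivability of $\overline{C'}^{*} \doubleplus \overline{C''}$. The $\Leftarrow$ direction is free, since \textsc{C-Loop} fires only when $\overline{C'}$ maps $L$ back to $L$. The $\Rightarrow$ direction needs that any derivation of $\overline{C'}$ from $L$ necessarily returns to $L$ so that \textsc{C-Loop} applies, and this is the one place that genuinely uses well-typedness beyond \cref{lem:trans-branch}: because $\overline{C'}$ is the command sequence of a loop body, \textsc{T-While} types that body with an unchanged environment $\Gamma \Rightarrow \Gamma$, and via the correspondence between a typing environment and the connectivity free set (the invariant underlying \cref{thm:soundness}) the net effect of $\overline{C'}$ on the free set is the identity. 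Stating and invoking that correspondence cleanly, together with output determinacy, is the technical crux; once it is in hand, \textsc{C-Loop} and \textsc{C-Concat} finish the case.
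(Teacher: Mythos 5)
Your proposal is correct and takes essentially the same route as the paper's own proof: an induction on the size of $\overline{C}$ whose cases mirror the defining clauses of $\mathrm{ser}$, with \cref{lem:trans-branch} discharging the branch case. The differences are organizational rather than substantive---your explicit balance lemma, and your flagging of the loop case's reliance on well-typedness, make precise two facts the paper uses silently (its statement of the proposition already pins the serialized output set to $L$ so the induction hypothesis carries that information, and its loop-case chain of equivalences cites only inversion on \textsc{C-Concat} where it implicitly needs that the command sequence of a well-typed loop body maps $L$ back to $L$).
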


\begin{proposition}\label{prop:ser-length}
  For any $ \overline{ C } $, $| \mathrm{ser}( \overline{C} ) | \leq 2|\overline{C}|$.
\end{proposition}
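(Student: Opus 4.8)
The plan is to prove the bound by induction on the total syntactic size of $\overline{C}$ (the number of command nodes, counted recursively through the bodies of loops and branches), following exactly the six defining clauses of $\mathrm{ser}$. Writing $\lvert\cdot\rvert$ for the number of atomic commands ($\langkw{alloc}$, $\langkw{free}$, and merges $l_1 \sim l_2$) in a sequence, the argument only uses that the count is monotone under the structural constructors, i.e. $\lvert \overline{C}^* \rvert \ge \lvert \overline{C} \rvert$ and $\lvert \overline{C}_1 \lor \overline{C}_2 \rvert \ge \lvert \overline{C}_1 \rvert + \lvert \overline{C}_2 \rvert$, so the proof is robust to the precise convention. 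The first observation I would record is that $\mathrm{ser}$ always returns a branch-free, loop-free sequence, so its length is literally a count of atomic commands and is additive over $\mathbin{+\mkern-10mu+}$; by the induction hypothesis the subresults are flat, and concatenating or prepending/appending atomic commands preserves flatness.

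The quantitative heart of the argument is that length is inflated only in the $\langkw{alloc}$ and $\langkw{free}$ clauses, and there by exactly one command: $\lvert\mathrm{ser}(\langkw{alloc}(l) \mathbin{+\mkern-10mu+} \overline{C})\rvert = 2 + \lvert\mathrm{ser}(\overline{C})\rvert$, which the induction hypothesis bounds by $2 + 2\lvert\overline{C}\rvert = 2\lvert\langkw{alloc}(l) \mathbin{+\mkern-10mu+} \overline{C}\rvert$, and the $\langkw{free}$ clause is symmetric. The $\epsilon$ and merge clauses add no inverse command, so they satisfy the bound with room to spare; the loop clause discards the star, giving $\lvert\mathrm{ser}(\overline{C}^* \mathbin{+\mkern-10mu+} \overline{C}')\rvert = \lvert\mathrm{ser}(\overline{C} \mathbin{+\mkern-10mu+} \overline{C}')\rvert \le 2\lvert\overline{C} \mathbin{+\mkern-10mu+} \overline{C}'\rvert \le 2\lvert\overline{C}^* \mathbin{+\mkern-10mu+} \overline{C}'\rvert$ by monotonicity. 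For the branch clause, additivity and the induction hypothesis give $\lvert\mathrm{ser}((\overline{C}_1 \lor \overline{C}_2) \mathbin{+\mkern-10mu+} \overline{C}_3)\rvert = \lvert\mathrm{ser}(\overline{C}_1)\rvert + \lvert\mathrm{ser}(\overline{C}_2 \mathbin{+\mkern-10mu+} \overline{C}_3)\rvert \le 2\lvert\overline{C}_1\rvert + 2\lvert\overline{C}_2\rvert + 2\lvert\overline{C}_3\rvert \le 2\lvert(\overline{C}_1 \lor \overline{C}_2) \mathbin{+\mkern-10mu+} \overline{C}_3\rvert$.

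The one point requiring care is the well-foundedness of this induction, since $\mathrm{ser}$ is not plain structural recursion on a sequence: in the branch clause the trailing sequence $\overline{C}_3$ is folded into the recursive call $\mathrm{ser}(\overline{C}_2 \mathbin{+\mkern-10mu+} \overline{C}_3)$, so a measure such as ``length of the tail'' need not decrease. I would therefore take the induction measure to be the total node count of the syntax tree and verify that it strictly drops in every recursive call: in the leading-command clauses one node is peeled off, in the loop clause the $*$ node is removed, and in the branch clause both calls $\mathrm{ser}(\overline{C}_1)$ and $\mathrm{ser}(\overline{C}_2 \mathbin{+\mkern-10mu+} \overline{C}_3)$ run on trees strictly smaller than the original (each missing the $\lor$ node together with one sibling subtree). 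This is also exactly what makes $\mathrm{ser}$ total. I expect this bookkeeping to be the main, if modest, obstacle; once the measure is fixed every case closes by the trivial arithmetic above, and the bound is in fact tight, attained by a sequence consisting solely of $\langkw{alloc}$ commands.
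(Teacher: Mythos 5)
Your proof is correct and follows essentially the same route as the paper, which simply asserts ``straightforward induction on the size'' of the sequence; your case analysis over the six defining clauses of $\mathrm{ser}$, with the $2{+}$ inflation confined to the $\langkw{alloc}$/$\langkw{free}$ clauses, is exactly the induction the paper has in mind. Your extra care about well-foundedness (inducting on total node count because the branch clause recurses on $\overline{C}_{{\mathrm{2}}} \mathbin{+\mkern-10mu+} \overline{C}_{{\mathrm{3}}}$, which is not a structural subterm) is a legitimate detail the paper glosses over, not a departure from its approach.
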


The final task is to convert $\langkw{alloc} \, \langsym{(}  \langmv{l}  \langsym{)}$ and $\langkw{free} \, \langsym{(}  \langmv{l}  \langsym{)}$, which are operations on vertices, into edge operations in the dynamic connectivity problem. This can be done as follows: when encountering $\langkw{alloc} \, \langsym{(}  \langmv{l}  \langsym{)}$, we remove all edges incident to vertex $\langmv{l}$, and conversely, when encountering $\langkw{free} \, \langsym{(}  \langmv{l}  \langsym{)}$, we restore them. Subsequently, we can determine whether $ \langmv{l_{{\mathrm{1}}}}  \sim  \langmv{l_{{\mathrm{2}}}} $ is valid by checking if they are adjacent to each other or there exist $\langmv{l'_{{\mathrm{1}}}} \in \mathrm{adj}(\langmv{l_{{\mathrm{1}}}})$ and $\langmv{l'_{{\mathrm{2}}}} \in \mathrm{adj}(\langmv{l_{{\mathrm{2}}}})$ such that $\langmv{l'_{{\mathrm{1}}}}$ and $\langmv{l'_{{\mathrm{2}}}}$ are connected. Here, $\mathrm{adj}(l)$ denotes the set of adjacent vertices of $\langmv{l}$ in the initial graph. In fact, if such a path exists, we can use the path $\langmv{l_{{\mathrm{1}}}} \rightarrow \langmv{l'_{{\mathrm{1}}}} \rightarrow \langmv{l'_{{\mathrm{2}}}} \rightarrow \langmv{l_{{\mathrm{2}}}}$ to merge $\langmv{l_{{\mathrm{1}}}}$ and $\langmv{l_{{\mathrm{2}}}}$.

\begin{example}
  Consider a path graph $G$ where $V(G) = \{\langmv{l_{{\mathrm{1}}}}  \langsym{,}  \langmv{l_{{\mathrm{2}}}}  \langsym{,}  \langmv{l_{{\mathrm{3}}}}  \langsym{,}  \langmv{l_{{\mathrm{4}}}}\}$ and $E(G) = \{ \{\langmv{l_{{\mathrm{1}}}}  \langsym{,}  \langmv{l_{{\mathrm{2}}}}\}, \{\langmv{l_{{\mathrm{2}}}}  \langsym{,}  \langmv{l_{{\mathrm{3}}}}\}, \{\langmv{l_{{\mathrm{3}}}}  \langsym{,}  \langmv{l_{{\mathrm{4}}}}\}\}$.
  If $\langkw{alloc} \, \langsym{(}  \langmv{l_{{\mathrm{1}}}}  \langsym{)}$, $\langkw{alloc} \, \langsym{(}  \langmv{l_{{\mathrm{3}}}}  \langsym{)}$, $ \langmv{l_{{\mathrm{1}}}}  \sim  \langmv{l_{{\mathrm{3}}}} $ are obtained, then it is converted into the following queries:
  $\mathtt{remove}(\langmv{l_{{\mathrm{1}}}}  \langsym{,}  \langmv{l_{{\mathrm{2}}}})$, $\mathtt{remove}(\langmv{l_{{\mathrm{2}}}}  \langsym{,}  \langmv{l_{{\mathrm{3}}}})$, $\mathtt{remove}(\langmv{l_{{\mathrm{3}}}}  \langsym{,}  \langmv{l_{{\mathrm{4}}}})$, \underline{$\mathtt{connected}(\langmv{l_{{\mathrm{2}}}}  \langsym{,}  \langmv{l_{{\mathrm{2}}}})$, $\mathtt{connected}(\langmv{l_{{\mathrm{2}}}}  \langsym{,}  \langmv{l_{{\mathrm{4}}}})$}. This sequence is safe if either underlined query is true. In this case, $\mathtt{connected}(\langmv{l_{{\mathrm{2}}}}  \langsym{,}  \langmv{l_{{\mathrm{2}}}})$ returns true and thus it is safe.
\end{example}

The computational complexity of our type checking algorithm for general architecture graphs is $\mathcal{O}(|\overline{C}|V^2\log V)$. This is because the length of $ \mathrm{ser}( \overline{C} ) $ is $\mathcal{O}(|\overline{C}|)$, as indicated by \cref{prop:ser-length}. Additionally, up to $|\mathrm{adj}(l)|^2 = \mathcal{O}(V^2)$ queries are generated for each command\footnote{There is an implementation trick to improve the complexity to $\mathcal{O}(| \overline{ C } |V\log V)$.}. However, in practice, architecture graphs are often sparse, meaning the size of $\mathrm{adj}(l)$ remains at most constant (e.g., 4 in a grid graph). In such cases, the algorithm works in $\mathcal{O}(|\overline{C}|\log V)$.

It is important to note that the size of $ \overline{ C } $ can grow exponentially with respect to the size of a given program, as command sequences are copied in function calls. However, this problem does not occur with loops. Roughly speaking, the size of $ \overline{ C } $ corresponds to the number of surgery operations performed at runtime, assuming that all loops are expanded exactly once.

\section{Extensions}\label{sec:extensions}

\subsection{Recursive Functions}

Currently, $\langname$ lacks support for recursive functions. However, we can introduce support for recursive functions while disregarding the cost of type checking. This section outlines the extension of $\langname$ with recursive functions.

We introduce command variables $\alpha, \beta, \ldots$ into $\overline{C}$ and assign them to functions for type inference. Through type inference, we derive production rules on command sequences denoted as $\alpha \rightarrow \overline{C_\alpha}$, where $\overline{C_\alpha}$ may recursively include $\alpha$ itself. For instance, if a function $f$ has $\langnt{e_{{\mathrm{1}}}}  \langsym{;}  \langmv{f}  \langsym{[}   \overline{ \langmv{l} }   \langsym{]}  \langsym{(}  \langmv{x_{{\mathrm{1}}}}  \langsym{,} \, .. \, \langsym{,}  \langmv{x_{\langmv{n}}}  \langsym{)}  \langsym{;}  \langnt{e_{{\mathrm{2}}}}$ as its body and $\langnt{e_{{\mathrm{1}}}}, \langnt{e_{{\mathrm{2}}}}$ generate command sequences $\overline{C}_{{\mathrm{1}}}, \overline{C}_{{\mathrm{2}}}$ respectively, then we obtain $\alpha \rightarrow \overline{C}_{{\mathrm{1}}} \doubleplus \alpha \doubleplus \overline{C}_{{\mathrm{2}}}$. Consequently, we establish a language $\mathcal{L}$ where each word $\overline{C} \in \mathcal{L}$ corresponds to a potential execution trace of a given program. Our objective is to ascertain whether $\mathcal{L}$ is \emph{safe}, meaning $\forall  \overline{ C }  \in \mathcal{L}.\ \exists \langmv{L}.\ G  \mid  \langnt{V}  \langsym{(}  G  \langsym{)}  \vdash  \overline{C}  \Rightarrow  \langmv{L}$.

We can address this decision problem by constructing the safe language $\mathcal{L}_{\text{safe}} = \mathcal{L}(\mathcal{A})$ from an architecture graph, where $\mathcal{A}$ is a nondeterministic finite automaton created by the following steps:
\begin{enumerate}
  \item Prepare $2^{|V(G)|}$ vertices with empty edges. Each state represents a graph state with cells allocated in different patterns. All states are accepting states.
  \item For each state $u, v \in \mathcal{A}$, add an edge $u \rightarrow v$ if a transition from $u$ to $v$ can occur via $\langkw{alloc} \, \langsym{(}  \langmv{l}  \langsym{)}$ or $\langkw{free} \, \langsym{(}  \langmv{l}  \langsym{)}$ for some $\langmv{l}$.
  \item Add self-loops $u \rightarrow u$ labeled with $ \langmv{l_{{\mathrm{1}}}}  \sim  \langmv{l_{{\mathrm{2}}}} $ if there is a path from $\langmv{l_{{\mathrm{1}}}}$ to $\langmv{l_{{\mathrm{2}}}}$ in the state $u$.
\end{enumerate}

The language $\mathcal{L}_{\text{safe}} = \mathcal{L}(\mathcal{A})$ contains all safe command sequences, and thus we can reduce the decision problem to the model checking problem: $\mathcal{L} \cap \overline{\mathcal{L}_{\text{safe}}} = \emptyset$? However, two primary issues arise. Firstly, efficiently solving this emptiness problem is challenging because $\mathcal{L}$ is generally a context-free language. Secondly, the number of states of $\mathcal{A}$ grows exponentially concerning the number of cells $n$.

One solution to tackle the first issue is constraining recursive function calls. For example, this can be achieved by restricting recursive calls so that the production rules are left-normal or right-normal form and the language $\mathcal{L}$ is normal.

As for the second issue, we currently need a fundamental solution. However, if $|V(G)|$ is not excessively large, this may not present a problem in practice. In fact, it is challenging to increase the number of logical qubits per core significantly.

\subsection{Multi-qubit Quantum Measurements}

The merge operation in lattice surgery can accept three or more qubits, enabling the implementation of multi-body measurements $M_{B_1, \dots, B_n}$ for $n > 2$. Extending $\langname$ with multi-qubit measurements is straightforward by generalizing the binary relation $ \langmv{l_{{\mathrm{1}}}}  \sim  \langmv{l_{{\mathrm{2}}}} $ in our system to an $n$-ary relation $\mathrm{c}(\langmv{l_{{\mathrm{1}}}}  \langsym{,} \, .. \, \langsym{,}  \langmv{l_{\langmv{n}}})$. In this scenario, $\mathrm{c}(\langmv{l_{{\mathrm{1}}}}  \langsym{,} \, .. \, \langsym{,}  \langmv{l_{\langmv{n}}})$ indicates that a program aims to merge $\langmv{l_{{\mathrm{1}}}}  \langsym{,} \, .. \, \langsym{,}  \langmv{l_{\langmv{n}}}$ on a Steiner tree with $\langmv{l_{{\mathrm{1}}}}  \langsym{,} \, .. \, \langsym{,}  \langmv{l_{\langmv{n}}}$ as endpoints, rather than on a path.

Supporting multi-qubit measurements does incur higher costs in type checking due to the increased complexity of the Steiner tree problem compared to pathfinding. For instance, the Dreyfus--Wagner algorithm~\cite{dreyfus1971SteinerProblem}, a widely-known method for finding a minimum Steiner tree, operates in exponential time. However, the Steiner tree for merging qubits does not necessarily need to be minimum, and thus we can employ simpler algorithms to address this task.

\section{Related Work}\label{sec:related-work}

\subsubsection{Verified Quantum Compilation}
One of the most widely studied verifications of quantum compilers is the equivalence checking of quantum circuits, which ensures that the compiler preserves the semantics of the program. Amy developed a framework for reasoning about quantum circuits based on Feynman path integral formalism~\cite{amy2019LargescaleFunctional}. Additionally, several approaches based on decision diagrams have been proposed~\cite{miller2006QMDDDecision,wang2008XQDDBasedVerification,burgholzer2020ImprovedDDbased}.
Another technique is ZX-calculus~\cite{coecke2008InteractingQuantum}, a graphical language with a small set of rewrite rules on string diagrams called ZX diagrams. The completeness of the ZX-calculus~\cite{jeandel2018CompleteAxiomatisation} allows for the transformation of a ZX diagram into any equivalent diagram, making it useful for equivalence checking~\cite{peham2022EquivalenceChecking} and verified optimization~\cite{fagan2019OptimisingClifford,kissinger2020PyZXLarge} of quantum circuits.
While these studies aim to verify the equivalence of quantum programs before and after compilation, our study focuses on verifying whether quantum programs can be executed without runtime errors on quantum devices, considering various hardware constraints. 

Our work is not the only framework aimed at verifying if a quantum program satisfies hardware constraints. Smith and Thornton~\cite{smith2019QuantumComputational} provided a software tool based on Quantum Multiple Decision Diagrams (QMDDs), which synthesizes a logical quantum circuit and maps it to a physical one on a specified architecture. Hietala \etal\cite{hietala2021VerifiedOptimizer} developed {\sc voqc}, a verified optimizer for quantum circuits implemented using the Coq proof assistant, which can verify the correctness of their mapping algorithm that maps qubit variables to physical qubits to handle connectivity constraints.
In contrast to these studies, our work is the first to address the connectivity problem between logical qubits in lattice surgery. 

\subsubsection{Quantum Compilers for Lattice Surgery}
OpenSurgery~\cite{paler2020OpenSurgeryTopological}, presented by Paler and Fowler, compiles Clifford+T circuits naively using the Solovay--Kitaev algorithm. Watkins~\cite{watkins2024HighPerformancea} has recently extended this work, developing a toolchain targeting surface codes and lattice surgery. This extension allows large quantum programs to be compiled efficiently, addressing scalability and performance in fault-tolerant quantum computation. Litinski~\cite{litinski2019GameSurface} presented a Pauli-based approach, which translates a quantum circuit to a sequence of Pauli product measurements of $\pi/8$ and $\pi/4$ rotations and results in removing all Clifford gates from the circuit. Beverland \etal\cite{beverland2022SurfaceCode} developed an efficient algorithm called Edge-Disjoint Paths Compilation (EDPC) to optimize the depth of quantum circuits by parallelizing merge operations. Since our method does not rely on a specific compilation algorithm, we can readily apply our framework to verify the programs produced by these advanced compilers.

\section{Conclusion}\label{sec:conclusion}

We presented a type-based verification method based on $\langname$, which formalizes the execution model of surgery operations. The type system of $\langname$ allows us to extract a command sequence representing the graph operations performed by tracking the positions of individual qubits at the type level. We proved the type soundness by stating that if the resulting command sequence on the architecture graph is safe, the target program will not terminate illegally during execution. Furthermore, we developed an algorithm in this study that efficiently inspects whether the resulting command sequence is safe. Our algorithm achieves this by reducing the type checking problem to the offline dynamic connectivity problem.

Several future challenges remain in this research. Firstly, we intend to extend the methodology of this study to concurrent programs. Optimizing high-level quantum programs and transforming them into $\langname$ programs is also an important task.

\begin{credits}
\subsubsection{\ackname}
This work was supported by JST SPRING, Grant Number JPMJSP2110.
This work is also supported by JST CREST Grant Number JPMJCR23I4, JST Moonshot R\&D Grant Number JPMJMS2061, MEXT Q-LEAP Grant No. JPMXS0120319794, and No. JPMXS0118068682.
\end{credits}

\bibliographystyle{splncs04}
\bibliography{references}

\begin{thebibliography}{10}
\providecommand{\url}[1]{\texttt{#1}}
\providecommand{\urlprefix}{URL }
\providecommand{\doi}[1]{https://doi.org/#1}

\bibitem{acharya2023SuppressingQuantum}
Acharya, R., Aleiner, I., Allen, R., Andersen, T.I., Ansmann, M., Arute, F.,
  Arya, K., Asfaw, A., Atalaya, J., Babbush, R., Bacon, D., Bardin, J.C.,
  Basso, J., Bengtsson, A., Boixo, S., Bortoli, G., Bourassa, A., Bovaird, J.,
  Brill, L., Broughton, M., Buckley, B.B., Buell, D.A., Burger, T., Burkett,
  B., Bushnell, N., Chen, Y., Chen, Z., Chiaro, B., Cogan, J., Collins, R.,
  Conner, P., Courtney, W., Crook, A.L., Curtin, B., Debroy, D.M., Del
  Toro~Barba, A., Demura, S., Dunsworth, A., Eppens, D., Erickson, C., Faoro,
  L., Farhi, E., Fatemi, R., Flores~Burgos, L., Forati, E., Fowler, A.G.,
  Foxen, B., Giang, W., Gidney, C., Gilboa, D., Giustina, M., Grajales~Dau, A.,
  Gross, J.A., Habegger, S., Hamilton, M.C., Harrigan, M.P., Harrington, S.D.,
  Higgott, O., Hilton, J., Hoffmann, M., Hong, S., Huang, T., Huff, A.,
  Huggins, W.J., Ioffe, L.B., Isakov, S.V., Iveland, J., Jeffrey, E., Jiang,
  Z., Jones, C., Juhas, P., Kafri, D., Kechedzhi, K., Kelly, J., Khattar, T.,
  Khezri, M., Kieferov{\'a}, M., Kim, S., Kitaev, A., Klimov, P.V., Klots,
  A.R., Korotkov, A.N., Kostritsa, F., Kreikebaum, J.M., Landhuis, D., Laptev,
  P., Lau, K.M., Laws, L., Lee, J., Lee, K., Lester, B.J., Lill, A., Liu, W.,
  Locharla, A., Lucero, E., Malone, F.D., Marshall, J., Martin, O., McClean,
  J.R., McCourt, T., McEwen, M., Megrant, A., Meurer~Costa, B., Mi, X., Miao,
  K.C., Mohseni, M., Montazeri, S., Morvan, A., Mount, E., Mruczkiewicz, W.,
  Naaman, O., Neeley, M., Neill, C., Nersisyan, A., Neven, H., Newman, M., Ng,
  J.H., Nguyen, A., Nguyen, M., Niu, M.Y., O'Brien, T.E., Opremcak, A., Platt,
  J., Petukhov, A., Potter, R., Pryadko, L.P., Quintana, C., Roushan, P.,
  Rubin, N.C., Saei, N., Sank, D., Sankaragomathi, K., Satzinger, K.J.,
  Schurkus, H.F., Schuster, C., Shearn, M.J., Shorter, A., Shvarts, V.,
  Skruzny, J., Smelyanskiy, V., Smith, W.C., Sterling, G., Strain, D., Szalay,
  M., Torres, A., Vidal, G., Villalonga, B., Vollgraff~Heidweiller, C., White,
  T., Xing, C., Yao, Z.J., Yeh, P., Yoo, J., Young, G., Zalcman, A., Zhang, Y.,
  Zhu, N., {Google Quantum AI}: Suppressing quantum errors by scaling a surface
  code logical qubit. Nature  \textbf{614}(7949),  676--681 (Feb 2023).
  \doi{10.1038/s41586-022-05434-1}

\bibitem{amy2019LargescaleFunctional}
Amy, M.: Towards {{Large-scale Functional Verification}} of {{Universal Quantum
  Circuits}}. Electronic Proceedings in Theoretical Computer Science
  \textbf{287},  1--21 (Jan 2019). \doi{10.4204/EPTCS.287.1}

\bibitem{babbush2018EncodingElectronic}
Babbush, R., Gidney, C., Berry, D.W., Wiebe, N., McClean, J., Paler, A.,
  Fowler, A., Neven, H.: Encoding {{Electronic Spectra}} in {{Quantum
  Circuits}} with {{Linear T Complexity}}. Physical Review X  \textbf{8}(4),
  041015 (Oct 2018). \doi{10.1103/PhysRevX.8.041015}

\bibitem{beverland2022SurfaceCode}
Beverland, M., Kliuchnikov, V., Schoute, E.: Surface code compilation via
  edge-disjoint paths. PRX Quantum  \textbf{3}(2),  020342 (May 2022).
  \doi{10.1103/PRXQuantum.3.020342}

\bibitem{beverland2022AssessingRequirements}
Beverland, M.E., Murali, P., Troyer, M., Svore, K.M., Hoefler, T., Kliuchnikov,
  V., Low, G.H., Soeken, M., Sundaram, A., Vaschillo, A.: Assessing
  requirements to scale to practical quantum advantage (Nov 2022).
  \doi{10.48550/arXiv.2211.07629}

\bibitem{bluvstein2024LogicalQuantum}
Bluvstein, D., Evered, S.J., Geim, A.A., Li, S.H., Zhou, H., Manovitz, T.,
  Ebadi, S., Cain, M., Kalinowski, M., Hangleiter, D., Bonilla~Ataides, J.P.,
  Maskara, N., Cong, I., Gao, X., Sales~Rodriguez, P., Karolyshyn, T.,
  Semeghini, G., Gullans, M.J., Greiner, M., Vuleti{\'c}, V., Lukin, M.D.:
  Logical quantum processor based on reconfigurable atom arrays. Nature
  \textbf{626}(7997),  58--65 (Feb 2024). \doi{10.1038/s41586-023-06927-3}

\bibitem{bombin2006TopologicalQuantum}
Bombin, H., {Martin-Delgado}, M.A.: Topological {{Quantum Distillation}}.
  Physical Review Letters  \textbf{97}(18),  180501 (Oct 2006).
  \doi{10.1103/PhysRevLett.97.180501}

\bibitem{bravyi2005UniversalQuantum}
Bravyi, S., Kitaev, A.: Universal {{Quantum Computation}} with ideal
  {{Clifford}} gates and noisy ancillas. Physical Review A  \textbf{71}(2),
  022316 (Feb 2005). \doi{10.1103/PhysRevA.71.022316}

\bibitem{burgholzer2020ImprovedDDbased}
Burgholzer, L., Wille, R.: Improved {{DD-based Equivalence Checking}} of
  {{Quantum Circuits}}. In: 2020 25th {{Asia}} and {{South Pacific Design
  Automation Conference}} ({{ASP-DAC}}). pp. 127--132 (Jan 2020).
  \doi{10.1109/ASP-DAC47756.2020.9045153}

\bibitem{chamberland2021UniversalQuantum}
Chamberland, C., Campbell, E.T.: Universal quantum computing with twist-free
  and temporally encoded lattice surgery. arXiv:2109.02746 [quant-ph]  (Sep
  2021)

\bibitem{coecke2008InteractingQuantum}
Coecke, B., Duncan, R.: Interacting {{Quantum Observables}}. In: Aceto, L.,
  Damg{\aa}rd, I., Goldberg, L.A., Halld{\'o}rsson, M.M.,
  Ing{\'o}lfsd{\'o}ttir, A., Walukiewicz, I. (eds.) Automata, {{Languages}} and
  {{Programming}}. pp. 298--310. Lecture {{Notes}} in {{Computer Science}},
  Springer, Berlin, Heidelberg (2008). \doi{10.1007/978-3-540-70583-3_25}

\bibitem{dasilva2024DemonstrationLogical}
{da Silva}, M.P., {Ryan-Anderson}, C., {Bello-Rivas}, J.M., Chernoguzov, A.,
  Dreiling, J.M., Foltz, C., Frachon, F., Gaebler, J.P., Gatterman, T.M.,
  {Grans-Samuelsson}, L., Hayes, D., Hewitt, N., Johansen, J., Lucchetti, D.,
  Mills, M., Moses, S.A., Neyenhuis, B., Paz, A., Pino, J., Siegfried, P.,
  Strabley, J., Sundaram, A., Tom, D., Wernli, S.J., Zanner, M., Stutz, R.P.,
  Svore, K.M.: Demonstration of logical qubits and repeated error correction
  with better-than-physical error rates (Apr 2024).
  \doi{10.48550/arXiv.2404.02280}

\bibitem{dreyfus1971SteinerProblem}
Dreyfus, S.E., Wagner, R.A.: The steiner problem in graphs. Networks
  \textbf{1}(3),  195--207 (Jan 1971). \doi{10.1002/net.3230010302}

\bibitem{erhard2021EntanglingLogical}
Erhard, A., Poulsen~Nautrup, H., Meth, M., Postler, L., Stricker, R., Stadler,
  M., Negnevitsky, V., Ringbauer, M., Schindler, P., Briegel, H.J., Blatt, R.,
  Friis, N., Monz, T.: Entangling logical qubits with lattice surgery. Nature
  \textbf{589}(7841),  220--224 (Jan 2021). \doi{10.1038/s41586-020-03079-6}

\bibitem{fagan2019OptimisingClifford}
Fagan, A., Duncan, R.: Optimising {{Clifford Circuits}} with {{Quantomatic}}.
  https://arxiv.org/abs/1901.10114v1 (Jan 2019). \doi{10.4204/EPTCS.287.5}

\bibitem{fowler2019LowOverhead}
Fowler, A.G., Gidney, C.: Low overhead quantum computation using lattice
  surgery. arXiv:1808.06709 [quant-ph]  (Aug 2019)

\bibitem{fowler2012SurfaceCodes}
Fowler, A.G., Mariantoni, M., Martinis, J.M., Cleland, A.N.: Surface codes:
  {{Towards}} practical large-scale quantum computation. Physical Review A
  \textbf{86}(3),  032324 (Sep 2012). \doi{10.1103/PhysRevA.86.032324}

\bibitem{gidney2021HowFactor}
Gidney, C., Eker{\aa}, M.: How to factor 2048 bit {{RSA}} integers in 8 hours
  using 20 million noisy qubits. Quantum  \textbf{5}, ~433 (Apr 2021).
  \doi{10.22331/q-2021-04-15-433}

\bibitem{hayashi1991SingletonUnion}
Hayashi, S.: Singleton, {{Union}} and {{Intersection Types}} for {{Program
  Extraction}}. In: Proceedings of the {{International Conference}} on
  {{Theoretical Aspects}} of {{Computer Software}}. pp. 701--730. {{TACS}} '91,
  Springer-Verlag, Berlin, Heidelberg (Sep 1991)

\bibitem{henzinger1997MaintainingMinimum}
Henzinger, M.R., King, V.: Maintaining minimum spanning trees in dynamic
  graphs. In: Degano, P., Gorrieri, R., {Marchetti-Spaccamela}, A. (eds.)
  Automata, {{Languages}} and {{Programming}}. pp. 594--604. Springer, Berlin,
  Heidelberg (1997). \doi{10.1007/3-540-63165-8_214}

\bibitem{hietala2021VerifiedOptimizer}
Hietala, K., Rand, R., Hung, S.H., Wu, X., Hicks, M.: A {{Verified Optimizer}}
  for {{Quantum Circuits}}. Proceedings of the ACM on Programming Languages
  \textbf{5}(POPL),  37:1--37:29 (Jan 2021). \doi{10.1145/3434318}

\bibitem{holm1998PolylogarithmicDeterministic}
Holm, J., {de Lichtenberg}, K., Thorup, M.: Poly-logarithmic deterministic
  fully-dynamic algorithms for connectivity, minimum spanning tree, 2-edge, and
  biconnectivity. In: Proceedings of the Thirtieth Annual {{ACM}} Symposium on
  {{Theory}} of Computing. pp. 79--89. {{STOC}} '98, Association for Computing
  Machinery, New York, NY, USA (May 1998). \doi{10.1145/276698.276715}

\bibitem{horsman2012SurfaceCode}
Horsman, C., Fowler, A.G., Devitt, S., Meter, R.V.: Surface code quantum
  computing by lattice surgery. New Journal of Physics  \textbf{14}(12),
  123011 (Dec 2012). \doi{10.1088/1367-2630/14/12/123011}

\bibitem{jeandel2018CompleteAxiomatisation}
Jeandel, E., Perdrix, S., Vilmart, R.: A {{Complete Axiomatisation}} of the
  {{ZX-Calculus}} for {{Clifford}}+{{T Quantum Mechanics}}. In: Proceedings of
  the 33rd {{Annual ACM}}/{{IEEE Symposium}} on {{Logic}} in {{Computer
  Science}}. pp. 559--568. {{LICS}} '18, Association for Computing Machinery,
  New York, NY, USA (Jul 2018). \doi{10.1145/3209108.3209131}

\bibitem{kissinger2020PyZXLarge}
Kissinger, A., {van de Wetering}, J.: {{PyZX}}: {{Large Scale Automated
  Diagrammatic Reasoning}}. Electronic Proceedings in Theoretical Computer
  Science  \textbf{318},  229--241 (May 2020). \doi{10.4204/EPTCS.318.14}

\bibitem{kitaev2003FaulttolerantQuantum}
Kitaev, A.Y.: Fault-tolerant quantum computation by anyons. Annals of Physics
  \textbf{303}(1),  2--30 (Jan 2003). \doi{10.1016/S0003-4916(02)00018-0}

\bibitem{knill2004FaultTolerantPostselected}
Knill, E.: Fault-{{Tolerant Postselected Quantum Computation}}: {{Schemes}}.
  arXiv:quant-ph/0402171  (Feb 2004)

\bibitem{krinner2022RealizingRepeated}
Krinner, S., Lacroix, N., Remm, A., Di~Paolo, A., Genois, E., Leroux, C.,
  Hellings, C., Lazar, S., Swiadek, F., Herrmann, J., Norris, G.J., Andersen,
  C.K., M{\"u}ller, M., Blais, A., Eichler, C., Wallraff, A.: Realizing
  repeated quantum error correction in a distance-three surface code. Nature
  \textbf{605}(7911),  669--674 (May 2022). \doi{10.1038/s41586-022-04566-8}

\bibitem{lee2021EvenMore}
Lee, J., Berry, D.W., Gidney, C., Huggins, W.J., McClean, J.R., Wiebe, N.,
  Babbush, R.: Even {{More Efficient Quantum Computations}} of {{Chemistry
  Through Tensor Hypercontraction}}. PRX Quantum  \textbf{2}(3),  030305 (Jul
  2021). \doi{10.1103/PRXQuantum.2.030305}

\bibitem{litinski2019GameSurface}
Litinski, D.: A {{Game}} of {{Surface Codes}}: {{Large-Scale Quantum
  Computing}} with {{Lattice Surgery}}. Quantum  \textbf{3}, ~128 (Mar 2019).
  \doi{10.22331/q-2019-03-05-128}

\bibitem{miller2006QMDDDecision}
Miller, D., Thornton, M.: {{QMDD}}: {{A Decision Diagram Structure}} for
  {{Reversible}} and {{Quantum Circuits}}. In: 36th {{International Symposium}}
  on {{Multiple-Valued Logic}} ({{ISMVL}}'06). pp. 30--30 (May 2006).
  \doi{10.1109/ISMVL.2006.35}

\bibitem{paler2020OpenSurgeryTopological}
Paler, A., Fowler, A.G.: {{OpenSurgery}} for {{Topological Assemblies}}.
  arXiv:1906.07994 [quant-ph]  (Aug 2020)

\bibitem{peham2022EquivalenceChecking}
Peham, T., Burgholzer, L., Wille, R.: Equivalence {{Checking}} of {{Quantum
  Circuits}} with the {{ZX-Calculus}}. IEEE Journal on Emerging and Selected
  Topics in Circuits and Systems  \textbf{12}(3),  662--675 (Sep 2022).
  \doi{10.1109/JETCAS.2022.3202204}

\bibitem{shor1997PolynomialTimeAlgorithms}
Shor, P.W.: Polynomial-{{Time Algorithms}} for {{Prime Factorization}} and
  {{Discrete Logarithms}} on a {{Quantum Computer}}. SIAM Journal on Computing
  \textbf{26}(5),  1484--1509 (Oct 1997). \doi{10.1137/S0097539795293172}

\bibitem{sleator1981DataStructureDynamic}
Sleator, D.D., Tarjan, R.E.: A data structure for dynamic trees. In:
  Proceedings of the Thirteenth Annual {{ACM}} Symposium on {{Theory}} of
  Computing. pp. 114--122. {{STOC}} '81, Association for Computing Machinery,
  New York, NY, USA (May 1981). \doi{10.1145/800076.802464}

\bibitem{smith2019QuantumComputational}
Smith, K.N., Thornton, M.A.: A quantum computational compiler and design tool
  for technology-specific targets. In: Proceedings of the 46th {{International
  Symposium}} on {{Computer Architecture}}. pp. 579--588. {{ISCA}} '19,
  Association for Computing Machinery, New York, NY, USA (Jun 2019).
  \doi{10.1145/3307650.3322262}

\bibitem{thorup2000NearoptimalFullydynamic}
Thorup, M.: Near-optimal fully-dynamic graph connectivity. In: Proceedings of
  the Thirty-Second Annual {{ACM}} Symposium on {{Theory}} of Computing. pp.
  343--350. {{STOC}} '00, Association for Computing Machinery, New York, NY,
  USA (May 2000). \doi{10.1145/335305.335345}

\bibitem{turner1995OnceType}
Turner, D.N., Wadler, P., Mossin, C.: Once upon a type. In: Proceedings of the
  Seventh International Conference on {{Functional}} Programming Languages and
  Computer Architecture. pp. 1--11. {{FPCA}} '95, Association for Computing
  Machinery, New York, NY, USA (Oct 1995). \doi{10.1145/224164.224168}

\bibitem{wang2008XQDDBasedVerification}
Wang, S.A., Lu, C.Y., Tsai, I.M., Kuo, S.Y.: An {{XQDD-Based Verification
  Method}} for {{Quantum Circuits}}. IEICE Transactions on Fundamentals of
  Electronics, Communications and Computer Sciences  \textbf{E91-A}(2),
  584--594 (Feb 2008). \doi{10.1093/ietfec/e91-a.2.584}

\bibitem{watkins2024HighPerformancea}
Watkins, G., Nguyen, H.M., Watkins, K., Pearce, S., Lau, H.K., Paler, A.: A
  {{High Performance Compiler}} for {{Very Large Scale Surface Code
  Computations}} (May 2024). \doi{10.48550/arXiv.2302.02459}

\end{thebibliography}

\forarxiv{
\appendix
\section{Appendix}

\subsection{Complete Definition of the Semantics and Type System}

This section provides several definitions, evaluation rules and typing rules omitted in the paper.
Evaluation rules and typing rules are given in \cref{fig:semantics-appendix} and \cref{fig:typing-rules-appendix}, respectively.

\newlength{\evalvspace}
\setlength{\evalvspace}{6pt}

\begin{figure}
  \infrule[]{
    \rstate{ H }{ \rho }{ \langnt{e_{{\mathrm{1}}}} }
    \evalarrow
    \rstate{ H' }{ \rho' }{ \langnt{e'_{{\mathrm{1}}}} }
  }{
    \rstate{ H }{ \rho }{ \langkw{let} \, \langmv{x}  \langsym{=}  \langkw{mkref} \, \langnt{e_{{\mathrm{1}}}} \, \langkw{in} \, \langnt{e_{{\mathrm{2}}}} }
    \evalarrow
    \rstate{ H' }{ \rho' }{ \langkw{let} \, \langmv{x}  \langsym{=}  \langkw{mkref} \, \langnt{e'_{{\mathrm{1}}}} \, \langkw{in} \, \langnt{e_{{\mathrm{2}}}} }
  }

  \vspace{\evalvspace}
  \infrule[]{
    \langmv{x'} \not\in \dom(H)
  }{
    \rstate{ H }{ \rho }{ \langkw{let} \, \langmv{x}  \langsym{=}  \langkw{mkref} \, \langmv{v} \, \langkw{in} \, \langnt{e} }
    \evalarrow
    \rstate{ H\{ x' \mapsto v \} }{ \rho }{ \langsym{[}  \langmv{x'}  \slash  \langmv{x}  \langsym{]} \, \langnt{e} }
  }

  \vspace{\evalvspace}
  \infrule[]{
    \langmv{x} \in \dom(H)
  }{
    \rstate{ H }{ \rho }{ \langsym{*}  \langmv{x} }
    \evalarrow
    \rstate{ H }{ \rho }{ H(x) }
  }

  \vspace{\evalvspace}
  \infrule[]{
    \rstate{ H }{ \rho }{ \langnt{e_{{\mathrm{1}}}} }
    \evalarrow
    \rstate{ H' }{ \rho' }{ \langnt{e'_{{\mathrm{1}}}} }
  }{
    \rstate{ H }{ \rho }{ \langnt{e_{{\mathrm{1}}}}  \langsym{;}  \langnt{e_{{\mathrm{2}}}} }
    \evalarrow
    \rstate{ H' }{ \rho' }{ \langnt{e'_{{\mathrm{1}}}}  \langsym{;}  \langnt{e_{{\mathrm{2}}}} }
  }

  \vspace{\evalvspace}
  \infax[]{
    \rstate{ H }{ \rho }{ \langsym{(}  \langsym{)}  \langsym{;}  \langnt{e} }
    \evalarrow
    \rstate{ H }{ \rho }{ \langnt{e} }
  }

  \vspace{\evalvspace}
  \infrule[]{
    \rstate{ H }{ \rho }{ \langnt{e_{{\mathrm{1}}}} }
    \evalarrow
    \rstate{ H }{ \rho }{ \langnt{e'_{{\mathrm{1}}}} }
  }{
    \rstate{ H }{ \rho }{ \langkw{if} \, \langnt{e_{{\mathrm{1}}}} \, \langkw{then} \, \langnt{e_{{\mathrm{2}}}} \, \langkw{else} \, \langnt{e_{{\mathrm{3}}}} }
    \evalarrow
    \rstate{ H }{ \rho }{ \langkw{if} \, \langnt{e'_{{\mathrm{1}}}} \, \langkw{then} \, \langnt{e_{{\mathrm{2}}}} \, \langkw{else} \, \langnt{e_{{\mathrm{3}}}} }
  }

  \vspace{\evalvspace}
  \infax[]{
    \rstate{ H }{ \rho }{ \langkw{if} \,  \texttt{true}  \, \langkw{then} \, \langnt{e_{{\mathrm{1}}}} \, \langkw{else} \, \langnt{e_{{\mathrm{2}}}} }
    \evalarrow 
    \rstate{ H }{ \rho }{ \langnt{e_{{\mathrm{1}}}} }
  }

  \vspace{\evalvspace}
  \infax[]{
    \rstate{ H }{ \rho }{ \langkw{if} \,  \texttt{false}  \, \langkw{then} \, \langnt{e_{{\mathrm{1}}}} \, \langkw{else} \, \langnt{e_{{\mathrm{2}}}} }
    \evalarrow 
    \rstate{ H }{ \rho }{ \langnt{e_{{\mathrm{2}}}} }
  }

  \vspace{\evalvspace}
  \infax[]{
      \left[  \langmv{H} ,  \rho ,  \langkw{while} \, \langnt{e_{{\mathrm{1}}}} \, \langkw{do} \, \langnt{e_{{\mathrm{2}}}}  \right]   \rightarrow_{ \langmv{D} ,  G }   \left[  \langmv{H} ,  \rho ,  \langkw{if} \, \langnt{e_{{\mathrm{1}}}} \, \langkw{then} \, \langnt{e_{{\mathrm{2}}}}  \langsym{;}  \langsym{(}  \langkw{while} \, \langnt{e_{{\mathrm{1}}}} \, \langkw{do} \, \langnt{e_{{\mathrm{2}}}}  \langsym{)} \, \langkw{else} \, \langsym{(}  \langsym{)}  \right]  
  }

  \caption{Operational semantics.}
  \label{fig:semantics-appendix}
\end{figure}

\begin{figure}[tb]
  \infrule[T-MkRef]{
    \Theta  \mid  \Gamma  \vdash  \langnt{e_{{\mathrm{1}}}}  \langsym{:}  \tau_{{\mathrm{1}}}  \Rightarrow  \Gamma'  \mid   \overline{ C_{{\mathrm{1}}} } 
    \andalso  \mathrm{flv}( \tau_{{\mathrm{1}}} )  = \emptyset \\
    \Theta  \mid  \Gamma'  \langsym{,}  \langmv{x}  \langsym{:}  \langkw{ref} \, \tau_{{\mathrm{1}}}  \vdash  \langnt{e_{{\mathrm{2}}}}  \langsym{:}  \tau_{{\mathrm{2}}}  \Rightarrow  \Gamma''  \mid   \overline{ C_{{\mathrm{2}}} } 
    \andalso \langmv{x} \not\in \dom(\Gamma'')
  }{
    \Theta  \mid  \Gamma  \vdash  \langkw{let} \, \langmv{x}  \langsym{=}  \langkw{mkref} \, \langnt{e_{{\mathrm{1}}}} \, \langkw{in} \, \langnt{e_{{\mathrm{2}}}}  \langsym{:}  \tau_{{\mathrm{2}}}  \Rightarrow  \Gamma''  \mid   \overline{ C_{{\mathrm{1}}} }   \mathbin{+\mkern-10mu+}   \overline{ C_{{\mathrm{2}}} } 
  }

  \vspace{\tyvspace}
  \infrule[T-Deref]{
    \Theta  \mid  \Gamma  \vdash  \langnt{e}  \langsym{:}  \langkw{ref} \, \tau  \Rightarrow  \Gamma'  \mid  \overline{C}
  }{
    \Theta  \mid  \Gamma  \vdash  \langsym{*}  \langnt{e}  \langsym{:}  \tau  \Rightarrow  \Gamma'  \mid  \overline{C}
  }
  
  \vspace{\tyvspace}
  \infrule[T-Assign]{
    \langmv{x}  \langsym{:}  \langkw{ref} \, \tau \in \Gamma
    \andalso \Theta  \mid  \Gamma  \vdash  \langnt{e}  \langsym{:}  \tau  \Rightarrow  \Gamma'  \mid  \overline{C}
  }{
    \Theta  \mid  \Gamma  \vdash  \langmv{x}  \langsym{:=}  \langnt{e}  \langsym{:}  \langkw{unit}  \Rightarrow  \Gamma'  \mid  \overline{C}
  }

  \vspace{\tyvspace}
  \begin{tabular}{c}
    \begin{minipage}{0.48\columnwidth}
      \infrule[S-TyEnv]{
        \{ l \in \Gamma \} = \{ l \in \Gamma' \}
        \andalso \Gamma' \subseteq \Gamma
      }{
         \Gamma  \leq  \Gamma' 
      }
    \end{minipage}
    \begin{minipage}{0.48\columnwidth}
      \infrule[T-Sub]{
         \Gamma'  \leq  \Gamma'' 
        \andalso \Theta  \mid  \Gamma  \vdash  \langnt{e}  \langsym{:}  \tau  \Rightarrow  \Gamma'  \mid  \overline{C}
      }{
        \Theta  \mid  \Gamma  \vdash  \langnt{e}  \langsym{:}  \tau  \Rightarrow  \Gamma''  \mid  \overline{C}
      }
    \end{minipage}
  \end{tabular}
  \caption{The typing rules omitted in the main text.}
  \label{fig:typing-rules-appendix}
\end{figure}

We also define a typing rule for runtime states in \cref{fig:typing-rules-rs} to prove our typing preservation lemma.

\begin{figure}[tb]
  \infrule[T-RState]{
    \dom(\Gamma) \subseteq \dom(H)
    \andalso  \mathrm{Used}( \Gamma )  \subseteq \Used(H)
    \andalso  \mathrm{flv}( \Gamma )  \subseteq V(G) \\
    \Theta  \vdash  \langmv{D}
    \andalso \Theta  \mid  \Gamma  \vdash  \langnt{e}  \langsym{:}  \tau  \Rightarrow  \Gamma'  \mid  \overline{C}
    \andalso G  \mid   \langnt{V}  \langsym{(}  G  \langsym{)}  \setminus   \mathrm{Used}( \langmv{H} )    \vdash  \overline{C}  \Rightarrow  \langmv{L}
  }{
    \Theta  \mid  \Gamma  \mid  \langmv{D}  \mid  G  \vdash   \left[  \langmv{H} ,  \rho ,  \langnt{e}  \right]   \Rightarrow  \Gamma'  \mid  \langmv{L}
  }

  \begin{align*}
     \mathrm{Used}( \Gamma )  &\coloneqq \{\langmv{l} \mid \exists \langmv{x}  \langsym{:}  \tau \in \Gamma.\ \langmv{l} \in  \mathrm{flv}( \tau )  \} \\
    \Used(H) &\coloneqq \{l \mid \exists x \in \dom(H).\ H(x) = l\}
  \end{align*}
  \caption{The typing rule for runtime states.}
  \label{fig:typing-rules-rs}
\end{figure}

\begin{definition}\label{def:length-cs}
  For a command sequence $\overline{C}$, $|\overline{C}|$ denotes the size of $\overline{C}$ which are defined by:
  \begin{gather*}
    | \epsilon | = 0 \\
    |\langkw{alloc} \, \langsym{(}  \langmv{l}  \langsym{)}  \mathbin{+\mkern-10mu+}  \overline{C}| = |\langkw{free} \, \langsym{(}  \langmv{l}  \langsym{)}  \mathbin{+\mkern-10mu+}  \overline{C}| = | \langmv{l_{{\mathrm{1}}}}  \sim  \langmv{l_{{\mathrm{2}}}}   \mathbin{+\mkern-10mu+}  \overline{C}| = 1 + |\overline{C}| \\
    |   \overline{ C_{{\mathrm{1}}} }    \lor   \overline{ C_{{\mathrm{2}}} }   \mathbin{+\mkern-10mu+}  \overline{C} | = 1 + | \overline{ C_{{\mathrm{1}}} } | + | \overline{ C_{{\mathrm{2}}} } | + |\overline{C}| \\
    |  \overline{ C }  ^*   \mathbin{+\mkern-10mu+}   \overline{ C' } | = 1 + | \overline{ C } | + | \overline{ C' } |
  \end{gather*}
\end{definition}

\subsection{Proofs}

This sections provides the proofs of lemmas for type soundness. The main lemma is the progress lemma (\cref{lem:progress}) and the type preservation lemma (\cref{lem:type-preserve}).

\begin{lemma}\label{lem:ty-initial-rstate}
  If $ G   \vdash  \braket{ D ,  \langnt{e} } $, then $\Theta \mid  V( G )  \mid \langmv{D} \mid G \vdash \left[\emptyset, 1, \langnt{e}\right]$.
\end{lemma}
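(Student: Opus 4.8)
The plan is to derive the runtime-state judgment by a single application of \rn{T-RState}, feeding it the data obtained by inverting the program judgment \rn{T-Prog}. First I would invert the hypothesis $G \vdash \braket{D, e}$: since \rn{T-Prog} is the only rule producing a conclusion of this shape, there is a function type environment $\Theta$ (the witness named in the lemma's conclusion) together with a type $\tau$, an environment $\Gamma$, a command sequence $\overline{C}$, and a location set $L$ such that $\Theta \vdash D$, $\Theta \mid V(G) \vdash e : \tau \Rightarrow \Gamma \mid \overline{C}$, and $G \mid V(G) \vdash \overline{C} \Rightarrow L$.

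I would then instantiate \rn{T-RState} with heap $H = \emptyset$, quantum state $\rho = 1$, input environment $V(G)$, the expression $e$, and the output data $\tau, \Gamma, \overline{C}, L$ just recovered. The three ``logical'' premises of \rn{T-RState}, namely $\Theta \vdash D$, the typing $\Theta \mid V(G) \vdash e : \tau \Rightarrow \Gamma \mid \overline{C}$, and the connectivity judgment, are then supplied verbatim, provided I check that the connectivity premises line up: since $H = \emptyset$ uses no locations, $\Used(\emptyset) = \emptyset$ and hence $V(G) \setminus \Used(\emptyset) = V(G)$, so the premise $G \mid V(G) \setminus \Used(H) \vdash \overline{C} \Rightarrow L$ required by \rn{T-RState} is exactly the third premise of \rn{T-Prog}. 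The produced output data $\Gamma$ and $L$ are the witnesses for the right-hand side of the runtime-state judgment.

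What remains is to discharge the three structural side conditions of \rn{T-RState}, and the key observation is that the initial environment $V(G)$ is a bare set of locations, carrying no variable-to-type bindings. From this I would read off $\dom(V(G)) = \emptyset$, $\mathrm{Used}(V(G)) = \emptyset$ (as $\mathrm{Used}$ gathers the free locations occurring in the types of variable bindings, of which there are none), and $\mathrm{flv}(V(G)) = V(G)$. Combining these with $\dom(\emptyset) = \emptyset$ and $\Used(\emptyset) = \emptyset$ gives $\dom(V(G)) \subseteq \dom(\emptyset)$, $\mathrm{Used}(V(G)) \subseteq \Used(\emptyset)$, and $\mathrm{flv}(V(G)) \subseteq V(G)$, which are precisely the three side conditions.

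Because this lemma merely bridges the program-level judgment and the runtime-state judgment at the empty heap, I do not anticipate a genuine obstacle; the only care needed is bookkeeping over the auxiliary operators $\dom$, $\mathrm{Used}$, $\mathrm{flv}$, and $\Used$ on the empty heap and on the purely locational environment $V(G)$, together with tracking the clashing names for the output environment (called $\Gamma$ in \rn{T-Prog} but $\Gamma'$ in \rn{T-RState}).
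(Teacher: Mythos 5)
Your proof is correct and takes essentially the same route as the paper's: invert \textsc{T-Prog} to obtain $\Theta \vdash D$, the typing of $e$ under $V(G)$, and the validity of $\overline{C}$, then apply \textsc{T-RState} with $H = \emptyset$, $\rho = 1$, discharging the side conditions via $\dom(V(G)) = \mathrm{Used}(V(G)) = \emptyset$ and $\mathrm{flv}(V(G)) = V(G)$. If anything, your write-up is slightly more explicit than the paper's, which leaves the $\dom$ condition and the identity $V(G) \setminus \Used(\emptyset) = V(G)$ implicit.
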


\begin{proof}
  By inversion on \rn{T-Prog}, we have $\Theta  \vdash  \langmv{D}$, $\Theta  \mid  \langnt{V}  \langsym{(}  G  \langsym{)}  \vdash  \langnt{e}  \langsym{:}  \tau  \Rightarrow  \Gamma  \mid  \overline{C}$ and $G  \mid  \langnt{V}  \langsym{(}  G  \langsym{)}  \vdash  \overline{C}  \Rightarrow  \langmv{L}$.
  Now $ \mathrm{Used}( \Gamma )  =  \mathrm{Used}( \langmv{H} )  = \emptyset$ so $ \mathrm{Used}( \Gamma )  \subseteq  \mathrm{Used}( \langmv{H} ) $.
  Moreover, $ \mathrm{flv}( \Gamma )  =  V( G ) $ and thus $ \mathrm{flv}( \Gamma )  \subseteq  V( G ) $.
  Consequently we have $\Theta \mid  V( G )  \mid \langmv{D} \mid G \vdash \left[\emptyset, 1, \langnt{e}\right]$.
\end{proof}

\begin{lemma}{(Progress)}\label{lem:progress}
  If $\Theta  \mid  \Gamma  \mid  \langmv{D}  \mid  G  \vdash   \left[  \langmv{H} ,  \rho ,  \langnt{e}  \right]   \Rightarrow  \Gamma'  \mid  \langmv{L}$, then $\langnt{e}$ is a value
  or there exists $ \left[  \langmv{H'} ,  \rho' ,  \langnt{e'}  \right] $ such that $  \left[  \langmv{H} ,  \rho ,  \langnt{e}  \right]   \rightarrow_{ \langmv{D} ,  G }   \left[  \langmv{H'} ,  \rho' ,  \langnt{e'}  \right]  $.
\end{lemma}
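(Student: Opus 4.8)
The plan is to prove the lemma by induction on the typing derivation $\Theta \mid \Gamma \vdash e : \tau \Rightarrow \Gamma' \mid \overline{C}$ embedded in the \rn{T-RState} premises, with a case analysis on the last typing rule applied. If $e$ is a value (which subsumes the \rn{T-Var} case, since variables are values) we are done, so in each remaining case I assume $e$ is not a value and exhibit a step of $\rightarrow_{D,G}$. Throughout I discharge the side conditions of the reduction rules in \cref{fig:semantics} using the invariants recorded by \rn{T-RState}: the inclusions $\dom(\Gamma) \subseteq \dom(H)$, $\Used(\Gamma) \subseteq \Used(H)$, $\mathrm{flv}(\Gamma) \subseteq V(G)$, and the connectivity judgment $G \mid V(G) \setminus \Used(H) \vdash \overline{C} \Rightarrow L$.

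For the compound forms whose evaluation proceeds through a subexpression---$e_1 ; e_2$, $\langkw{let}\, x = \langkw{mkref}\, e_1\, \langkw{in}\, e_2$, $\langkw{if}\, e_1 \ldots$, and the reference eliminations---the command sequence has the shape $\overline{C}_1 \doubleplus \overline{C}'$. Inverting the connectivity derivation through \rn{C-Concat} yields $G \mid V(G) \setminus \Used(H) \vdash \overline{C}_1 \Rightarrow L_1$ for some $L_1$, so $[H, \rho, e_1]$ is itself a well-typed runtime state and the induction hypothesis applies: either $e_1$ is a value and the corresponding base rule fires (branch selection, sequencing through unit, $\langkw{mkref}$ of a value, or dereference/assignment of a heap-bound variable), or $e_1$ steps and a congruence rule lifts the step. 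The variable-headed quantum redexes need no such recursion because $\langname$ binds primitives directly: for \rn{T-Gate} and \rn{T-Meas1}, membership $x : \texttt{qbit}(l) \in \Gamma$ together with $\dom(\Gamma) \subseteq \dom(H)$ gives $x \in \dom(H)$, so \rn{E-Gate} (resp.\ the single-qubit measurement rule, which carries no connectivity condition) applies; \rn{T-Free} is analogous.

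For allocation, \rn{T-Init}/\rn{T-MInit} force $l$ to be a free location of $\Gamma$, so $\mathrm{flv}(\Gamma) \subseteq V(G)$ gives $l \in G$, while $\overline{C} = \langkw{alloc}(l) \doubleplus \overline{C}'$ inverts through \rn{C-Concat} and \rn{C-Alloc} to force $l \in V(G) \setminus \Used(H)$, i.e.\ $l \notin \cod(H)$; choosing a fresh $x'$ then lets \rn{E-Init}/\rn{E-MInit} fire. For a call $f[\overline{l}](\overline{x})$, well-typedness of the declarations $\Theta \vdash D$ guarantees a matching definition for $f$ in $D$, so \rn{E-Call} applies. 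The \rn{T-While} case steps unconditionally by unfolding the loop.

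The crux is \rn{T-Meas2}. Here $x_1 : \texttt{qbit}(l_1)$, $x_2 : \texttt{qbit}(l_2) \in \Gamma$ and $\overline{C} = (l_1 \sim l_2) \doubleplus \overline{C}'$. Using the heap--environment consistency carried by the runtime-state invariant (qubit variables map under $H$ to the locations recorded by their $\texttt{qbit}$ types), I obtain $H(x_1) = l_1$ and $H(x_2) = l_2$. Inverting $G \mid V(G) \setminus \Used(H) \vdash (l_1 \sim l_2) \doubleplus \overline{C}' \Rightarrow L$ through \rn{C-Concat} and then \rn{C-Merge} delivers exactly $G \mid V(G) \setminus \Used(H) \vDash l_1 \sim l_2$, the connectivity premise that is the only thing capable of blocking \rn{E-Meas2}; and since $\rho$ is a density operator of unit trace, $p_0 + p_1 = 1$ forces some $v$ with $p_v \neq 0$. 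I expect this measurement case to be the main obstacle, because it is the sole point at which progress rests on the soundness of the static connectivity check rather than on routine canonical-forms reasoning, and it is where the $V(G) \setminus \Used(H)$ bookkeeping threaded between \rn{T-RState} and \rn{E-Meas2} must be verified to line up.
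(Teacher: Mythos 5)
Your proposal is correct and takes essentially the same approach as the paper: a case analysis on the shape of $e$ (equivalently, the last typing rule applied), discharging each evaluation rule's side conditions from the invariants packaged in \textsc{T-RState}, with the two-qubit measurement case resting on inversion of the connectivity judgment. If anything, you are more explicit than the paper at the crux, since the paper's own \textsc{E-Meas2} case records only $l_1, l_2 \in \Used(\Gamma) \subseteq \cod(H)$ and leaves implicit both the \textsc{C-Concat}/\textsc{C-Merge} inversion yielding $G \mid V(G) \setminus \Used(H) \vDash l_1 \sim l_2$ and the $p_v \neq 0$ argument, both of which you spell out.
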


\begin{proof}
  Prove by case analysis of $\langnt{e}$.
  \pfcase{$\langnt{e} = x$}
    $\langmv{x}$ is a value an thus the statement follows immediately.

  \pfcase{$\langnt{e} = \langkw{let} \, \langmv{x}  \langsym{=}  \langkw{init} \, \langsym{(}  \langmv{l}  \langsym{)} \, \langkw{in} \, \langnt{e'}$}
    By \rn{T-RState}, we have $\langmv{l} \in  \mathrm{flv}( \Gamma )  \subseteq G$ and $\langmv{l} \not\in \cod(H)$. Then the state can step by \rn{E-Init}.

  \pfcase{$\langnt{e} = \langkw{let} \, \langmv{x}  \langsym{=}  \langkw{minit} \, \langsym{(}  \langmv{l}  \langsym{)} \, \langkw{in} \, \langnt{e'}$}
    Similar to the $\langkw{init} \, \langsym{(}  \langmv{l}  \langsym{)}$ case above.

  \pfcase{$\langnt{e} = \langkw{free} \, \langmv{x}  \langsym{;}  \langnt{e'}$}
    By inversion on $\Theta  \mid  \Gamma  \vdash  \langkw{free} \, \langmv{x}  \langsym{;}  \langnt{e'}  \langsym{:}  \tau  \Rightarrow  \Gamma''  \mid  \overline{C}$, we have $\Gamma = \Gamma'  \langsym{,}  \langmv{l}$ and $\tau =  \texttt{qbit}( \langmv{l} ) $ for some $\Gamma'  \langsym{,}  \Gamma''$.
    Then $\langmv{l} \in \Gamma \subseteq \cod(H)$ by \rn{T-RState} and thus the state can step by \rn{E-Free}.

  \pfcase{$\langnt{e} = \langkw{let} \, \langmv{x}  \langsym{=}   M_{ \langmv{B_{{\mathrm{1}}}}  \langsym{,}  \langmv{B_{{\mathrm{2}}}} }   \langsym{(}  \langmv{x_{{\mathrm{1}}}}  \langsym{,}  \langmv{x_{{\mathrm{2}}}}  \langsym{)} \, \langkw{in} \, \langnt{e'}$}
    By inversion on $\Theta  \mid  \Gamma  \vdash  \langkw{let} \, \langmv{x}  \langsym{=}   M_{ \langmv{B_{{\mathrm{1}}}}  \langsym{,}  \langmv{B_{{\mathrm{2}}}} }   \langsym{(}  \langmv{x_{{\mathrm{1}}}}  \langsym{,}  \langmv{x_{{\mathrm{2}}}}  \langsym{)} \, \langkw{in} \, \langnt{e'}  \langsym{:}  \tau  \Rightarrow  \Gamma'  \mid  \overline{C}$, we have $\langmv{x_{\langmv{i}}}  \langsym{:}   \texttt{qbit}( \langmv{l_{\langmv{i}}} ) $ for $1 \leq i \leq 2$ and $\overline{C} =  \langmv{l_{{\mathrm{1}}}}  \sim  \langmv{l_{{\mathrm{2}}}}   \mathbin{+\mkern-10mu+}   \overline{ C' } $.
    Then $\langmv{l_{{\mathrm{1}}}}  \langsym{,}  \langmv{l_{{\mathrm{2}}}} \in  \mathrm{Used}( \Gamma )  \subseteq \cod(H)$ and thus the state can step by \rn{E-Meas2}.

  The other cases are trivial.
\end{proof}

\begin{lemma}\label{lem:substitution-var}
  If $\Theta  \mid  \Gamma  \langsym{,}  \langmv{x}  \langsym{:}  \tau'  \vdash  \langnt{e}  \langsym{:}  \tau  \Rightarrow  \Gamma'  \mid  \overline{C}$ and $\langmv{x'} \not\in \dom(\Gamma)$,
  then $\Theta  \mid  \Gamma  \langsym{,}  \langmv{x'}  \langsym{:}  \tau'  \vdash  \langsym{[}  \langmv{x'}  \slash  \langmv{x}  \langsym{]} \, \langnt{e}  \langsym{:}  \tau  \Rightarrow  \langsym{[}  \langmv{x'}  \slash  \langmv{x}  \langsym{]} \, \Gamma'  \mid  \overline{C}$.
\end{lemma}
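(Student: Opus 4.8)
The plan is to argue by induction on the derivation of $\Theta \mid \Gamma, x : \tau' \vdash e : \tau \Rightarrow \Gamma' \mid \overline{C}$, treating each rule of \cref{fig:typing-quantum,fig:typing-classical} (together with the reference rules \textsc{T-MkRef}, \textsc{T-Deref}, \textsc{T-Assign} and the rule \textsc{T-Sub}) in turn. The organizing observation is that a command sequence $\overline{C}$ mentions only locations and never term variables, so the renaming $[x'/x]$ is the identity on $\overline{C}$; this is precisely why the conclusion reports the same $\overline{C}$ as the hypothesis. Each case therefore reduces to checking (i) that the occurrences of $x$ inside $e$ are correctly retargeted to $x'$, and (ii) that $x$ is consistently retargeted inside the output environment $\Gamma'$.

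For the leaf rules that read the environment---\textsc{T-Var}, \textsc{T-Gate}, \textsc{T-Meas1}, \textsc{T-Meas2}---the check is a direct calculation. If the looked-up variable is $x$, then after substitution we read $x'$, whose type $\tau'$ has moved to the new binding $x' : \tau'$; if it is some other $y$, then $y \in \dom(\Gamma)$, and the freshness hypothesis $x' \notin \dom(\Gamma)$ forces $y \neq x'$, so the lookup is undisturbed. The purely structural rules---\textsc{T-Seq}, \textsc{T-If}, \textsc{T-While}, \textsc{T-Deref}, \textsc{T-Assign}---then follow by applying the induction hypothesis to each premise and reassembling, using that $[x'/x]$ commutes with the threading of intermediate environments (the $\Gamma_1$ emitted by one premise, once renamed, is exactly the input the next premise expects).

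The cases that need genuine care are the binders and the consuming rules. For \textsc{T-Init}, \textsc{T-MInit}, \textsc{T-MkRef}, and the measurement rules, the premise extends the environment with a fresh variable $y$; adopting the Barendregt convention I take $y$ distinct from both $x$ and $x'$, so that $[x'/x]$ passes through the binder without capture, the induction hypothesis applies to the premise (after commuting the new binding past $x' : \tau'$), and re-applying the rule closes the case. For \textsc{T-Free} one also needs the standard invariant that the free variables of a well-typed expression lie in the domain of its environment: when the freed variable is $x$ itself, $x$ is absent from the premise's environment and hence from $\mathrm{fv}(e)$ and from $\Gamma'$, so $[x'/x]$ affects only the \texttt{free} keyword and leaves the continuation and the output environment intact.

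I expect the main obstacle to be \textsc{T-Call}, where the renaming must be reconciled with the location substitution $\sigma_l = [\overline{l}/\overline{l'}]$ applied to the argument types, the return type, the command sequence, and the residual environment $\Gamma'$. Two facts drive the case: first, $[x'/x]$ and $\sigma_l$ act on disjoint syntactic categories (term variables versus locations) and so commute, and in fact $[x'/x]$ is the identity on types and on $\sigma_l \overline{C}$; second, the formal parameter names in the function type are bound and hence $\alpha$-convertible, so when $x$ coincides with an actual argument $x_i$, substituting $x'$ for it is absorbed by consistently $\alpha$-renaming the corresponding formal parameter in the argument list $\langle x_1 : \tau_1, \dots, x_n : \tau_n \rangle$ and in the residual environment $\Gamma'$. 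Discharging this case is thus a matter of pushing $[x'/x]$ through the components, re-verifying the side condition defining $\overline{l''}$, and checking that the $\alpha$-renamed instance of the rule still applies. Finally, \textsc{T-Sub} follows from the induction hypothesis together with the observation that $\Gamma' \leq \Gamma''$ is preserved by $[x'/x]$, since renaming neither adds nor removes locations and acts uniformly on the type bindings constrained by \textsc{S-TyEnv}.
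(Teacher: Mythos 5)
Your proposal is correct and follows exactly the paper's approach: the paper dispatches this lemma with a one-line ``straightforward induction on the typing derivation,'' and your argument is precisely that induction with the cases spelled out (the identity of $[x'/x]$ on command sequences, the freshness/Barendregt handling of binders, and the $\alpha$-renaming of formal parameters in \textsc{T-Call}). No gaps; you have simply made explicit what the paper leaves implicit.
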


\begin{proof}
  Prove by straightforward induction on the typing derivation.
\end{proof}

\begin{lemma}\label{lem:substitution-locs}
  Suppose that $\Theta  \mid  \Gamma  \vdash  \langnt{e}  \langsym{:}  \tau  \Rightarrow  \Gamma'  \mid   \overline{ C' } $ and $ \overline{ \langmv{l} } $ contains distinct location variables, and $ \mathrm{flv}( \Gamma )  \cap  \overline{ \langmv{l} }  = \emptyset$.
  Then $\Theta  \mid   \sigma_{ \langmv{l} }  \, \Gamma  \vdash   \sigma_{ \langmv{l} }  \, \langnt{e}  \langsym{:}   \sigma_{ \langmv{l} }  \, \tau  \Rightarrow   \sigma_{ \langmv{l} }  \, \Gamma'  \mid   \sigma_{ \langmv{l} }  \,  \overline{ C' } $, where $ \sigma_{ \langmv{l} }  = \langsym{[}   \overline{ \langmv{l} }   \slash   \overline{ \langmv{l'} }   \langsym{]}$.
\end{lemma}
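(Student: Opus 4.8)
The plan is to argue by induction on the derivation of $\Theta \mid \Gamma \vdash \langnt{e} : \tau \Rightarrow \Gamma' \mid \overline{C'}$. The organizing principle is that the location substitution $\sigma_l = [\overline{l}/\overline{l'}]$ commutes with every typing rule: each rule is syntax-directed and threads its environment, type, and command sequence compositionally, so applying $\sigma_l$ uniformly to the premises reproduces exactly the $\sigma_l$-substituted conclusion. The freshness hypothesis $\mathrm{flv}(\Gamma) \cap \overline{l} = \emptyset$ is present precisely to rule out capture, and I would first check that it propagates to each sub-derivation. In the allocation and deallocation rules the premise environment has the same free location variables as the conclusion's (e.g. $\mathrm{flv}(\Gamma, x:\texttt{qbit}(l)) = \mathrm{flv}(\Gamma, l) = \mathrm{flv}(\Gamma) \cup \{l\}$), so the induction hypothesis applies without adjustment, while in rules that bind fresh locations I would re-establish freshness by $\alpha$-renaming.

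Before the case analysis I would record three elementary distributivity facts about $\sigma_l$: it distributes over environment extension ($\sigma_l(\Gamma, x:\tau) = \sigma_l\Gamma, x:\sigma_l\tau$ and $\sigma_l(\Gamma, l_0) = \sigma_l\Gamma, \sigma_l l_0$), over the type constructors, and over command concatenation ($\sigma_l(\overline{C_1} \doubleplus \overline{C_2}) = \sigma_l\overline{C_1} \doubleplus \sigma_l\overline{C_2}$). With these in hand the structural and classical cases --- \rn{T-Var}, \rn{T-Seq}, \rn{T-If}, \rn{T-While}, \rn{T-Gate}, \rn{T-Meas1}, and the reference rules --- are immediate: invoke the induction hypothesis on each premise and reassemble. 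The location-sensitive rules \rn{T-Init}, \rn{T-MInit}, \rn{T-Free}, and \rn{T-Meas2} are only slightly more involved, since the consumed or produced location appears in the generated command; here I must additionally observe that $\sigma_l$ sends $\langkw{alloc}(l_0)$, $\langkw{free}(l_0)$, and $l_1 \sim l_2$ to the commands on the substituted locations, which follows from distributivity of $\sigma_l$ over command formation together with the freshness hypothesis.

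The main obstacle is \rn{T-Call}, because that rule already performs a location substitution of its own: the callee's bound parameters are instantiated at the call site by some $\sigma_m = [\overline{m}/\overline{l'}]$, so applying the outer substitution forces me to reason about the composite $\sigma_l \circ \sigma_m$. Two observations make this tractable. First, function types are universally quantified over their location parameters when they are introduced in \rn{T-FunDecl}, so each $\Theta(f)$ is closed with respect to free locations; consequently $\sigma_l$ leaves the looked-up type untouched and acts only on the caller-supplied arguments $\overline{m}$ and the surrounding context. Second, after $\alpha$-renaming the bound parameters $\overline{l'}$ to avoid both $\overline{l}$ and the domain of $\sigma_l$, the two substitutions commute in the form $\sigma_l(\sigma_m \tau) = \sigma_{\sigma_l m}\,\tau$, and likewise on the argument types, the returned environment $\Gamma'$, and the generated command sequence. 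I would isolate this as a short substitution-composition sublemma, which is where essentially all the technical care lives.

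Finally, still within \rn{T-Call}, I would check that the set $\overline{l''} = \overline{m} \setminus \bigl(\bigcup_i \mathrm{flv}(\sigma_m \tau_i)\bigr)$ of locations allocated inside the callee transforms correctly, which reduces to commuting $\sigma_l$ past $\mathrm{flv}$ and set difference --- routine once the composition sublemma is available. The remaining cases follow the same template, so the overall proof is a long but mechanical commutation argument whose only delicate point is the composition of the outer substitution with the one internal to function calls.
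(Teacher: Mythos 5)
Your proposal is correct and follows essentially the same route as the paper: the paper's own proof consists solely of the remark ``Straightforward induction on the typing derivation,'' which is precisely the induction you carry out. The details you supply---the distributivity facts, and especially the substitution-composition argument for \textsc{T-Call} that relies on function types being closed under their $\Pi$-bound location parameters (by \textsc{T-FunDecl})---are a faithful and more informative elaboration of that one-line proof.
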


\begin{proof}
  Straightforward induction on the typing derivation.
\end{proof}

\begin{lemma}\label{lem:C-remove-loop}
  If $G  \mid  \langmv{L}  \vdash    \overline{ C_{{\mathrm{1}}} }  ^*   \mathbin{+\mkern-10mu+}   \overline{ C_{{\mathrm{2}}} }   \Rightarrow  \langmv{L'}$, then $G  \mid  \langmv{L}  \vdash   \overline{ C_{{\mathrm{2}}} }   \Rightarrow  \langmv{L'}$.
\end{lemma}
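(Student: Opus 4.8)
The plan is to prove the lemma by a short inversion argument on the connectivity-checking derivation, exploiting the defining feature of rule \textsc{C-Loop}: a loop leaves the set of free locations unchanged. No induction should be required, since we only need to strip off a single loop command.

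First I would invert the assumed derivation of $G \mid L \vdash \overline{C_1}^{*} \mathbin{+\mkern-10mu+} \overline{C_2} \Rightarrow L'$. Because the command sequence is a concatenation whose left operand is the loop command $\overline{C_1}^{*}$, the derivation must end with an application of \textsc{C-Concat}, producing some intermediate location set $L_0$ together with subderivations $G \mid L \vdash \overline{C_1}^{*} \Rightarrow L_0$ and $G \mid L_0 \vdash \overline{C_2} \Rightarrow L'$.

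Next I would invert the subderivation $G \mid L \vdash \overline{C_1}^{*} \Rightarrow L_0$. The only rule whose conclusion has a loop command $\overline{C_1}^{*}$ on the left of the turnstile is \textsc{C-Loop}, and its conclusion has the form $G \mid L \vdash \overline{C}^{*} \Rightarrow L$; that is, the output location set is syntactically identical to the input set. Hence $L_0 = L$. Substituting $L_0 = L$ into the second premise yields exactly $G \mid L \vdash \overline{C_2} \Rightarrow L'$, the desired conclusion.

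The only point demanding care is justifying that these inversions are clean, namely that \textsc{C-Concat} is the rule splitting at the top-level $\mathbin{+\mkern-10mu+}$ and that \textsc{C-Loop} is the sole rule deriving a judgment about $\overline{C_1}^{*}$, forcing the output set to coincide with the input set. This last fact is precisely the invariant built into \textsc{C-Loop} (mirroring the \textsc{T-While} requirement that the qubit allocation state be preserved across the loop body), so the substantive step is really an appeal to that design choice, and everything else is routine substitution.
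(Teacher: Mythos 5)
Your proposal matches the paper's proof exactly: invert \textsc{C-Concat} to obtain the intermediate location set and the two subderivations, then use the fact that \textsc{C-Loop} forces its output set to equal its input set to identify the intermediate set with $L$, yielding the conclusion by substitution. No differences in approach or level of detail worth noting.
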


\begin{proof}
  By inversion on \rn{C-Concat}, we have
  \begin{gather*}
    G  \mid  \langmv{L}  \vdash    \overline{ C_{{\mathrm{1}}} }  ^*   \Rightarrow  \langmv{L''}
    \andalso G  \mid  \langmv{L''}  \vdash   \overline{ C_{{\mathrm{2}}} }   \Rightarrow  \langmv{L'}
  \end{gather*}
  for some $\langmv{L''}$.
  Here, by \rn{C-Loop}, $\langmv{L''} = \langmv{L}$.
  Therefore $G  \mid  \langmv{L}  \vdash   \overline{ C_{{\mathrm{2}}} }   \Rightarrow  \langmv{L'}$.
\end{proof}

\begin{lemma}\label{lem:C-loop-fold}
  If $G  \mid  \langmv{L}  \vdash    \overline{ C_{{\mathrm{1}}} }  ^*   \Rightarrow  \langmv{L}$, then $G  \mid  \langmv{L}  \vdash    \overline{ C_{{\mathrm{1}}} }   \mathbin{+\mkern-10mu+}   \overline{ C_{{\mathrm{1}}} }  ^*   \Rightarrow  \langmv{L}$.
\end{lemma}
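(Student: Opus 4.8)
The plan is to prove this directly at the level of the connectivity-checking rules by inverting the loop rule on the hypothesis and then reassembling the derivation with \rn{C-Concat}. Intuitively, the statement says that a loop $\overline{C_1}^*$ whose validity leaves the free-location set $\langmv{L}$ unchanged can be unfolded once: one copy of the body $\overline{C_1}$ can be peeled off the front while the loop itself is kept behind it, and the whole composite still returns $\langmv{L}$. The single observation that makes this work is that \rn{C-Loop} pins the pre- and post-location sets to be identical, so the unfolded body can be slotted in front without disturbing the intermediate state.

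First I would invert the derivation of $G  \mid  \langmv{L}  \vdash    \overline{ C_{{\mathrm{1}}} }  ^*   \Rightarrow  \langmv{L}$. The only rule concluding a judgment whose subject has the syntactic form $\overline{C_1}^*$ is \rn{C-Loop}, whose premise is $G  \mid  \langmv{L}  \vdash   \overline{ C_{{\mathrm{1}}} }   \Rightarrow  \langmv{L}$, with the same $\langmv{L}$ appearing as both input and output. This supplies exactly the body-validity fact needed for the first factor of the concatenation. Next I would apply \rn{C-Concat}: for its first premise I use the inverted fact $G  \mid  \langmv{L}  \vdash   \overline{ C_{{\mathrm{1}}} }   \Rightarrow  \langmv{L}$, so the intermediate location set is again $\langmv{L}$, and for its second premise I reuse the original hypothesis $G  \mid  \langmv{L}  \vdash    \overline{ C_{{\mathrm{1}}} }  ^*   \Rightarrow  \langmv{L}$ unchanged. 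Because the output $\langmv{L}$ of the first premise matches the input of the second, the two compose to yield $G  \mid  \langmv{L}  \vdash    \overline{ C_{{\mathrm{1}}} }   \mathbin{+\mkern-10mu+}   \overline{ C_{{\mathrm{1}}} }  ^*   \Rightarrow  \langmv{L}$, which is the goal.

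There is essentially no obstacle here; the only thing being used is the fixed-point shape of \rn{C-Loop} that forces input and output to coincide, which is precisely what lets the body and the loop be chained at the common state $\langmv{L}$. This lemma is the dual of \cref{lem:C-remove-loop} (which collapses a loop) and, together with it, justifies treating $\overline{C}^*$ interchangeably with $\overline{C}$ when simulating command sequences in the type-checking algorithm.
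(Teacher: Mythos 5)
Your proof is correct and is exactly the ``straightforward'' argument the paper leaves implicit: invert \rn{C-Loop} on the hypothesis to obtain $G \mid \langmv{L} \vdash \overline{C_1} \Rightarrow \langmv{L}$, then chain this with the original hypothesis via \rn{C-Concat}, which works precisely because \rn{C-Loop} forces the pre- and post-state to coincide. Nothing is missing.
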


\begin{proof}
  Straightforward.
\end{proof}

\begin{lemma}\label{lem:C-inversion-concat-arbitrary}
  $G  \mid  \langmv{L}  \vdash   \overline{ C }   \Rightarrow  \langmv{L'}$ if and only if $G  \mid  \langmv{L}  \vdash   \overline{ C_{{\mathrm{1}}} }   \Rightarrow  \langmv{L''}$ and $G  \mid  \langmv{L''}  \vdash   \overline{ C_{{\mathrm{2}}} }   \Rightarrow  \langmv{L}$ for all $ \overline{ C_{{\mathrm{1}}} } ,  \overline{ C_{{\mathrm{2}}} } $ such that $ \overline{ C }  =  \overline{ C_{{\mathrm{1}}} }   \mathbin{+\mkern-10mu+}   \overline{ C_{{\mathrm{2}}} } $.
\end{lemma}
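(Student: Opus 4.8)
The plan is to prove both implications of the biconditional, reading the right-hand side as the assertion that for \emph{every} factorisation $\overline{C} = \overline{C}_1 \doubleplus \overline{C}_2$ there exists an intermediate set $L''$ with $G \mid L \vdash \overline{C}_1 \Rightarrow L''$ and $G \mid L'' \vdash \overline{C}_2 \Rightarrow L'$. The $(\Leftarrow)$ direction is then immediate: I instantiate the universally quantified right-hand side at the trivial factorisation $\overline{C}_1 = \epsilon$, $\overline{C}_2 = \overline{C}$, obtaining some $L''$ with $G \mid L \vdash \epsilon \Rightarrow L''$ and $G \mid L'' \vdash \overline{C} \Rightarrow L'$. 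Since the empty command sequence leaves the free-location set unchanged, $L'' = L$, and hence $G \mid L \vdash \overline{C} \Rightarrow L'$ as required.

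For the $(\Rightarrow)$ direction I fix an arbitrary factorisation $\overline{C} = \overline{C}_1 \doubleplus \overline{C}_2$ and argue by induction on the number of atomic commands in the prefix $\overline{C}_1$. When $\overline{C}_1 = \epsilon$ the sequence is just $\overline{C}_2$, and $L'' = L$ works because $G \mid L \vdash \epsilon \Rightarrow L$ holds trivially and $G \mid L \vdash \overline{C}_2 \Rightarrow L'$ is the hypothesis. When $\overline{C}_1 = C \doubleplus \overline{C}_1'$ with $C$ its leftmost atomic command, I view the whole sequence as $C \doubleplus (\overline{C}_1' \doubleplus \overline{C}_2)$ and invert the derivation of $G \mid L \vdash C \doubleplus (\overline{C}_1' \doubleplus \overline{C}_2) \Rightarrow L'$ through \rn{C-Concat}, obtaining an intermediate $L_0$ with $G \mid L \vdash C \Rightarrow L_0$ and $G \mid L_0 \vdash \overline{C}_1' \doubleplus \overline{C}_2 \Rightarrow L'$. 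The induction hypothesis applied to $\overline{C}_1'$ supplies $L''$ with $G \mid L_0 \vdash \overline{C}_1' \Rightarrow L''$ and $G \mid L'' \vdash \overline{C}_2 \Rightarrow L'$; recombining $G \mid L \vdash C \Rightarrow L_0$ and $G \mid L_0 \vdash \overline{C}_1' \Rightarrow L''$ via \rn{C-Concat} yields $G \mid L \vdash \overline{C}_1 \Rightarrow L''$, which together with $G \mid L'' \vdash \overline{C}_2 \Rightarrow L'$ closes the inductive step.

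The step I expect to be the main obstacle is the inversion through \rn{C-Concat} that peels off the leftmost atomic command, since a priori the derivation of $G \mid L \vdash \overline{C} \Rightarrow L'$ might have applied \rn{C-Concat} at some other split of the sequence, and re-splitting at a chosen point is essentially the content of the lemma. What licenses the leftmost peel is that \rn{C-Concat} is the only rule whose conclusion matches a top-level $\doubleplus$, while the rules for $\langkw{alloc}$, $\langkw{free}$, $\sim$, branches, and loops are syntax-directed on single commands; together these facts make $\Rightarrow$ deterministic in its output and let me normalise any factorisation to a right-leaning form in which $C$ sits at the top level. I would therefore first record, as short preliminaries, that command sequences form a monoid under $\doubleplus$ with unit $\epsilon$ and that $G \mid L \vdash \overline{C} \Rightarrow (\cdot)$ assigns at most one output set to each pair $(L, \overline{C})$; with these in hand the induction above goes through mechanically, and the backward direction needs nothing beyond a single \rn{C-Concat}.
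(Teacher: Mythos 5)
Your proof is correct, and in substance it is a reorganization of the paper's argument rather than a genuinely different one: the paper disposes of this lemma with a one-line ``straightforward induction on the derivation of $G \mid L \vdash \overline{C} \Rightarrow L'$,'' whereas you induct on the length of the prefix $\overline{C}_1$ and concentrate all of the derivation-level reasoning in the leftmost-peel inversion. Your instinct that this peel is the main obstacle is exactly right, and it is precisely where the paper's induction lives: since \textsc{C-Concat} may have been applied at a split other than the one you want --- including the degenerate split $\epsilon \doubleplus \overline{C}$, which is legitimate because sequences form a monoid under $\doubleplus$ --- the peel must itself be proved by induction on the derivation, re-associating the two subderivations with a fresh application of \textsc{C-Concat} in the misaligned case; your alternative phrasing via ``normalising to a right-leaning form'' amounts to the same induction. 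Once that is written out, the two proofs coincide in content, so what your organization buys is mainly a cleanly isolated inversion lemma that could be reused elsewhere, at the cost of a two-layer structure where the paper has one. Two smaller observations: full determinism of the judgment, which you list as a preliminary, is not actually needed --- the monoid laws together with inversion for $\epsilon$ (any derivation of $G \mid L \vdash \epsilon \Rightarrow L''$ forces $L'' = L$) suffice, and indeed both directions of your argument quietly use the axiom $G \mid L \vdash \epsilon \Rightarrow L$, which is not displayed in the paper's connectivity rules but is implicitly assumed in its own proofs; and your reading of the statement, with $L'$ as the final output and $L''$ existentially quantified inside a universal quantification over factorizations, silently corrects an evident typo in the lemma as printed (its final $\Rightarrow L$ should be $\Rightarrow L'$) and matches how the paper actually invokes the lemma in the type-preservation proof.
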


\begin{proof}
  Straightforward induction on the derivation of $G  \mid  \langmv{L}  \vdash   \overline{ C }   \Rightarrow  \langmv{L'}$.
\end{proof}

\begin{lemma}\label{lem:C-branch-distribute}
  $G  \mid  \langmv{L}  \vdash   \overline{ C_{{\mathrm{1}}} }   \mathbin{+\mkern-10mu+}  \langsym{(}     \overline{ C_{{\mathrm{2}}} }    \lor   \overline{ C_{{\mathrm{3}}} }    \langsym{)}  \Rightarrow  \langmv{L'}$
  if and only if $G  \mid  \langmv{L}  \vdash   \overline{ C_{{\mathrm{1}}} }   \mathbin{+\mkern-10mu+}   \overline{ C_{{\mathrm{2}}} }   \Rightarrow  \langmv{L'}$ and $G  \mid  \langmv{L}  \vdash   \overline{ C_{{\mathrm{1}}} }   \mathbin{+\mkern-10mu+}   \overline{ C_{{\mathrm{3}}} }   \Rightarrow  \langmv{L'}$.
\end{lemma}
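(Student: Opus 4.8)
The plan is to prove the biconditional by inversion, peeling the concatenation at the boundary right after $\overline{C_1}$ with \cref{lem:C-inversion-concat-arbitrary} and then treating the branch with \rn{C-If} and \rn{C-Concat}. For the forward direction, I would start from $G \mid L \vdash \overline{C_1} \mathbin{+\mkern-10mu+} (\overline{C_2} \lor \overline{C_3}) \Rightarrow L'$ and apply \cref{lem:C-inversion-concat-arbitrary} at the split $\overline{C_1}$ / $(\overline{C_2} \lor \overline{C_3})$, obtaining an intermediate $L''$ with $G \mid L \vdash \overline{C_1} \Rightarrow L''$ and $G \mid L'' \vdash (\overline{C_2} \lor \overline{C_3}) \Rightarrow L'$. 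Since the branch sequence can only be derived by \rn{C-If}, inverting it gives $G \mid L'' \vdash \overline{C_2} \Rightarrow L'$ and $G \mid L'' \vdash \overline{C_3} \Rightarrow L'$; recombining each with $G \mid L \vdash \overline{C_1} \Rightarrow L''$ via \rn{C-Concat} yields the two conjuncts $G \mid L \vdash \overline{C_1} \mathbin{+\mkern-10mu+} \overline{C_2} \Rightarrow L'$ and $G \mid L \vdash \overline{C_1} \mathbin{+\mkern-10mu+} \overline{C_3} \Rightarrow L'$.

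For the backward direction, I would invert both hypotheses with \cref{lem:C-inversion-concat-arbitrary} to get intermediate states $L_2''$ and $L_3''$ with $G \mid L \vdash \overline{C_1} \Rightarrow L_2''$, $G \mid L_2'' \vdash \overline{C_2} \Rightarrow L'$ and $G \mid L \vdash \overline{C_1} \Rightarrow L_3''$, $G \mid L_3'' \vdash \overline{C_3} \Rightarrow L'$. To apply \rn{C-If} I need $L_2'' = L_3''$, which I would derive from the determinacy of the connectivity relation: for fixed $G$, $L$, and $\overline{C_1}$ there is a unique $L''$ with $G \mid L \vdash \overline{C_1} \Rightarrow L''$. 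This determinacy is a short structural induction on $\overline{C_1}$ — \rn{C-Alloc} and \rn{C-Free} change the allocation set by a fixed location, \rn{C-Merge} leaves it unchanged, \rn{C-Loop} forces the output to equal its input, \rn{C-If} forces it to the (inductively unique) common output of its branches, and \rn{C-Concat} composes these uniquely. Setting $L'' \coloneqq L_2'' = L_3''$, the judgments $G \mid L'' \vdash \overline{C_2} \Rightarrow L'$ and $G \mid L'' \vdash \overline{C_3} \Rightarrow L'$ combine under \rn{C-If} into $G \mid L'' \vdash (\overline{C_2} \lor \overline{C_3}) \Rightarrow L'$, and a final \rn{C-Concat} with $G \mid L \vdash \overline{C_1} \Rightarrow L''$ closes the goal.

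The main obstacle is exactly the equality $L_2'' = L_3''$ in the backward direction. The forward direction is pure rule-chasing once the concatenation is split, but in the backward direction the two hypotheses are inverted independently, so a priori nothing forces their intermediate allocation states to agree; the determinacy fact is what reconciles them. I would therefore isolate determinacy as a small auxiliary lemma (the connectivity relation is functional in its output set) and prove it first, after which the rest of the argument is routine rule composition.
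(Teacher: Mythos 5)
Your proposal is correct and follows the same skeleton as the paper's proof: both directions peel off $\overline{C_1}$ via \cref{lem:C-inversion-concat-arbitrary}, handle the branch by inversion on (resp.\ application of) \textsc{C-If}, and reassemble with \textsc{C-Concat}. The one point of divergence is instructive. In the backward direction the paper's proof simply writes down a \emph{single} intermediate set $L''$ serving both hypotheses at once, i.e., it silently appeals to exactly the determinacy property you isolate --- that for fixed $G$, $L$, $\overline{C_1}$ the judgment $G \mid L \vdash \overline{C_1} \Rightarrow L''$ determines $L''$ uniquely. Since \cref{lem:C-inversion-concat-arbitrary} only existentially produces an intermediate set per inversion, inverting the two hypotheses independently does, as you say, yield a priori distinct $L_2''$ and $L_3''$, and something must reconcile them. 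Your auxiliary determinacy lemma, proved by structural induction over the rules (\textsc{C-Alloc}/\textsc{C-Free} perform fixed updates, \textsc{C-Merge} and \textsc{C-Loop} preserve the set, \textsc{C-If} forces a common output of its branches, \textsc{C-Concat} composes), is precisely the missing justification and is straightforward to discharge. So your version is, if anything, slightly more rigorous than the paper's own proof: the cost is one easy extra lemma, and the benefit is that the backward direction's inversion step is fully explicit rather than implicit.
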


\begin{proof}
  ($\Rightarrow$)
  By \cref{lem:C-inversion-concat-arbitrary}, we have $G  \mid  \langmv{L}  \vdash   \overline{ C_{{\mathrm{1}}} }   \Rightarrow  \langmv{L''}$ and $G  \mid  \langmv{L''}  \vdash     \overline{ C_{{\mathrm{2}}} }    \lor   \overline{ C_{{\mathrm{3}}} }    \Rightarrow  \langmv{L'}$ for some $\langmv{L''}$.
  By inversion on \rn{C-If}, we have $G  \mid  \langmv{L''}  \vdash   \overline{ C_{{\mathrm{2}}} }   \Rightarrow  \langmv{L'}$ and $G  \mid  \langmv{L''}  \vdash   \overline{ C_{{\mathrm{3}}} }   \Rightarrow  \langmv{L'}$.
  Therefore, we have $G  \mid  \langmv{L}  \vdash   \overline{ C_{{\mathrm{1}}} }   \mathbin{+\mkern-10mu+}   \overline{ C_{{\mathrm{2}}} }   \Rightarrow  \langmv{L'}$ and $G  \mid  \langmv{L}  \vdash   \overline{ C_{{\mathrm{1}}} }   \mathbin{+\mkern-10mu+}   \overline{ C_{{\mathrm{3}}} }   \Rightarrow  \langmv{L'}$ by \rn{C-Concat}.

  ($\Leftarrow$)
  By \cref{lem:C-inversion-concat-arbitrary}, we have
  \begin{gather*}
    G  \mid  \langmv{L}  \vdash   \overline{ C_{{\mathrm{1}}} }   \Rightarrow  \langmv{L''}
    \andalso G  \mid  \langmv{L''}  \vdash   \overline{ C_{{\mathrm{2}}} }   \Rightarrow  \langmv{L'}
    \andalso G  \mid  \langmv{L''}  \vdash   \overline{ C_{{\mathrm{3}}} }   \Rightarrow  \langmv{L'}.
  \end{gather*}
  By \rn{C-If}, we have $G  \mid  \langmv{L''}  \vdash     \overline{ C_{{\mathrm{2}}} }    \lor   \overline{ C_{{\mathrm{3}}} }    \Rightarrow  \langmv{L'}$.
  Finally we have $G  \mid  \langmv{L}  \vdash   \overline{ C_{{\mathrm{1}}} }   \mathbin{+\mkern-10mu+}  \langsym{(}     \overline{ C_{{\mathrm{2}}} }    \lor   \overline{ C_{{\mathrm{3}}} }    \langsym{)}  \Rightarrow  \langmv{L'}$ by \rn{C-Concat}.
\end{proof}

\begin{lemma}{(Weakening)}\label{lem:weakening-tyenv}
  If $\Theta  \mid  \Gamma  \vdash  \langnt{e}  \langsym{:}  \tau  \Rightarrow  \Gamma'  \mid   \overline{ C } $ and $\Gamma'' \cap \Gamma = \emptyset$,
  then $\Theta  \mid  \Gamma  \langsym{,}  \Gamma''  \vdash  \langnt{e}  \langsym{:}  \tau  \Rightarrow  \Gamma'  \langsym{,}  \Gamma''  \mid   \overline{ C } $.
\end{lemma}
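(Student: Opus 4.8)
The plan is to prove the statement by induction on the derivation of $\Theta \mid \Gamma \vdash e : \tau \Rightarrow \Gamma' \mid \overline{C}$, threading the disjoint environment $\Gamma''$ through each sub-derivation and re-applying the very same typing rule at the end. The disjointness hypothesis $\Gamma'' \cap \Gamma = \emptyset$, together with the standing convention that all names and locations inside an environment are distinct, is precisely what makes the combined environments $\Gamma, \Gamma''$ and $\Gamma', \Gamma''$ well-formed and guarantees that inserting $\Gamma''$ never clashes with a binding introduced along the derivation. Throughout I read environments up to reordering, so that appending $\Gamma''$ on the right is harmless.

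For the leaf rules the claim is immediate. In \rn{T-Var} we have $\Gamma' = \Gamma$ and $x : \tau \in \Gamma \subseteq \Gamma, \Gamma''$, so re-applying \rn{T-Var} under $\Gamma, \Gamma''$ gives the goal; \rn{T-Gate} is analogous. The case \rn{T-Call} is also effectively a leaf, since its premises only inspect $\Theta(f)$, the substitution $\sigma_l$, and the freed locations $\overline{l''}$, none of which depends on the surrounding context. Hence it suffices to re-instantiate the rule with the context part of the environment replaced by its weakening: because $\Gamma''$ is disjoint from $\overline{l''}$ and from the argument bindings, the weakened input and the weakened output $\Gamma, \Gamma'', \sigma_l \Gamma'$ are derivable verbatim by the same instance of \rn{T-Call}.

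For the structural rules I would apply the induction hypothesis to each sub-derivation and reassemble, threading $\Gamma''$ through every intermediate environment. In \rn{T-Seq} the middle environment simply becomes its weakening; in \rn{T-While} the loop invariant is replaced by the weakened invariant, which is still returned unchanged by both premises; and in \rn{T-If} both branches continue to agree on the weakened output environment, so the rule reapplies. The binding rules \rn{T-Init}, \rn{T-Free}, \rn{T-Meas1}, \rn{T-Meas2}, and \rn{T-MkRef} follow one common pattern: the relevant location already lies outside $\Gamma''$ by disjointness, the newly bound term variable can be chosen fresh for $\Gamma''$, and so the induction hypothesis applies to the body under the enlarged (still disjoint) context, after which the rule is re-applied.

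The main obstacle will be the subsumption rule \rn{T-Sub}. There the induction hypothesis applied to the premise whose output is $\Gamma_1$ yields $\Theta \mid \Gamma, \Gamma'' \vdash e : \tau \Rightarrow \Gamma_1, \Gamma'' \mid \overline{C}$, but to conclude with output $\Gamma_2, \Gamma''$ I must re-establish the subtyping $\Gamma_1, \Gamma'' \leq \Gamma_2, \Gamma''$ from the original $\Gamma_1 \leq \Gamma_2$. This calls for a small auxiliary fact that the relation $\leq$ of \rn{S-TyEnv} is preserved under disjoint extension, i.e.\ that equality of the location components and the subset relation on the variable components are both stable when the same $\Gamma''$ is appended to each side. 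This follows directly from the definition of \rn{S-TyEnv}, but it should be isolated and discharged first so that the \rn{T-Sub} case is forced through; with it in hand the remaining cases are routine.
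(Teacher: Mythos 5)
Your proposal is correct and takes essentially the same approach as the paper: the paper's entire proof is ``Straightforward induction on the typing derivation,'' and your case analysis (re-applying each rule after threading the disjoint $\Gamma''$ through, with environments read up to reordering) is exactly that induction spelled out. The auxiliary observation you isolate for \rn{T-Sub} --- that \rn{S-TyEnv} subtyping is stable under appending the same disjoint $\Gamma''$ to both sides --- follows immediately from the definition and is a reasonable detail to make explicit, but it does not constitute a different route.
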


\begin{proof}
  Straightforward induction on the typing derivation.
\end{proof}

\begin{lemma}{(Weakening2)}\label{lem:weakening-ftyenv}
  If $\Theta  \mid  \Gamma  \vdash  \langnt{e}  \langsym{:}  \tau  \Rightarrow  \Gamma'  \mid   \overline{ C } $ and $\Theta \subseteq \Theta'$,
  then $\Theta'  \mid  \Gamma  \vdash  \langnt{e}  \langsym{:}  \tau  \Rightarrow  \Gamma'  \mid   \overline{ C } $.
\end{lemma}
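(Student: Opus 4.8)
The plan is to proceed by a straightforward induction on the derivation of $\Theta \mid \Gamma \vdash e : \tau \Rightarrow \Gamma' \mid \overline{C}$. The guiding observation is that the function type environment $\Theta$ plays an entirely passive role in the typing rules for expressions: in every rule except \rn{T-Call} it is simply threaded unchanged from the conclusion into each premise, and it is never consulted negatively. Consequently, all of the structural cases (\rn{T-Var}, \rn{T-Init}, \rn{T-MInit}, \rn{T-Free}, \rn{T-Meas1}, \rn{T-Meas2}, \rn{T-Gate}, \rn{T-Seq}, \rn{T-If}, \rn{T-While}, \rn{T-MkRef}, \rn{T-Deref}, \rn{T-Assign}, and \rn{T-Sub}) are handled uniformly: I apply the induction hypothesis to each premise to replace $\Theta$ by $\Theta'$, and then reassemble the very same rule instance with $\Theta'$ in place of $\Theta$. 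Since neither $\Gamma$, $\tau$, $\Gamma'$, nor $\overline{C}$ is affected, the conclusion follows immediately.

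The only case that consults $\Theta$ substantively is \rn{T-Call}, whose first premise looks up the function type $\Theta(f)$ of the callee $f$. Here I rely on the convention, fixed when $\Theta$ is introduced, that a function type environment assigns at most one type to each function name. Since $\Theta \subseteq \Theta'$ and the binding $f : \theta$ used in this derivation lies in $\Theta$, the same binding lies in $\Theta'$, so $\Theta'(f) = \Theta(f) = \theta$ and the lookup succeeds with an identical result. The side conditions defining the substitution $\sigma_l$ and the residual locations $\overline{l''}$ depend only on $\theta$ and on the caller's environment, hence are unchanged, and the conclusion is rebuilt verbatim with $\Theta'$ replacing $\Theta$. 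In particular, \rn{T-Call} does not re-typecheck the body of $f$; it merely reads off its declared type from $\Theta$, so no induction hypothesis on a function body is required.

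I do not expect any genuine obstacle, as this is a routine environment-weakening result whose entire content reduces to the determinacy of lookup in $\Theta$ exploited in the \rn{T-Call} case. The one point worth stating explicitly is simply that enlarging $\Theta$ to $\Theta'$ cannot invalidate an existing function-name lookup, which is immediate from $\Theta$ being a finite map over distinct function names. This lemma, together with \cref{lem:weakening-tyenv}, is exactly what is needed to justify extending the ambient environments when relating runtime states across function calls in the type preservation argument.
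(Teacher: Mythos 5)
Your proof is correct and matches the paper's approach exactly: the paper also establishes this lemma by a straightforward induction on the typing derivation, with the only substantive case being \textsc{T-Call}, where the lookup $\Theta(f)$ is preserved under the extension $\Theta \subseteq \Theta'$. Your explicit treatment of the lookup determinacy in the \textsc{T-Call} case is a reasonable elaboration of what the paper leaves implicit.
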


\begin{proof}
  Straightforward induction on the typing derivation.
\end{proof}

\begin{lemma}\label{lem:inv-ty-fundecl}
  If $\Theta  \vdash  \langmv{D}$ and $\Theta  \langsym{(}  \langmv{f}  \langsym{)} =  \Pi   \overline{ \langmv{l} }   . \braket{ \langmv{x_{{\mathrm{1}}}}  \langsym{:}  \tau_{\langmv{n}}  \langsym{,} \, .. \, \langsym{,}  \langmv{x_{\langmv{n}}}  \langsym{:}  \tau_{\langmv{n}} } \xrightarrow{   \overline{ C }   } \braket{ \Gamma  |  \tau }  \in \Theta$,
  then there exists $\Theta' \subset \Theta$ and $\langmv{D'} \subset \langmv{D}$ and $\langmv{L}$ such that
  $\Theta'  \vdash  \langmv{D'}$ and $\Theta'  \mid  \langmv{L}  \langsym{,}  \langmv{x_{{\mathrm{1}}}}  \langsym{:}  \tau_{{\mathrm{1}}}  \langsym{,} \, .. \, \langsym{,}  \langmv{x_{\langmv{n}}}  \langsym{:}  \tau_{\langmv{n}}  \vdash  \langnt{e}  \langsym{:}  \tau  \Rightarrow  \Gamma  \mid   \overline{ C } $ and $ \overline{ \langmv{l} }  = \langmv{L} \uplus (\bigcup_{i=1}^n  \mathrm{flv}( \tau_{\langmv{i}} ) )$.
\end{lemma}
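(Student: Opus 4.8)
The plan is to proceed by induction on the derivation of $\Theta  \vdash  \langmv{D}$, inverting the rule applied at its root. Since the hypothesis presupposes that $\Theta \langsym{(} \langmv{f} \langsym{)}$ is defined, we have $\langmv{f} \in \Theta$, so $\Theta$ is non-empty; the root rule therefore cannot be the base case for the empty declaration sequence and must be \rn{T-FunDecl}. That rule extends both the function environment and the declaration sequence on the right by a single function, so I would write the concluding instance of \rn{T-FunDecl} as $\Theta_{0}  \langsym{,}  \langmv{g}  \langsym{:}  \theta_{g}  \vdash  \langmv{D_{0}}  \langsym{,}   \langmv{g}  \mapsto [   \overline{ \langmv{l_{g}} }   ]( \ldots ) \langnt{e_{g}} $, whence $\Theta = \Theta_{0}  \langsym{,}  \langmv{g}  \langsym{:}  \theta_{g}$ and $\langmv{D} = \langmv{D_{0}}  \langsym{,}  (\langmv{g}  \mapsto [   \overline{ \langmv{l_{g}} }   ]( \ldots ) \langnt{e_{g}})$, and then split on whether $\langmv{g} = \langmv{f}$.

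In the case $\langmv{g} = \langmv{f}$, uniqueness of function names (standard for $\langname$, where each function is declared exactly once) gives $\theta_{g} = \Theta \langsym{(} \langmv{f} \langsym{)} = \theta$; matching $\theta_{g}$ against the function type named in the statement identifies the argument bindings, the command sequence $ \overline{ C } $, the return type $\tau$, the resulting environment $\Gamma$, and the location parameters $ \overline{ \langmv{l} } $. The premises of \rn{T-FunDecl} then supply exactly the components the lemma demands: $\Theta_{0}  \vdash  \langmv{D_{0}}$, the body derivation $\Theta_{0}  \mid   \overline{ \langmv{l'} }   \langsym{,}  \langmv{x_{{\mathrm{1}}}}  \langsym{:}  \tau_{{\mathrm{1}}}  \langsym{,} \, .. \, \langsym{,}  \langmv{x_{\langmv{n}}}  \langsym{:}  \tau_{\langmv{n}}  \vdash  \langnt{e}  \langsym{:}  \tau  \Rightarrow  \Gamma  \mid   \overline{ C } $, and the equation $ \overline{ \langmv{l} }  =  \overline{ \langmv{l'} }  \uplus (\bigcup_{i=1}^{n}  \mathrm{flv}( \tau_{\langmv{i}} ))$. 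Taking $\Theta' = \Theta_{0}$, $\langmv{D'} = \langmv{D_{0}}$, and $\langmv{L} =  \overline{ \langmv{l'} } $ discharges this case, with $\Theta' \subset \Theta$ and $\langmv{D'} \subset \langmv{D}$ immediate since precisely one binding and one declaration were removed.

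In the case $\langmv{g} \neq \langmv{f}$, the hypothesis $\langmv{f} \in \Theta = \Theta_{0}  \langsym{,}  \langmv{g}  \langsym{:}  \theta_{g}$ forces $\langmv{f} \in \Theta_{0}$ with $\Theta_{0} \langsym{(} \langmv{f} \langsym{)} = \theta$, so the induction hypothesis applies to the subderivation $\Theta_{0}  \vdash  \langmv{D_{0}}$ and yields some $\Theta' \subset \Theta_{0}$, $\langmv{D'} \subset \langmv{D_{0}}$, and $\langmv{L}$ meeting all three conclusions. Transitivity of $\subset$ then gives $\Theta' \subset \Theta$ and $\langmv{D'} \subset \langmv{D}$, completing the induction.

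I do not expect a genuine obstacle here: the statement is an inversion-through-induction lemma whose real content is already packaged inside the premises of \rn{T-FunDecl}. The only points needing care are bookkeeping ones—arguing that \rn{T-FunDecl} is the sole applicable root rule once $\langmv{f} \in \Theta$, invoking uniqueness of function names so that the matched type in the $\langmv{g} = \langmv{f}$ subcase is unambiguously $\theta$, and threading the two $\subset$ relations through transitivity in the inductive subcase.
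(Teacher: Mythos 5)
Your proof is correct and takes the same route as the paper, whose entire proof reads ``Straightforward induction on $\Theta \vdash \langmv{D}$''; your case analysis on the root rule \rn{T-FunDecl} and the split on whether the outermost declared function is $\langmv{f}$ is exactly the expected unfolding of that induction.
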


\begin{proof}
  Straightforward induction on $\Theta  \vdash  \langmv{D}$.
\end{proof}

\begin{lemma}{(Type preservation)}\label{lem:type-preserve}
  If $\Theta  \mid  \Gamma  \mid  \langmv{D}  \mid  G  \vdash   \left[  \langmv{H} ,  \rho ,  \langnt{e}  \right]   \Rightarrow  \Gamma'  \mid  \langmv{L}$ and $  \left[  \langmv{H} ,  \rho ,  \langnt{e}  \right]   \rightarrow_{ \langmv{D} ,  G }   \left[  \langmv{H'} ,  \rho' ,  \langnt{e'}  \right]  $,
  then there exist $\Gamma'$ such that $\Theta  \mid  \Gamma''  \mid  \langmv{D}  \mid  G  \vdash   \left[  \langmv{H'} ,  \rho' ,  \langnt{e'}  \right]   \Rightarrow  \Gamma'  \mid  \langmv{L}$.
\end{lemma}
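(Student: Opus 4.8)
The plan is to prove this subject-reduction statement by induction on the derivation of $[H,\rho,e]\to_{D,G}[H',\rho',e']$, i.e.\ by case analysis on the last evaluation rule. The uniform recipe is: invert \rn{T-RState} to recover the side conditions $\dom(\Gamma)\subseteq\dom(H)$, $\Used(\Gamma)\subseteq\Used(H)$, $\mathrm{flv}(\Gamma)\subseteq V(G)$, an expression typing $\Theta\mid\Gamma\vdash e:\tau\Rightarrow\Gamma'\mid\overline{C}$, and a connectivity derivation $G\mid V(G)\setminus\Used(H)\vdash\overline{C}\Rightarrow L$; then invert the matching expression rule; then exhibit a witness $\Gamma''$ and rebuild \rn{T-RState} for the stepped state while keeping the \emph{same} output environment $\Gamma'$ and free set $L$. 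For the congruence rules (those whose premise steps a subexpression $e_1$ inside a compound such as \texttt{mkref}, sequencing, or an \texttt{if} guard), I would first carve out a runtime-state judgment for $[H,\rho,e_1]$ --- the command for $e_1$ is a prefix of $\overline{C}$, so \cref{lem:C-inversion-concat-arbitrary} furnishes its connectivity derivation and an intermediate free set $L_1$ --- apply the induction hypothesis, and then reassemble the compound from the typing of $e_1'$, reusing the standalone derivation $G\mid L_1\vdash\overline{C}_2\Rightarrow L$ via \rn{C-Concat}.

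The single-step quantum rules are where $\Used(H)$ and the command sequence must stay in lockstep. For \rn{E-Init} (dually \rn{E-MInit}), inverting \rn{T-Init} gives $\Gamma=\Gamma_0,l$ and $\overline{C}=\langkw{alloc}(l)\doubleplus\overline{C}_0$; I take $\Gamma''=\Gamma_0,x'{:}\texttt{qbit}(l)$, note $\Used(H')=\Used(H)\cup\{l\}$ so that $V(G)\setminus\Used(H')=(V(G)\setminus\Used(H))\setminus l$, peel the leading \rn{C-Alloc} off with \cref{lem:C-inversion-concat-arbitrary}, and obtain the body typing from the \rn{T-Init} premise after renaming $x$ to the fresh $x'$ via \cref{lem:substitution-var}. \rn{E-Free} is dual: its leading \rn{C-Free} returns $l$ to the free set exactly as $\Used(H)$ loses it. For \rn{E-Meas2}, inverting \rn{T-Meas2} gives $\overline{C}=(l_1\sim l_2)\doubleplus\overline{C}_0$; binding a boolean does not change $\Used(H)$, so \cref{lem:C-inversion-concat-arbitrary} discards the \rn{C-Merge} prefix (which leaves $L$ fixed) and the body typing again comes from \cref{lem:substitution-var}. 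Finally \rn{E-Gate} and the single-qubit measurement rule change neither $\Used(H)$ nor the generated commands, so their reconstruction is immediate.

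The \texttt{while} unfolding rule is where the connectivity lemmas do the real work. After inverting \rn{T-While} to $\overline{C}=(\overline{C_1}\doubleplus\overline{C_2})^{*}\doubleplus\overline{C_1}$ (returning to the same environment), I would type the unfolded conditional with \rn{T-If}, \rn{T-Seq} and \rn{T-While}, then check its command is well-formed from the same free set $L$ by combining three facts: the empty branch collapses to $\overline{C_1}$ by \cref{lem:C-remove-loop}; the \texttt{then} branch re-folds its leading iteration into the star by \cref{lem:C-loop-fold}; and the generated $\lor$ is split by \cref{lem:C-branch-distribute}. The if-true, if-false and seq-unit rules are then routine.

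The main obstacle is \rn{E-Call}. Here I would apply the declaration-inversion lemma \cref{lem:inv-ty-fundecl} to type the callee body as $\Theta'\mid L_f,x_1{:}\tau_1,\dots,x_n{:}\tau_n\vdash e':\tau\Rightarrow\Gamma_f\mid\overline{C}$ with $\overline{l'}=L_f\uplus\bigcup_i\mathrm{flv}(\tau_i)$, instantiate the location parameters under $\sigma_l=[\overline{l}/\overline{l'}]$ using \cref{lem:substitution-locs}, reinstate $\Theta$ and the caller's leftover environment $\Gamma_0$ by \cref{lem:weakening-ftyenv} and \cref{lem:weakening-tyenv}, and rename the formals to the actuals with $n$ uses of \cref{lem:substitution-var}. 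The delicate bookkeeping is verifying that the instantiated $\sigma_l L_f$ equals the set $\overline{l''}=\overline{l}\setminus\bigcup_i\mathrm{flv}(\sigma_l\tau_i)$ that \rn{T-Call} prescribes (a consequence of $\sigma_l$ being a bijection from $\overline{l'}$ onto the distinct $\overline{l}$), so that $\Gamma''=\Gamma$ is exactly the call's input environment and the regenerated command is the same $\sigma_l\overline{C}$; since $H'=H$ in a call, the connectivity derivation and every \rn{T-RState} side condition then transfer verbatim. One technical point pervades all cases: because \rn{T-Sub} may sit at the root of the expression typing held by \rn{T-RState}, I would first prove a small inversion principle letting each syntactic form be retyped with subsumption pushed to the leaves, so the case analysis may assume the expression typing ends with the matching structural rule.
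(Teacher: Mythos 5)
Your proposal is correct and takes essentially the same route as the paper's own proof: induction on the evaluation derivation, inversion of \rn{T-RState} followed by the matching typing rule, \cref{lem:substitution-var}, \cref{lem:substitution-locs}, \cref{lem:weakening-tyenv}, \cref{lem:weakening-ftyenv} and \cref{lem:inv-ty-fundecl} for \rn{E-Init}/\rn{E-Free}/\rn{E-Meas2}/\rn{E-Call} (including the same bookkeeping that $\sigma_l$ maps the declaration's fresh locations onto $\overline{l''}$ and that $H'=H$ makes the connectivity derivation transfer), \cref{lem:C-inversion-concat-arbitrary} plus \rn{C-Concat} for the congruence cases, and the trio \cref{lem:C-remove-loop}, \cref{lem:C-loop-fold}, \cref{lem:C-branch-distribute} for the while-unfolding. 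Your closing point about first proving an inversion principle that pushes \rn{T-Sub} to the leaves addresses a subtlety the paper's proof passes over silently, so it is a refinement of, not a departure from, the paper's argument.
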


\begin{proof}
  We prove by the induction on $\rstate{H}{\rho}{e} \evalarrow \rstate{H'}{\rho'}{e'}$. 
  \pfcaselabel{\rn{E-Init}}
    \begin{gather*}
      \Theta  \mid  \Gamma  \mid  \langmv{D}  \mid  G  \vdash   \left[  \langmv{H} ,  \rho ,  \langkw{let} \, \langmv{x}  \langsym{=}  \langkw{init} \, \langsym{(}  \langmv{l}  \langsym{)} \, \langkw{in} \, \langnt{e}  \right]   \Rightarrow  \Gamma'  \mid  \langmv{L} \\
       \left[  \langmv{H} ,  \rho ,  \langkw{let} \, \langmv{x}  \langsym{=}  \langkw{init} \, \langsym{(}  \langmv{l}  \langsym{)} \, \langkw{in} \, \langnt{e}  \right] 
      \evalarrow
      \rstate{ H\{ x' \mapsto l \} }{ \rho \otimes \lblketbra{0}{0}{l} }{ \langsym{[}  \langmv{x'}  \slash  \langmv{x}  \langsym{]} \, \langnt{e} } \\
      \langmv{l} \not\in \cod(H)
      \andalso \langmv{x'} \not\in \dom(H)
      \andalso \langmv{l} \in G
    \end{gather*}
    By inversion on \rn{T-RState}, we have
    \begin{gather*}
      \dom(\Gamma  \langsym{,}  \langmv{l}) \subseteq \dom(H)
      \andalso  \mathrm{Used}( \Gamma  \langsym{,}  \langmv{l} )  \subseteq  \mathrm{Used}( \langmv{H} ) 
      \andalso  \mathrm{flv}( \Gamma )  \subseteq V(G) \\
      \Theta  \vdash  \langmv{D}
      \andalso \Theta  \mid  \Gamma  \langsym{,}  \langmv{l}  \vdash  \langkw{let} \, \langmv{x}  \langsym{=}  \langkw{init} \, \langsym{(}  \langmv{l}  \langsym{)} \, \langkw{in} \, \langnt{e}  \langsym{:}  \tau  \Rightarrow  \Gamma'  \mid  \langkw{alloc} \, \langsym{(}  \langmv{l}  \langsym{)}  \mathbin{+\mkern-10mu+}   \overline{ C }  \\
      G  \mid   \langnt{V}  \langsym{(}  G  \langsym{)}  \setminus   \mathrm{Used}( \langmv{H} )    \vdash  \langkw{alloc} \, \langsym{(}  \langmv{l}  \langsym{)}  \mathbin{+\mkern-10mu+}   \overline{ C }   \Rightarrow  \langmv{L}
    \end{gather*}
    By inversion on \rn{T-Init}, we have
    \begin{gather*}
      \Theta  \mid  \Gamma  \langsym{,}  \langmv{x}  \langsym{:}   \texttt{qbit}( \langmv{l} )   \vdash  \langnt{e}  \langsym{:}  \tau  \Rightarrow  \Gamma'  \mid   \overline{ C } 
      \andalso x \not\in \dom(\Gamma')
    \end{gather*}
    for some $\Gamma'$.
    Now $\langmv{x'} \not\in \dom(\Gamma)$ because $\langmv{x'} \not\in \dom(H)$ and $\dom(\Gamma  \langsym{,}  \langmv{l}) \subseteq \dom(H)$.
    Therefore we have $\Theta  \mid  \Gamma  \langsym{,}  \langmv{x'}  \langsym{:}   \texttt{qbit}( \langmv{l} )   \vdash  \langsym{[}  \langmv{x'}  \slash  \langmv{x}  \langsym{]} \, \langnt{e}  \langsym{:}  \tau  \Rightarrow  \langsym{[}  \langmv{x'}  \slash  \langmv{x}  \langsym{]} \, \Gamma'  \mid   \overline{ C } $ by \cref{lem:substitution-var}.
    Also $\dom(\Gamma  \langsym{,}  \langmv{x'}  \langsym{:}   \texttt{qbit}( \langmv{l} ) ) \subseteq \dom(H\{x' \mapsto l\})$ because $\dom(\Gamma) \subseteq \dom(H)$.
    Similarly we can show $ \mathrm{Used}( \Gamma  \langsym{,}  \langmv{x'}  \langsym{:}   \texttt{qbit}( \langmv{l} )  )  \subseteq  \mathrm{Used}(  \langmv{H} \{ \langmv{x'}  \mapsto  \langmv{l} \}  ) $.

    Next we will show $G  \mid   \langnt{V}  \langsym{(}  G  \langsym{)}  \setminus   \mathrm{Used}(  \langmv{H} \{ \langmv{x'}  \mapsto  \langmv{l} \}  )    \vdash   \overline{ C }   \Rightarrow  \langmv{L}$.
    By inversion on \rn{C-Alloc}, we have $G  \mid  \langmv{L'}  \vdash   \overline{ C }   \Rightarrow  \langmv{L}$.
    Now $ \langnt{V}  \langsym{(}  G  \langsym{)}  \setminus   \mathrm{Used}( \langmv{H} )   = \langmv{L'}  \langsym{,}  \langmv{l}$ implies $ \langnt{V}  \langsym{(}  G  \langsym{)}  \setminus   \mathrm{Used}(  \langmv{H} \{ \langmv{x'}  \mapsto  \langmv{l} \}  )   = \langmv{L'}$,
    and thus $G  \mid   \langnt{V}  \langsym{(}  G  \langsym{)}  \setminus   \mathrm{Used}(  \langmv{H} \{ \langmv{x'}  \mapsto  \langmv{l} \}  )    \vdash   \overline{ C }   \Rightarrow  \langmv{L}$.
    Therefore, we have $\Theta \mid \Gamma  \langsym{,}  \langmv{x'}  \langsym{:}   \texttt{qbit}( \langmv{l} )  \mid \langmv{D} \mid G \vdash \rstate{ H\{ x' \mapsto l \} }{ \rho \otimes \lblketbra{0}{0}{l} }{ \langsym{[}  \langmv{x'}  \slash  \langmv{x}  \langsym{]} \, \langnt{e} }$.

  \pfcaselabel{\rn{T-Free}}\\
    Similar to the case of \rn{T-Init}.

  \pfcaselabel{\rn{E-While}}
    \begin{gather*}
      \Theta  \mid  \Gamma  \mid  \langmv{D}  \mid  G  \vdash   \left[  \langmv{H} ,  \rho ,  \langkw{while} \, \langnt{e_{{\mathrm{1}}}} \, \langkw{do} \, \langnt{e_{{\mathrm{2}}}}  \right]   \Rightarrow  \Gamma  \mid  \langmv{L} \\
       \left[  \langmv{H} ,  \rho ,  \langkw{while} \, \langnt{e_{{\mathrm{1}}}} \, \langkw{do} \, \langnt{e_{{\mathrm{2}}}}  \right]  \\
      \qquad \evalarrow  \left[  \langmv{H} ,  \rho ,  \langkw{if} \, \langnt{e_{{\mathrm{1}}}} \, \langkw{then} \, \langnt{e_{{\mathrm{2}}}}  \langsym{;}  \langsym{(}  \langkw{while} \, \langnt{e_{{\mathrm{1}}}} \, \langkw{do} \, \langnt{e_{{\mathrm{2}}}}  \langsym{)} \, \langkw{else} \, \langsym{(}  \langsym{)}  \right] 
    \end{gather*}
    By inversion on \rn{T-RState}, we have
    \begin{gather*}
      \Theta  \mid  \Gamma  \vdash  \langkw{while} \, \langnt{e_{{\mathrm{1}}}} \, \langkw{do} \, \langnt{e_{{\mathrm{2}}}}  \langsym{:}  \langkw{unit}  \Rightarrow  \Gamma  \mid   \langsym{(}   \overline{ C_{{\mathrm{1}}} }   \mathbin{+\mkern-10mu+}   \overline{ C_{{\mathrm{2}}} }   \langsym{)} ^*   \mathbin{+\mkern-10mu+}   \overline{ C_{{\mathrm{1}}} }  \\
      G  \mid   \langnt{V}  \langsym{(}  G  \langsym{)}  \setminus   \mathrm{Used}( \langmv{H} )    \vdash   \langsym{(}   \overline{ C_{{\mathrm{1}}} }   \mathbin{+\mkern-10mu+}   \overline{ C_{{\mathrm{2}}} }   \langsym{)} ^*   \mathbin{+\mkern-10mu+}   \overline{ C_{{\mathrm{1}}} }   \Rightarrow  \langmv{L}.
    \end{gather*}
    By inversion on \rn{T-While}, we have 
    \begin{gather*}
      \Theta  \mid  \Gamma  \vdash  \langnt{e_{{\mathrm{1}}}}  \langsym{:}   \texttt{bool}   \Rightarrow  \Gamma  \mid   \overline{ C_{{\mathrm{1}}} } 
      \andalso \Theta  \mid  \Gamma  \vdash  \langnt{e_{{\mathrm{2}}}}  \langsym{:}  \langkw{unit}  \Rightarrow  \Gamma  \mid   \overline{ C_{{\mathrm{2}}} } .
    \end{gather*}
    By \rn{T-Seq}, we have $\Theta  \mid  \Gamma  \vdash  \langnt{e_{{\mathrm{2}}}}  \langsym{;}  \langkw{while} \, \langnt{e_{{\mathrm{1}}}} \, \langkw{do} \, \langnt{e_{{\mathrm{2}}}}  \langsym{:}  \langkw{unit}  \Rightarrow  \Gamma  \mid     \overline{ C_{{\mathrm{2}}} }    \mathbin{+\mkern-10mu+}  \langsym{(}    \overline{ C_{{\mathrm{1}}} }    \mathbin{+\mkern-10mu+}   \overline{ C_{{\mathrm{2}}} }   \langsym{)} ^*   \mathbin{+\mkern-10mu+}   \overline{ C_{{\mathrm{1}}} } $.
    Also $\Theta  \mid  \Gamma  \vdash  \langsym{(}  \langsym{)}  \langsym{:}  \langkw{unit}  \Rightarrow  \Gamma  \mid   \epsilon $.
    Therefore, we have $\Theta  \mid  \Gamma  \vdash  \langkw{if} \, \langnt{e_{{\mathrm{1}}}} \, \langkw{then} \, \langnt{e_{{\mathrm{2}}}}  \langsym{;}  \langsym{(}  \langkw{while} \, \langnt{e_{{\mathrm{1}}}} \, \langkw{do} \, \langnt{e_{{\mathrm{2}}}}  \langsym{)} \, \langkw{else} \, \langsym{(}  \langsym{)}  \langsym{:}  \langkw{unit}  \Rightarrow  \Gamma  \mid    \overline{ C_{{\mathrm{1}}} }    \mathbin{+\mkern-10mu+}  \langsym{(}   \langsym{(}     \overline{ C_{{\mathrm{2}}} }    \mathbin{+\mkern-10mu+}  \langsym{(}    \overline{ C_{{\mathrm{1}}} }    \mathbin{+\mkern-10mu+}   \overline{ C_{{\mathrm{2}}} }   \langsym{)} ^*   \mathbin{+\mkern-10mu+}   \overline{ C_{{\mathrm{1}}} }   \langsym{)}  \lor   \epsilon    \langsym{)}$.
    As there are no changes to the environment or the heap, our remaining task is to show $G  \mid   \langnt{V}  \langsym{(}  G  \langsym{)}  \setminus   \mathrm{Used}( \langmv{H} )    \vdash    \overline{ C_{{\mathrm{1}}} }    \mathbin{+\mkern-10mu+}  \langsym{(}   \langsym{(}     \overline{ C_{{\mathrm{2}}} }    \mathbin{+\mkern-10mu+}  \langsym{(}    \overline{ C_{{\mathrm{1}}} }    \mathbin{+\mkern-10mu+}   \overline{ C_{{\mathrm{2}}} }   \langsym{)} ^*   \mathbin{+\mkern-10mu+}   \overline{ C_{{\mathrm{1}}} }   \langsym{)}  \lor   \epsilon    \langsym{)}  \Rightarrow  \langmv{L}$ for some $\langmv{L'}$.
    We prove this by showing $G  \mid   \langnt{V}  \langsym{(}  G  \langsym{)}  \setminus   \mathrm{Used}( \langmv{H} )    \vdash    \overline{ C_{{\mathrm{1}}} }   \mathbin{+\mkern-10mu+}   \overline{ C_{{\mathrm{2}}} }   \mathbin{+\mkern-10mu+}  \langsym{(}    \overline{ C_{{\mathrm{1}}} }    \mathbin{+\mkern-10mu+}   \overline{ C_{{\mathrm{2}}} }   \langsym{)} ^*   \mathbin{+\mkern-10mu+}   \overline{ C_{{\mathrm{1}}} }   \Rightarrow  \langmv{L}$ and $G  \mid   \langnt{V}  \langsym{(}  G  \langsym{)}  \setminus   \mathrm{Used}( \langmv{H} )    \vdash   \overline{ C_{{\mathrm{1}}} }   \Rightarrow  \langmv{L}$ because of \cref{lem:C-branch-distribute}.

    By applying \cref{lem:C-remove-loop} to $G  \mid   \langnt{V}  \langsym{(}  G  \langsym{)}  \setminus   \mathrm{Used}( \langmv{H} )    \vdash   \langsym{(}   \overline{ C_{{\mathrm{1}}} }   \mathbin{+\mkern-10mu+}   \overline{ C_{{\mathrm{2}}} }   \langsym{)} ^*   \mathbin{+\mkern-10mu+}   \overline{ C_{{\mathrm{1}}} }   \Rightarrow  \langmv{L}$, we have $G  \mid   \langnt{V}  \langsym{(}  G  \langsym{)}  \setminus   \mathrm{Used}( \langmv{H} )    \vdash   \overline{ C_{{\mathrm{1}}} }   \Rightarrow  \langmv{L}$.
    Next, by \cref{lem:C-loop-fold}, we can prove $G  \mid   \langnt{V}  \langsym{(}  G  \langsym{)}  \setminus   \mathrm{Used}( \langmv{H} )    \vdash    \overline{ C_{{\mathrm{1}}} }   \mathbin{+\mkern-10mu+}   \overline{ C_{{\mathrm{2}}} }   \mathbin{+\mkern-10mu+}  \langsym{(}   \overline{ C_{{\mathrm{1}}} }   \mathbin{+\mkern-10mu+}   \overline{ C_{{\mathrm{2}}} }   \langsym{)} ^*   \Rightarrow  \langmv{L}$.
    Therefore we have $G  \mid   \langnt{V}  \langsym{(}  G  \langsym{)}  \setminus   \mathrm{Used}( \langmv{H} )    \vdash    \overline{ C_{{\mathrm{1}}} }   \mathbin{+\mkern-10mu+}   \overline{ C_{{\mathrm{2}}} }   \mathbin{+\mkern-10mu+}  \langsym{(}   \overline{ C_{{\mathrm{1}}} }   \mathbin{+\mkern-10mu+}   \overline{ C_{{\mathrm{2}}} }   \langsym{)} ^*   \mathbin{+\mkern-10mu+}   \overline{ C_{{\mathrm{1}}} }   \Rightarrow  \langmv{L}$ by \rn{C-Concat}.
    Now we showed both statements and thus we have $G  \mid   \langnt{V}  \langsym{(}  G  \langsym{)}  \setminus   \mathrm{Used}( \langmv{H} )    \vdash    \overline{ C_{{\mathrm{1}}} }    \mathbin{+\mkern-10mu+}  \langsym{(}   \langsym{(}     \overline{ C_{{\mathrm{2}}} }    \mathbin{+\mkern-10mu+}  \langsym{(}    \overline{ C_{{\mathrm{1}}} }    \mathbin{+\mkern-10mu+}   \overline{ C_{{\mathrm{2}}} }   \langsym{)} ^*   \mathbin{+\mkern-10mu+}   \overline{ C_{{\mathrm{1}}} }   \langsym{)}  \lor   \epsilon    \langsym{)}  \Rightarrow  \langmv{L}$ by \cref{lem:C-branch-distribute}.

  \pfcaselabel{\rn{E-Seq1}}
    \begin{gather*}
      \Theta  \mid  \Gamma  \mid  \langmv{D}  \mid  G  \vdash   \left[  \langmv{H} ,  \rho ,  \langnt{e_{{\mathrm{1}}}}  \langsym{;}  \langnt{e_{{\mathrm{2}}}}  \right]   \Rightarrow  \Gamma'  \mid  \langmv{L} \\
       \left[  \langmv{H} ,  \rho ,  \langnt{e_{{\mathrm{1}}}}  \langsym{;}  \langnt{e_{{\mathrm{2}}}}  \right] 
      \evalarrow  \left[  \langmv{H'} ,  \rho' ,  \langnt{e'_{{\mathrm{1}}}}  \langsym{;}  \langnt{e_{{\mathrm{2}}}}  \right]  \\
       \left[  \langmv{H} ,  \rho ,  \langnt{e_{{\mathrm{1}}}}  \right] 
      \evalarrow  \left[  \langmv{H'} ,  \rho' ,  \langnt{e'_{{\mathrm{1}}}}  \right] 
    \end{gather*}
    By inversion on \rn{T-RState}, we have
    \begin{gather*}
      \Theta  \mid  \Gamma  \vdash  \langnt{e_{{\mathrm{1}}}}  \langsym{;}  \langnt{e_{{\mathrm{2}}}}  \langsym{:}  \tau  \Rightarrow  \Gamma'  \mid   \overline{ C } 
      \andalso G  \mid   \langnt{V}  \langsym{(}  G  \langsym{)}  \setminus   \mathrm{Used}( \langmv{H} )    \vdash   \overline{ C }   \Rightarrow  \langmv{L} \\
      \dom(\Gamma) \subseteq \dom(H)
      \andalso  \mathrm{Used}( \Gamma )  \subseteq  \mathrm{Used}( \langmv{H} ) .
    \end{gather*}
    By inversion on \rn{T-Seq}, there exists $\Gamma_{{\mathrm{1}}}$, $ \overline{ C_{{\mathrm{1}}} } $ and $ \overline{ C_{{\mathrm{2}}} } $ such that
    \begin{gather*}
      \Theta  \mid  \Gamma  \vdash  \langnt{e_{{\mathrm{1}}}}  \langsym{:}  \langkw{unit}  \Rightarrow  \Gamma_{{\mathrm{1}}}  \mid   \overline{ C_{{\mathrm{1}}} } 
      \andalso \Theta  \mid  \Gamma_{{\mathrm{1}}}  \vdash  \langnt{e_{{\mathrm{2}}}}  \langsym{:}  \tau  \Rightarrow  \Gamma'  \mid   \overline{ C_{{\mathrm{2}}} } 
      \andalso  \overline{ C }  =  \overline{ C_{{\mathrm{1}}} }   \mathbin{+\mkern-10mu+}   \overline{ C_{{\mathrm{2}}} } .
    \end{gather*}
    By \cref{lem:C-inversion-concat-arbitrary}, we have
    \begin{gather*}
      G  \mid   \langnt{V}  \langsym{(}  G  \langsym{)}  \setminus   \mathrm{Used}( \langmv{H} )    \vdash   \overline{ C_{{\mathrm{1}}} }   \Rightarrow  \langmv{L_{{\mathrm{1}}}}
      \andalso G  \mid  \langmv{L_{{\mathrm{1}}}}  \vdash   \overline{ C_{{\mathrm{2}}} }   \Rightarrow  \langmv{L}
    \end{gather*}
    for some $\langmv{L_{{\mathrm{1}}}}$.
    By the induction hypothesis, we have $\Theta  \mid  \Gamma''  \mid  \langmv{D}  \mid  G  \vdash   \left[  \langmv{H} ,  \rho ,  \langnt{e'_{{\mathrm{1}}}}  \right]   \Rightarrow  \Gamma_{{\mathrm{1}}}  \mid  \langmv{L_{{\mathrm{1}}}}$ for some $\Gamma''$.
    Thus, by inversion on \rn{T-RState}, we have
    \begin{gather*}
      \Theta  \mid  \Gamma''  \vdash  \langnt{e'_{{\mathrm{1}}}}  \langsym{:}  \langkw{unit}  \Rightarrow  \Gamma_{{\mathrm{1}}}  \mid   \overline{ C'_{{\mathrm{1}}} } 
      \andalso G  \mid   \langnt{V}  \langsym{(}  G  \langsym{)}  \setminus   \mathrm{Used}( \langmv{H'} )    \vdash   \overline{ C'_{{\mathrm{1}}} }   \Rightarrow  \langmv{L_{{\mathrm{1}}}} \\
      \dom(\Gamma'') \subseteq \dom(H')
      \andalso  \mathrm{Used}( \Gamma'' )  \subseteq  \mathrm{Used}( \langmv{H'} ) 
    \end{gather*}
    for some $ \overline{ C'_{{\mathrm{1}}} } $.
    By \rn{C-Concat}, we have $G  \mid   \langnt{V}  \langsym{(}  G  \langsym{)}  \setminus   \mathrm{Used}( \langmv{H'} )    \vdash   \overline{ C'_{{\mathrm{1}}} }   \mathbin{+\mkern-10mu+}   \overline{ C_{{\mathrm{2}}} }   \Rightarrow  \langmv{L}$.
    We also have $\Theta  \mid  \Gamma''  \vdash  \langnt{e'_{{\mathrm{1}}}}  \langsym{;}  \langnt{e_{{\mathrm{2}}}}  \langsym{:}  \tau  \Rightarrow  \Gamma'  \mid   \overline{ C'_{{\mathrm{1}}} }   \mathbin{+\mkern-10mu+}   \overline{ C_{{\mathrm{2}}} } $ by \rn{T-Seq}.
    Therefore $\Theta  \mid  \Gamma''  \mid  \langmv{D}  \mid  G  \vdash   \left[  \langmv{H'} ,  \rho' ,  \langnt{e'_{{\mathrm{1}}}}  \langsym{;}  \langnt{e_{{\mathrm{2}}}}  \right]   \Rightarrow  \Gamma'  \mid  \langmv{L}$.

  \pfcaselabel{\rn{E-Call}}
    \begin{gather*}
      \Theta  \mid  \Gamma  \mid  \langmv{D}  \mid  G  \vdash   \left[  \langmv{H} ,  \rho ,  \langmv{f}  \langsym{[}   \overline{ \langmv{l} }   \langsym{]}  \langsym{(}  \langmv{x_{{\mathrm{1}}}}  \langsym{,} \, .. \, \langsym{,}  \langmv{x_{\langmv{n}}}  \langsym{)}  \right]   \Rightarrow  \Gamma'  \mid  \langmv{L'} \\
       \langmv{f}  \mapsto [   \overline{ \langmv{l'} }   ]( \langmv{x'_{{\mathrm{1}}}}  \langsym{,} \, .. \, \langsym{,}  \langmv{x'_{\langmv{n}}} ) \langnt{e}  \in \langmv{D} \\
       \left[  \langmv{H} ,  \rho ,  \langmv{f}  \langsym{[}   \overline{ \langmv{l} }   \langsym{]}  \langsym{(}  \langmv{x_{{\mathrm{1}}}}  \langsym{,} \, .. \, \langsym{,}  \langmv{x_{\langmv{n}}}  \langsym{)}  \right] 
      \evalarrow
       \left[  \langmv{H} ,  \rho ,   \sigma_{ \langmv{l} }  \,  \sigma_{ \langmv{x} }  \, \langnt{e}  \right]  \\
       \sigma_{ \langmv{l} }  = \langsym{[}   \overline{ \langmv{l} }   \slash   \overline{ \langmv{l'} }   \langsym{]}
      \andalso  \sigma_{ \langmv{x} }  = \langsym{[}  \langmv{x_{{\mathrm{1}}}}  \slash  \langmv{x'_{{\mathrm{1}}}}  \langsym{]} \, .. \, \langsym{[}  \langmv{x_{\langmv{n}}}  \slash  \langmv{x'_{\langmv{n}}}  \langsym{]}
    \end{gather*}
    By inversion on \rn{T-RState} and \rn{T-Call}, we have
    \begin{gather*}
      \Theta  \langsym{(}  \langmv{f}  \langsym{)} =  \Pi   \overline{ \langmv{l'} }   . \braket{ \langmv{x'_{{\mathrm{1}}}}  \langsym{:}  \tau_{{\mathrm{1}}}  \langsym{,} \, .. \, \langsym{,}  \langmv{x'_{\langmv{n}}}  \langsym{:}  \tau_{\langmv{n}} } \xrightarrow{   \overline{ C }   } \braket{ \Gamma''  |  \tau }  
      \andalso \Gamma' = \Gamma'''  \langsym{,}   \sigma_{ \langmv{l} }  \,  \sigma_{ \langmv{x} }  \, \Gamma''\\
      \Theta  \mid  \Gamma  \vdash  \langmv{f}  \langsym{[}   \overline{ \langmv{l} }   \langsym{]}  \langsym{(}  \langmv{x_{{\mathrm{1}}}}  \langsym{,} \, .. \, \langsym{,}  \langmv{x_{\langmv{n}}}  \langsym{)}  \langsym{:}   \sigma_{ \langmv{l} }  \, \tau  \Rightarrow  \Gamma'  \mid   \sigma_{ \langmv{l} }  \,  \overline{ C }  \\
      \Gamma = \Gamma'''  \langsym{,}  \langmv{L}  \langsym{,}  \langmv{x_{{\mathrm{1}}}}  \langsym{:}   \sigma_{ \langmv{l} }  \, \tau_{{\mathrm{1}}}  \langsym{,} \, .. \, \langsym{,}  \langmv{x_{\langmv{n}}}  \langsym{:}   \sigma_{ \langmv{l} }  \, \tau_{\langmv{n}}
      \andalso \langmv{L} =  \overline{ \langmv{l} }  \setminus (\cup_{i=1}^n  \mathrm{flv}(  \sigma_{ \langmv{l} }  \, \tau_{\langmv{i}} ) ) \\
      \dom(\Gamma) \subseteq \dom(H)
      \andalso  \mathrm{Used}( \Gamma )  \subseteq  \mathrm{Used}( \langmv{H} ) 
      \andalso G  \mid   \mathrm{Used}( \langmv{H} )   \vdash   \sigma_{ \langmv{l} }  \,  \overline{ C }   \Rightarrow  \langmv{L'}.
    \end{gather*}
    Moreover, by \cref{lem:inv-ty-fundecl}, we have 
    \begin{gather*}
      \Theta' \subseteq \Theta
      \andalso  \overline{ \langmv{l'} }  = \langmv{L''} \uplus (\bigcup_{i=1}^n  \mathrm{flv}( \tau_{\langmv{i}} ) ) \\
      \Theta'  \mid  \langmv{L''}  \langsym{,}  \langmv{x'_{{\mathrm{1}}}}  \langsym{:}  \tau_{{\mathrm{1}}}  \langsym{,} \, .. \, \langsym{,}  \langmv{x'_{\langmv{n}}}  \langsym{:}  \tau_{\langmv{n}}  \vdash  \langnt{e}  \langsym{:}  \tau  \Rightarrow  \Gamma''  \mid   \overline{ C } .
    \end{gather*}
    By \cref{lem:substitution-var,lem:weakening-ftyenv}, we have $\Theta  \mid  \langmv{L''}  \langsym{,}  \langmv{x_{{\mathrm{1}}}}  \langsym{:}  \tau_{{\mathrm{1}}}  \langsym{,} \, .. \, \langsym{,}  \langmv{x_{\langmv{n}}}  \langsym{:}  \tau_{\langmv{n}}  \vdash   \sigma_{ \langmv{x} }  \, \langnt{e}  \langsym{:}  \tau  \Rightarrow   \sigma_{ \langmv{x} }  \, \Gamma''  \mid   \overline{ C } $.
    Location variables in $ \overline{ \langmv{l} } $ are distinct and thus $ \sigma_{ \langmv{l} }  \,  \overline{ \langmv{l'} }  =  \overline{ \langmv{l} }  =  \sigma_{ \langmv{l} }  \, \langmv{L''} \uplus (\cup_{i=1}^n  \mathrm{flv}(  \sigma_{ \langmv{l} }  \, \tau_{\langmv{i}} ) )$.
    This and $\langmv{L} =  \overline{ \langmv{l} }  \setminus (\cup_{i=1}^n  \mathrm{flv}(  \sigma_{ \langmv{l} }  \, \tau_{\langmv{i}} ) )$ imply $ \sigma_{ \langmv{l} }  \, \langmv{L''} = \langmv{L}$.
    Therefore $\Theta  \mid  \langmv{L}  \langsym{,}  \langmv{x_{{\mathrm{1}}}}  \langsym{:}   \sigma_{ \langmv{l} }  \, \tau_{{\mathrm{1}}}  \langsym{,} \, .. \, \langsym{,}  \langmv{x_{\langmv{n}}}  \langsym{:}   \sigma_{ \langmv{l} }  \, \tau_{\langmv{n}}  \vdash   \sigma_{ \langmv{l} }  \,  \sigma_{ \langmv{x} }  \, \langnt{e}  \langsym{:}   \sigma_{ \langmv{l} }  \, \tau  \Rightarrow   \sigma_{ \langmv{l} }  \,  \sigma_{ \langmv{x} }  \, \Gamma''  \mid   \sigma_{ \langmv{l} }  \,  \overline{ C } $ by \cref{lem:substitution-locs}.
    Then we have $\Theta  \mid  \Gamma  \vdash   \sigma_{ \langmv{l} }  \,  \sigma_{ \langmv{x} }  \, \langnt{e}  \langsym{:}   \sigma_{ \langmv{l} }  \, \tau  \Rightarrow  \Gamma'  \mid   \sigma_{ \langmv{l} }  \,  \overline{ C } $ by \cref{lem:weakening-tyenv}.
    Finally we can have $\Theta  \mid  \Gamma  \mid  \langmv{D}  \mid  G  \vdash   \left[  \langmv{H} ,  \rho ,   \sigma_{ \langmv{l} }  \,  \sigma_{ \langmv{x} }  \, \langnt{e}  \right]   \Rightarrow  \Gamma'  \mid  \langmv{L''}$ because other premises of \rn{T-RState} hold obviously.

  Other cases can be shown in the same way as in the previous cases.
\end{proof}

Now we can prove the type soundness of $\langname$ (\cref{thm:soundness}) by the progress and type preservation lemmas.
\begin{proof}[\Cref{thm:soundness}]
  By standard progress (\cref{lem:progress}) and subject reduction lemma (\cref{lem:type-preserve}).
\end{proof}

To conclude this section, we will prove \cref{prop:ser-sound,prop:ser-length} about serialized command sequences, which are mentioned in the paper.

\begin{proof}[\Cref{prop:ser-sound}]
  Prove by induction on the size of $\overline{C}$.
  \pfcase{$\overline{C} =  \epsilon $}
    Trivial.
  \pfcase{$\overline{C} =  \langmv{l_{{\mathrm{1}}}}  \sim  \langmv{l_{{\mathrm{2}}}}   \mathbin{+\mkern-10mu+}   \overline{ C' } $}
    By the definition, we have $ \mathrm{ser}( \overline{C} )  =  \langmv{l_{{\mathrm{1}}}}  \sim  \langmv{l_{{\mathrm{2}}}}   \mathbin{+\mkern-10mu+}   \mathrm{ser}(  \overline{ C' }  ) $.
    If $G  \mid  \langmv{L}  \vdash   \langmv{l_{{\mathrm{1}}}}  \sim  \langmv{l_{{\mathrm{2}}}}   \mathbin{+\mkern-10mu+}   \mathrm{ser}(  \overline{ C' }  )   \Rightarrow  \langmv{L}$, then we have $G  \mid  \langmv{L}  \vdash   \langmv{l_{{\mathrm{1}}}}  \sim  \langmv{l_{{\mathrm{2}}}}   \Rightarrow  \langmv{L}$ and $G  \mid  \langmv{L}  \vdash   \mathrm{ser}(  \overline{ C' }  )   \Rightarrow  \langmv{L}$ by the inversion on \rn{C-Concat}.
    By the induction hypothesis, $G  \mid  \langmv{L}  \vdash   \mathrm{ser}(  \overline{ C' }  )   \Rightarrow  \langmv{L} \Leftrightarrow G  \mid  \langmv{L}  \vdash   \overline{ C' }   \Rightarrow  \langmv{L'}$ for some $\langmv{L'}$.
    Therefore we have $G  \mid  \langmv{L}  \vdash   \langmv{l_{{\mathrm{1}}}}  \sim  \langmv{l_{{\mathrm{2}}}}   \mathbin{+\mkern-10mu+}   \overline{ C' }   \Rightarrow  \langmv{L'}$ by applying \rn{C-Concat}.
    Similarly we can prove $G  \mid  \langmv{L}  \vdash   \langmv{l_{{\mathrm{1}}}}  \sim  \langmv{l_{{\mathrm{2}}}}   \mathbin{+\mkern-10mu+}   \overline{ C' }   \Rightarrow  \langmv{L'} \Rightarrow G  \mid  \langmv{L}  \vdash   \langmv{l_{{\mathrm{1}}}}  \sim  \langmv{l_{{\mathrm{2}}}}   \mathbin{+\mkern-10mu+}   \mathrm{ser}(  \overline{ C' }  )   \Rightarrow   \emptyset $.
  \pfcase{$\overline{C} = \langkw{alloc} \, \langsym{(}  \langmv{l''}  \langsym{)}  \mathbin{+\mkern-10mu+}   \overline{ C' } $}
    By the definition, $ \mathrm{ser}( \overline{C} )  = \langkw{alloc} \, \langsym{(}  \langmv{l''}  \langsym{)}  \mathbin{+\mkern-10mu+}   \mathrm{ser}(  \overline{ C' }  )   \mathbin{+\mkern-10mu+}  \langkw{free} \, \langsym{(}  \langmv{l''}  \langsym{)}$.
    If $G  \mid  \langmv{L}  \vdash  \langkw{alloc} \, \langsym{(}  \langmv{l''}  \langsym{)}  \mathbin{+\mkern-10mu+}   \mathrm{ser}(  \overline{ C' }  )   \mathbin{+\mkern-10mu+}  \langkw{free} \, \langsym{(}  \langmv{l''}  \langsym{)}  \Rightarrow  \langmv{L}$, then we have $G  \mid  \langmv{L}  \vdash  \langkw{alloc} \, \langsym{(}  \langmv{l}  \langsym{)}  \Rightarrow   \langmv{L}  \setminus  \langmv{l''} $ and $G  \mid   \langmv{L}  \setminus  \langmv{l''}   \vdash   \mathrm{ser}(  \overline{ C' }  )   \Rightarrow   \langmv{L}  \setminus  \langmv{l''} $ by the inversion on \rn{C-Concat}, \rn{C-Alloc}. 
    Then we have $G  \mid   \langmv{L}  \setminus  \langmv{l''}   \vdash   \overline{ C' }   \Rightarrow  \langmv{L'''}$ for some $\langmv{L'''}$ by the induction hypothesis.
    Therefore, we have $G  \mid  \langmv{L}  \vdash  \langkw{alloc} \, \langsym{(}  \langmv{l}  \langsym{)}  \mathbin{+\mkern-10mu+}   \overline{ C' }   \Rightarrow  \langmv{L'''}$ by using \rn{C-Concat}.

    Next we assume that $G  \mid  \langmv{L}  \vdash  \langkw{alloc} \, \langsym{(}  \langmv{l''}  \langsym{)}  \mathbin{+\mkern-10mu+}   \overline{ C' }   \Rightarrow  \langmv{L'}$.
    By the inversion on \rn{C-Concat}, we have $G  \mid  \langmv{L}  \vdash  \langkw{alloc} \, \langsym{(}  \langmv{l}  \langsym{)}  \Rightarrow   \langmv{L}  \setminus  \langmv{l''} $ and $G  \mid   \langmv{L}  \setminus  \langmv{l''}   \vdash   \overline{ C' }   \Rightarrow  \langmv{L'}$.
    Thus we have $G  \mid   \langmv{L}  \setminus  \langmv{l''}   \vdash   \mathrm{ser}(  \overline{ C' }  )   \Rightarrow   \langmv{L}  \setminus  \langmv{l''} $ by the induction hypothesis.
    Then we can obtain $G  \mid   \langmv{L}  \setminus  \langmv{l''}   \vdash  \langkw{free} \, \langsym{(}  \langmv{l''}  \langsym{)}  \Rightarrow  \langmv{L}$ by just using \rn{C-Free}.
    Finally by \rn{T-Concat}, we have $G  \mid  \langmv{L}  \vdash  \langkw{alloc} \, \langsym{(}  \langmv{l''}  \langsym{)}  \mathbin{+\mkern-10mu+}   \mathrm{ser}(  \overline{ C' }  )   \mathbin{+\mkern-10mu+}  \langkw{free} \, \langsym{(}  \langmv{l''}  \langsym{)}  \Rightarrow  \langmv{L}$.
  \pfcase{$\overline{C} = \langkw{free} \, \langsym{(}  \langmv{l''}  \langsym{)}  \mathbin{+\mkern-10mu+}   \overline{ C' } $}
    Almost the same as the case of allocation.
  \pfcase{$\overline{C} =   \overline{ C_{{\mathrm{1}}} }  ^*   \mathbin{+\mkern-10mu+}   \overline{ C_{{\mathrm{2}}} } $}
    By the definition, we have $ \mathrm{ser}( \overline{C} )  =  \mathrm{ser}(  \overline{ C_{{\mathrm{1}}} }   \mathbin{+\mkern-10mu+}   \overline{ C_{{\mathrm{2}}} }  ) $.
    \begin{align*}
       &\ G  \mid  \langmv{L}  \vdash   \mathrm{ser}(  \overline{ C_{{\mathrm{1}}} }   \mathbin{+\mkern-10mu+}   \overline{ C_{{\mathrm{2}}} }  )   \Rightarrow  \langmv{L} & \\
      \Leftrightarrow &\ G  \mid  \langmv{L}  \vdash   \overline{ C_{{\mathrm{1}}} }   \mathbin{+\mkern-10mu+}   \overline{ C_{{\mathrm{2}}} }   \Rightarrow  \langmv{L''} & (\text{by the induction hypothesis}) \\
      \Leftrightarrow &\ G  \mid  \langmv{L}  \vdash   \overline{ C_{{\mathrm{1}}} }   \Rightarrow  \langmv{L} \land G  \mid  \langmv{L}  \vdash   \overline{ C_{{\mathrm{2}}} }   \Rightarrow  \langmv{L''} & (\text{by the inversion on \rn{C-Concat}}) \\
      \Leftrightarrow &\ G  \mid  \langmv{L}  \vdash    \overline{ C_{{\mathrm{1}}} }  ^*   \Rightarrow  \langmv{L} \land G  \mid  \langmv{L}  \vdash   \overline{ C_{{\mathrm{2}}} }   \Rightarrow  \langmv{L''} & (\text{by \rn{C-Loop}}) \\
      \Leftrightarrow &\ G  \mid  \langmv{L}  \vdash    \overline{ C_{{\mathrm{1}}} }  ^*   \mathbin{+\mkern-10mu+}   \overline{ C_{{\mathrm{2}}} }   \Rightarrow  \langmv{L''} & (\text{by \rn{C-Concat}})
    \end{align*}
  \pfcase{$\overline{C} = \langsym{(}     \overline{ C_{{\mathrm{1}}} }    \lor   \overline{ C_{{\mathrm{2}}} }    \langsym{)}  \mathbin{+\mkern-10mu+}   \overline{ C_{{\mathrm{3}}} } $}
    By the definition, $ \mathrm{ser}( \overline{C} )  =  \mathrm{ser}(  \overline{ C_{{\mathrm{1}}} }  )   \mathbin{+\mkern-10mu+}   \mathrm{ser}(  \overline{ C_{{\mathrm{2}}} }   \mathbin{+\mkern-10mu+}   \overline{ C_{{\mathrm{3}}} }  ) $.
    If $G  \mid  \langmv{L}  \vdash   \mathrm{ser}( \overline{C} )   \Rightarrow   \emptyset $, then we have $G  \mid  \langmv{L}  \vdash   \mathrm{ser}(  \overline{ C_{{\mathrm{1}}} }  )   \Rightarrow   \emptyset $ and $G  \mid  \langmv{L}  \vdash   \mathrm{ser}(  \overline{ C_{{\mathrm{2}}} }   \mathbin{+\mkern-10mu+}   \overline{ C_{{\mathrm{3}}} }  )   \Rightarrow   \emptyset $.
    Thus we have $G  \mid  \langmv{L}  \vdash   \overline{ C_{{\mathrm{1}}} }   \Rightarrow  \langmv{L'}$ and $G  \mid  \langmv{L}  \vdash   \overline{ C_{{\mathrm{2}}} }   \mathbin{+\mkern-10mu+}   \overline{ C_{{\mathrm{3}}} }   \Rightarrow  \langmv{L''}$ for some $\langmv{L'}$ and $\langmv{L''}$ by the induction hypothesis.
    Therefore, by \cref{lem:trans-branch}, we have $G  \mid  \langmv{L}  \vdash  \langsym{(}     \overline{ C_{{\mathrm{1}}} }    \lor   \overline{ C_{{\mathrm{2}}} }    \langsym{)}  \mathbin{+\mkern-10mu+}   \overline{ C_{{\mathrm{3}}} }   \Rightarrow  \langmv{L''}$.

    Next we suppose $G  \mid  \langmv{L}  \vdash  \langsym{(}     \overline{ C_{{\mathrm{1}}} }    \lor   \overline{ C_{{\mathrm{2}}} }    \langsym{)}  \mathbin{+\mkern-10mu+}   \overline{ C_{{\mathrm{3}}} }   \Rightarrow  \langmv{L'}$.
    By \cref{lem:trans-branch}, we have $G  \mid  \langmv{L}  \vdash   \overline{ C_{{\mathrm{1}}} }   \Rightarrow  \langmv{L''}$ and $G  \mid  \langmv{L}  \vdash   \overline{ C_{{\mathrm{2}}} }   \mathbin{+\mkern-10mu+}   \overline{ C_{{\mathrm{3}}} }   \Rightarrow  \langmv{L'}$ for some $\langmv{L''}$.
    Thus we have $G  \mid  \langmv{L}  \vdash   \mathrm{ser}(  \overline{ C_{{\mathrm{1}}} }  )   \Rightarrow  \langmv{L}$ and $G  \mid  \langmv{L}  \vdash   \mathrm{ser}(  \overline{ C_{{\mathrm{2}}} }   \mathbin{+\mkern-10mu+}   \overline{ C_{{\mathrm{3}}} }  )   \Rightarrow  \langmv{L}$ by the induction hypothesis.
    Finally we get $G  \mid  \langmv{L}  \vdash   \mathrm{ser}(  \overline{ C_{{\mathrm{1}}} }  )   \mathbin{+\mkern-10mu+}   \mathrm{ser}(  \overline{ C_{{\mathrm{3}}} }   \mathbin{+\mkern-10mu+}   \overline{ C_{{\mathrm{3}}} }  )   \Rightarrow  \langmv{L}$ by using \rn{C-Concat}.
\end{proof}

\begin{proof}[\Cref{prop:ser-length}]
  We can prove by straightforward induction on the size of of $ \mathrm{ser}( \overline{C} ) $.
\end{proof}

\subsection{Type Checking Algorithm}\label{app:algorithms}

\begin{algorithm}
  \caption{Type checking algorithm based on depth-first search.}
  \label{alg:naive-type-check}
  \begin{algorithmic}[1]
    \Require{$G$ is an architecture graph. $ \overline{ C } $ is a command sequence. $L$ is a set of free locations.}
    \Function{CheckNaive}{$\overline{C}, L$}
      \If{$\overline{C} =  \epsilon $}
        \State \Return true
      \ElsIf{$\overline{C} = \langkw{alloc} \, \langsym{(}  \langmv{l}  \langsym{)}  \mathbin{+\mkern-10mu+}   \overline{ C' } $}
        \State \Return $l \in L$ \&\& \Call{CheckNaive}{$ \overline{ C' } , L \setminus \{l\}$}
      \ElsIf{$\overline{C} = \langkw{free} \, \langsym{(}  \langmv{l}  \langsym{)}  \mathbin{+\mkern-10mu+}   \overline{ C' } $}
        \State \Return \Call{CheckNaive}{$ \overline{ C' } , L \cup \{l\}$}
      \ElsIf{$\overline{C} =  \langmv{l_{{\mathrm{1}}}}  \sim  \langmv{l_{{\mathrm{2}}}}   \mathbin{+\mkern-10mu+}   \overline{ C' } $}
        \State \Return \Call{FindPath}{$G, \langmv{l_{{\mathrm{1}}}}, \langmv{l_{{\mathrm{2}}}}, L$} \&\& \Call{CheckNaive}{$ \overline{ C' } $, $L$}
      \ElsIf{$\overline{C} = \langsym{(}    \overline{ C_{{\mathrm{1}}} }   \lor   \overline{ C_{{\mathrm{2}}} }    \langsym{)}  \mathbin{+\mkern-10mu+}   \overline{ C' } $}
        \State \Return \Call{CheckNaive}{$ \overline{ C_{{\mathrm{1}}} } , L$} $\land$ \Call{CheckNaive}{$  \overline{ C_{{\mathrm{2}}} }    \mathbin{+\mkern-10mu+}   \overline{ C' } , L$}
      \ElsIf{$\overline{C} =   \overline{ C_{{\mathrm{1}}} }  ^*   \mathbin{+\mkern-10mu+}   \overline{ C_{{\mathrm{2}}} } $}
        \State \Return \Call{CheckNaive}{$ \overline{ C_{{\mathrm{1}}} }   \mathbin{+\mkern-10mu+}   \overline{ C_{{\mathrm{2}}} } , L$}
      \EndIf
    \EndFunction
  \end{algorithmic}
\end{algorithm}

}

\end{document}